\newcommand{\Aone}{\mathcal{A}_1}
\newcommand{\Atwo}{\mathcal{A}_2} 
\newcommand{\DeltaAone}{\Delta(\Aone)}
\newcommand{\DeltaAtwo}{\Delta(\Atwo)}
\newcommand{\kp}{\mathsf P}
\newcommand{\cp}{\mathcal{P}}
\newcommand{\mcs}{\mathcal{S}}
\newcommand{\mca}{\mathcal{A}}
\newcommand{\nn}{\nonumber}
\newcommand{\mE}{\mathbb{E}}
\newcommand{\su}{\acute{s}}
\newcommand{\A}{\mathcal{A}}
\newcommand{\br}{\text{BR}}
\newcommand{\e}{\textbf{e}}
\newcommand{\cS}{\mathcal{S}}
\newcommand{\cA}{\mathcal{A}}
\newcommand{\gpiP}{g^{\pi}_P}
\newcommand{\gpiPc}{g^{\pi}_{\cP}}
\newcommand{\hpiP}{h^{\pi}_P}
\newcommand{\sigsa}[1]{\sigma_s^a(#1)}
\newcommand{\sigpisa}[1]{\sigma^{\pi}_s(#1)}
\newcommand{\spn}[1]{\mathrm{sp}\!\left(#1\right)}
\newcommand{\R}{\mathbb{R}}
\newcommand{\mcn}{\mathcal{N}}
\newcommand{\mcp}{\mathcal{P}}
\newcommand{\mcmg}{\mathcal{MG}}
\newcommand{\mct}{\mathcal{T}}
\newcommand{\cP}{\mathcal{P}}           % ambiguity set of kernels
\newcommand{\cPsa}{\mathcal{P}_s^a}     % local ambiguity at (s,a)
\newcommand{\bbR}{\mathbb{R}}
\newcommand{\Proj}{\Gamma} % projection onto simplex
\newcommand{\negpart}[1]{\left[#1\right]_-}
\theoremstyle{plain}
\newtheorem{theorem}{Theorem}[section]
\newtheorem{proposition}[theorem]{Proposition}
\newtheorem{lemma}[theorem]{Lemma}
\newtheorem{corollary}[theorem]{Corollary}
\theoremstyle{definition}
\newtheorem{definition}[theorem]{Definition}
\newtheorem{assumption}[theorem]{Assumption}
\theoremstyle{remark}
\newtheorem{remark}[theorem]{Remark}
\icmltitlerunning{Distributionally Robust Markov Games with Average Reward}
\begin{document}

\twocolumn[
  \icmltitle{Distributionally Robust Markov Games with Average Reward}

  % It is OKAY to include author information, even for blind submissions: the
  % style file will automatically remove it for you unless you've provided
  % the [accepted] option to the icml2026 package.

  % List of affiliations: The first argument should be a (short) identifier you
  % will use later to specify author affiliations Academic affiliations
  % should list Department, University, City, Region, Country Industry
  % affiliations should list Company, City, Region, Country

  % You can specify symbols, otherwise they are numbered in order. Ideally, you
  % should not use this facility. Affiliations will be numbered in order of
  % appearance and this is the preferred way.
  \icmlsetsymbol{equal}{*}

  \begin{icmlauthorlist}
    \icmlauthor{Zachary Roch}{ce}
    \icmlauthor{Yue Wang}{ce,cs}
  \end{icmlauthorlist}

  \icmlaffiliation{ce}{Department of Electrical and Computer Engineering, University of Central Florida, Orlando, Florida, USA.}
  \icmlaffiliation{cs}{Department of Computer Science, University of Central Florida, Orlando, Florida, USA}

  \icmlcorrespondingauthor{Zachary Roch}{zachary.roch@ucf.edu}
  \icmlcorrespondingauthor{Yue Wang}{yue.wang@ucf.edu}

  % You may provide any keywords that you find helpful for describing your
  % paper; these are used to populate the "keywords" metadata in the PDF but
  % will not be shown in the document
  \icmlkeywords{Machine Learning, ICML}

  \vskip 0.3in
]

% this must go after the closing bracket ] following \twocolumn[ ...

% This command actually creates the footnote in the first column listing the
% affiliations and the copyright notice. The command takes one argument, which
% is text to display at the start of the footnote. The \icmlEqualContribution
% command is standard text for equal contribution. Remove it (just {}) if you
% do not need this facility.

% Use ONE of the following lines. DO NOT remove the command.
% If you have no special notice, KEEP empty braces:
\printAffiliationsAndNotice{}  % no special notice (required even if empty)
% Or, if applicable, use the standard equal contribution text:
% \printAffiliationsAndNotice{\icmlEqualContribution}
\begin{abstract}
We propose and study distributionally robust Markov games (DR‑MGs) with the average‑reward criterion as a crucial framework for multi-agent decision-making under model mismatches and over extended horizons. Under a standard irreducible assumption, we first derive a correspondence between the optimal policies and the solutions of the robust Bellman equation, based on which we further show the existence of a stationary Nash Equilibrium (NE) of the game. We further study DR-MGs under a more general weakly communicating setting. We construct a set-valued map based on the constant-gain optimal robust Bellman operator and show that its value is a subset of the best-response policies. We further prove that this map admits a fixed point, which implies the existence of NE. We then design two algorithms, Robust Nash‑Iteration and robust TD Descent, with provably convergent guarantees. Finally, we show that the NE under average‑reward can be approximated by the ones for the discounted DR-MGs as the discount factor approaches one. Our studies provide a comprehensive theoretical and algorithmic foundation for decision-making in complex, uncertain, and long-running multi-player environments.
\end{abstract}

\section{Introduction}
A Markov Game (MG) or stochastic game provides a powerful mathematical framework for modeling sequential decision-making in competitive multi-agent environments \citep{shapley1953stochastic,filar2012competitive,littman1994markov,fink1964equilibrium}. Solving an MG involves finding an equilibrium policy profile for all players, where each player aims to maximize their own cumulative reward. However, a critical challenge in practice is the potential for a mismatch between the assumed MG model and the true underlying environment. This discrepancy, commonly called the Sim-to-Real gap \citep{mcmahan2024roping}, can arise from various sources, including environmental non-stationarity, inherent modeling errors, exogenous disturbances, or even adversarial attacks \citep{bukharin2023robust,ma2023decentralized}. Consequently, policies derived from a misspecified model can exhibit significantly degraded performance when deployed.

To address this vulnerability, the framework of distributionally robust Markov Games (DR-MGs) has been developed \citep{zhang2020robust,kardecs2011discounted}, extending the principles of distributionally robust Markov Decision Processes (MDPs) \citep{bagnell2001solving,nilim2004robustness,iyengar2005robust} to the multi-agent setting. Instead of relying on a single, fixed MG model, the robust approach seeks to find an equilibrium that optimizes the \textit{worst-case} performance over a predefined uncertainty set of plausible game models under potential model mismatches. The resulting robust equilibrium provides a performance guarantee across all models within this set, thereby ensuring each agent's resilience against model mismatch.

The nature of a DR-MG problem is fundamentally shaped by the optimality criterion used. Much of the prior work has concentrated on the finite-horizon \citep{ma2023decentralized,shi2024breaking,shi2024sample,jiao_minimax-optimal_2024,li2025sample,blanchet2023double} or discounted-reward setting \citep{zhang2020robust,kardecs2011discounted}, where the objective is to maximize the expected reward over a finite number of steps or a discounted sum of rewards. 

In these formulations, the influence of future rewards diminishes exponentially, which often simplifies analysis as the results are less dependent on the long-term chain structure of the underlying game. However, despite the analytical elegance of these settings, policies learned under them can be suboptimal for multi-agent systems designed to operate over extended or indefinite time horizons, such as in warehouse robotics \citep{krnjaic2024scalable}, communication networks \citep{qu2020scalable},
autonomous vehicle coordination \citep{peake2020multi}, financial markets \citep{vadori2024towards}, or even peer-to-peer energy trading \citep{qiu2021multi}. 
In these scenarios, the long-term average reward is a more natural and meaningful performance measure. However, the study of DR-MGs under the average-reward criterion is nascent and fraught with complexities not present in the discounted or finite-horizon cases. The analysis of average-reward games hinges on the limiting behavior of the underlying stochastic process, making it markedly more intricate \citep{sobel1971noncooperative}. For instance, a general non-robust vanilla Markov game with average reward may not have a stationary Nash Equilibrium \citep{filar2012competitive}; the direct correspondence between stationary policies and the limit points of state-action frequencies, a cornerstone of discounted analysis, breaks down in the average-reward setting even for single-agent MDPs, except under restrictive structural assumptions \citep{puterman2014markov,atia2021steady}. This difficulty is compounded in the multi-agent context, where strategic interactions are intertwined with the uncertain, complex dynamics.

Moreover, none of the methods for DR-MGs under the discounted reward or finite horizon can be extended to the average-reward setting (also see our detailed discussion in Section \ref{sec:hardness}). The existence of (non-stationary) Nash Equilibrium under finite horizon is derived through backward induction \citep{blanchet2023double}, which does not apply to infinite-horizon settings. The existence of a stationary Nash Equilibrium under the discounted reward criterion is derived in \citep{kardecs2011discounted}, which, however, heavily relies on the uniqueness of the solution to the discounted robust Bellman equation \citep{iyengar2005robust} and hence cannot be applied to the average reward setting. To date, the literature on average-reward DR-MGs remains sparse, leaving fundamental questions about the structure of robust equilibria and the design of convergent solution algorithms largely unanswered. 

% Another natural attempt is to use the discounted reward as an intermediate step to study the robust Nash Equilibrium under the average reward by setting the discount factor approaching to $1$. Although this limit method is effective in the single-agent setting \citep{wang2023robust}, it cannot solve the DR-MGs, due to the complicated structure (also see our detailed discussion in Section \ref{sec:hardness}). To date, the literature on average-reward DR-MGs remains sparse, leaving fundamental questions about the structure of robust equilibria and the design of convergent solution algorithms largely unanswered. 

In this paper, we develop a comprehensive study of DR-MGs under the average-reward formulation. Our contributions are summarized as follows.

% \noindent \textbf{Solvability of the Robust Bellman Equation under Irreducible Setting:} As a foundational step, we first revisit the single-agent distributionally robust average-reward MDPs with the irreducible assumption. Despite existing studies largely utilizing the robust Bellman equation \citep{wang2023robust,wang2023model,wang2024robust,xu2025finite,xu2025efficient,roch2025finite,Roch2025reduction,chen2025sample}, its solvability is not proved, but rather implicitly assumed. We establish the well-posedness of the robust average-reward Bellman equation, study the structure of its solutions, and connect them with the optimal robust policies. This analysis provides structural insights and tools that are essential for solving multi-agent Markov games when long-term performance is of interest.

\noindent \textbf{Existence of Robust Nash Equilibrium:} We first study the solvability of average reward DR-MGs, i.e., the existence of robust NE. We first note that, under the irreducible setting, finding the best response policy of an agent is equivalent to finding the solution to the robust Bellman equation under average reward. Based on such an equivalence, we show that its solution set is convex and semi-continuous, which further implies the fixed point of the best-response mapping and the existence of a stationary Nash Equilibrium. We further study the more general weakly communicating DR-MGs, where such correspondence fails. We instead construct a proxy mapping based on the Bellman equation and show that it is a subset of the best-response mapping. After characterizing the convexity and continuity of the mapping, we prove the existence of NE. These results address the fundamental solvability question under these complex, robust, and long-term horizon competitive scenarios.

\noindent \textbf{Convergent Algorithms:} We then design concrete algorithms to find the robust NE in average reward DR-MGs. Inspired by the standard Nash value iteration approach, we first propose a robust Nash iteration algorithm. Under some additional assumptions on the game structure and a norm-form game NE computing oracle, we prove the convergence to a robust NE of our algorithm. Moreover, to bypass these assumptions and the computation oracle, we further develop an equivalent characterization of the NE in terms of temporal difference error, whose minimizer is shown to be a robust NE. We then design a two-time-scale algorithm to minimize the TD error. The algorithm does not require any NE oracle, and can be executed efficiently. We further develop its stationary convergence guarantee. Our studies hence provide the first comprehensive and algorithmic understanding of solving average-reward DR-MGs.  

\noindent \textbf{Connection to Discounted Equilibria:}  We further study the connection between the DR-MGs under average and discounted reward. We show that the robust NE under the average reward can be effectively approximated by the robust NE under the discounted reward with a discount factor large enough. Such a connection enables us to solve the complicated and challenging average reward DR-MGs through the discounted ones, utilizing their element mathematical properties and extensively developed algorithms.

%We also aim to bridge the gap between DR-MGs under the average reward and the well-studied discounted-reward settings. We demonstrate the closeness between the robust average reward NE and robust discounted NEs as the discount factor approaches one. This connection not only provides deeper insight into the nature of the average reward equilibrium, but also suggests potential approximation methods based on the discounted framework.

\section{Preliminaries and Problem Formulation}
\textbf{Markov Decision Processes.}
A Markov decision process (MDP)  $(\mathcal{S},\mathcal{A}, \mathsf P, r )$ is specified by: a state space $\mcs$, an action space $\mca$, a transition kernel $\mathsf P=\left\{p^a_s \in \Delta(\mcs), a\in\mca, s\in\mcs\right\}$\footnote{$\Delta(\mcs)$: the $(|\mcs|-1)$-dimensional probability simplex on $\mcs$. }, where $p^a_s$ is the distribution of the next state over $\mcs$ upon taking action $a$ in state $s$ (with $p^a_{s,s'}$ denoting the probability of transitioning to $s'$), and a reward function $r: \mcs\times\mca \to [0,1]$. At each time step $t$, the agent at state $s_t$ takes an action $a_t$, at which point the environment transitions to the next state $s_{t+1}$ according to $p^{a_t}_{s_t}$, and produces a reward signal $r(s_t,a_t)\in [0,1]$ to the agent. We write $r_t=r(s_t,a_t)$ for convenience.

A stationary policy $\pi: \mcs\to \Delta(\mca)$ is a distribution over $\mca$ for any given state $s$, meaning the agent takes action $a$ at state $s$ with probability $\pi(a|s)$. The goal of an MDP is to maximize the cumulative reward. 
The more extensively studied criterion is the discounted reward, whose objective function is the discounted value function: $V^\pi_{\kp}(s)\triangleq\mathbb{E}_{\pi, \kp}\left[\sum_{t=0}^{\infty}\gamma^t   r_t|S_0=s\right]$, where $\gamma\in[0,1)$ is some discount factor. 
When long-term performance is of interest, the average reward is generally adopted as the criterion of choice. The average reward induced by following policy $\pi$ starting from state $s\in\mcs$ is defined as
   $ g_\kp^\pi(s)\triangleq \liminf_{n\to\infty} \mathbb{E}_{\pi,\kp}\big[\frac{1}{n}\sum^{n-1}_{t=0} r_t|S_0=s \big].$

\textbf{Markov Games.}
Markov games (MGs) are the extension of MDPs under the multi-agent setting \citep{littman1994markov}, specified by 
  $(\mcn,\mcs,\mca=\times_{i\in\mcn}\mca_i,\{r_i\}_{i\in\mcn},\kp)$,
where $\mcn=\{1,2,\dots,N\}$ is the set of all agents or players, $\mcs$ is the state space that all agents operate within, $\mca_i$ denotes agent $i$'s action space. In an MG, both rewards and state transitions are determined by the joint actions of all agents. At some state $s$, all agents act simultaneously %, resulting in a 
by taking a joint action $a=(a_1,\dots,a_{N})\in\mca$. Agent $i$ receives a reward of $r_i(s,a)$ for all $i\in\mcn$, and the next state is generated following the transition kernel $\kp^a_s$.  

In MGs, each agent $i$ has its own policy $\pi_i$, resulting in a joint (product) policy $\pi=(\pi_1,...,\pi_N)$. Each agent shares different goals as they have different reward functions to optimize their own value function/average reward based on their own reward and the joint policy: $V^\pi_{\kp,i}$ or $g^\pi_{\kp,i}$. Due to possible conflicting goals between agents, an MG aims to find some equilibria among all agents. In this work, we focus our study on the stationary Nash Equilibrium (NE). More specifically, a stationary NE is a product policy $\pi^\star=(\pi^\star_1,...,\pi^\star_N)$, such that no single agent can improve its objective function by deviating from it, while the others stick to $\pi^\star$: 
\begin{align}
    V^{\pi^\star}_{\kp,i}(\rho)\geq V^{(\pi^\star_{-i},\pi_i)}_{\kp,i}(\rho), \forall i\in\mcn, \forall \pi_i\in\Pi_i,
\end{align}
for some initial distribution $\rho$ and $ V^{\pi}_{\kp,i}(\rho)\triangleq \mE_{s\sim\rho}[V^{\pi}_{\kp,i}(s)]$, and $\Pi_i$ denotes the set of stationary policies for agent $i$, $\pi^\star_{-i}$ denotes the joint policy of \textit{all} other agents \textit{except} $i$ from $\pi^\star$, and $(\pi^\star_{-i},\pi_i)$ is the joint policy. The existence of a stationary NE is derived for MGs with discounted reward \citep{fink1964equilibrium} and average-reward MGs with additional structural assumptions \citep{filar2012competitive}.

\textbf{Distributionally Robust Markov Games.}
Different from MGs, distributionally robust Markov games (DR-MGs) do not have a fixed transition kernel, but rather an uncertainty set  $\cp$ of transition kernels. We consider the standard $(s,a)$-rectangular uncertainty set $\cp=\times_{(s,a)\in\mcs\times\mca} \cp^a_s$ \citep{iyengar2005robust,nilim2004robustness,zhang2020robust,blanchet2023double,shi2024sample}, where the uncertainty set is independently defined over all $(s,a)$ pairs. 

After all agents take their joint action $a$ at state $s$, the next state is determined by an arbitrary kernel $\kp^a_s\in\cp^a_s$. In DR-MGs, agents adopt the principle of pessimism and optimize for the worst-case performance, which is defined as the robust value function $V^\pi_{\cp,i}(\rho)\triangleq\min_{\kp\in\cp} V^\pi_{\kp,i}(\rho)$, and robust average reward $g^\pi_{\cp,i}(\rho)\triangleq\min_{\kp\in\cp} g^\pi_{\kp,i}(\rho)$. The notion of Nash Equilibrium is similarly defined in terms of worst-case performance.
In this paper, we focus on DR-MGs with average reward whose Nash Equilibrium is some product policy $\pi^\star$ such that for some initial state distribution $\rho$, 
\begin{align}
    g^{\pi^\star}_{\cp,i}(\rho)\geq g^{(\pi^\star_{-i},\pi_i)}_{\cp,i}(\rho), \forall i\in\mcn, \forall \pi_i. 
\end{align}

\section{Hardness in Average-Reward DR-MGs}
\label{sec:hardness}
DR-MGs under finite-horizon or discounted reward are extensively studied recently \cite{zhang2020robust,jiao_minimax-optimal_2024,blanchet2023double,shi2024breaking,ma2023decentralized,kardecs2011discounted,mcmahan2024roping}, yet extending them to the average-reward setting is non-trivial. At a high level, two issues make the average-reward regime qualitatively more challenging: (1).  the presence of a \textit{player-specific environment} that selects worst-case transition kernels, which breaks the usual min--max duality even in discounted zero-sum games; and (2). The intricate structure of (robust) average reward, which depends sensitively on the chain decomposition and is far less tractable.

\textbf{Player-specific environment and failure of min--max duality.}
In a standard (non-robust) Markov game with a fixed transition kernel $P$, the existence of equilibria in multi-player general-sum MGs was established by \citep{fink1964equilibrium}. In two-player zero-sum games with discounted reward, the analysis is built on a clean saddle-point structure. Let $r_1 = -r_2 = r$ and denote by $V_{P,r}^{(\pi_1,\pi_2)}(\rho)$ the $\gamma$-discounted value of player~1 under joint policy $(\pi_1,\pi_2)$ and initial distribution $\rho$. Then the classical result of \cite{shapley1953stochastic} shows that
\begin{equation}
    \max_{\pi_1} \min_{\pi_2} V_{P,r}^{(\pi_1,\pi_2)}(\rho)
    \;=\;
    \min_{\pi_2} \max_{\pi_1} V_{P,r}^{(\pi_1,\pi_2)}(\rho),
    \label{eq:minmax-nonrobust}
\end{equation}
and any maximizer/minimizer pair in~\eqref{eq:minmax-nonrobust} is a stationary discounted NE. Similar arguments, combined with the limit $\gamma \to 1$, underlie many existence results for non-robust average-reward Markov games; see, e.g., \cite{filar2012competitive,sennott1994zero,rieder1995average}.

%We discuss these challenges in turn and then present an explicit impossibility result showing that, without additional structural assumptions, even very simple DR-MGs may fail to admit stationary robust NE.

In the distributionally robust setting, the situation changes qualitatively. The model uncertainty is encoded by an ambiguity set $\mathcal{P}$ of transition kernels. Conceptually, this introduces a "player-specific environment" (rather than a monolithic adversarial third player) that aims to minimize the specific agent's payoff, potentially propelling the overall system into different recurrent classes depending on the agents' joint policy. Crucially, because different agents have different reward functions, in general their associated worst-case kernels need not coincide. Consequently, even in a two-player zero-sum game with strictly opposing rewards ($r_1=-r_2=r$), the robust value functions are not strictly opposing ($V_{\mathcal{P},r_1}\neq -V_{\mathcal{P},r_2}$), forcing this to be treated as a general-sum game.

See that the robust counterpart of~\eqref{eq:minmax-nonrobust} splits into two distinct problems. From the perspective of player~1 (maximizer), the worst-case is governed by $\max_{\pi_1} \min_{\pi_2} \min_{P \in \mathcal{P}}
    V_{P,r}^{(\pi_1,\pi_2)}(\rho),$ 
while from the perspective of player~2 (minimizer), whose reward is $-r$, the worst case is governed by $ \min_{\pi_2} \max_{\pi_1} \max_{P \in \mathcal{P}}
    V_{P,r}^{(\pi_1,\pi_2)}(\rho).$ 
In general, these two values need not coincide:
\[
    \max_{\pi_1} \min_{\pi_2} \min_{P \in \mathcal{P}}
    V_{P,r}^{(\pi_1,\pi_2)}(\rho)
    \;\neq\;
    \min_{\pi_2} \max_{\pi_1} \max_{P \in \mathcal{P}}
    V_{P,r}^{(\pi_1,\pi_2)}(\rho),
\]
meaning that in general-sum games, the min--max duality which guarantees a saddle-point in non-robust studies breaks down for DR-MGs. In fact, there is no saddle-point structure to exploit for discounted and zero-sum settings, and this problem is only exacerbated in general-sum or multi-player settings. Therefore, establishing the existence of an equilibrium is significantly harder as neither Shapley's fixed-point argument or standard discounted saddle-point theory can be directly reused to establish NE existence in robust games.

One might hope to recover average-reward NE by letting the discount factor $\gamma \to 1$ in robust discounted games, using uniform convergence of robust value functions to interchange the order of $\max$, $\min$, and $\lim$ \cite{wang2023robust}.
However, the failure of min--max duality already at the discounted level implies that this limit approach still does not yield NE existence for average-reward DR-MGs, and it does not apply at all to multi-player general-sum games.

\textbf{Complicated structure of robust average reward and an impossibility example.}
The second difficulty comes from the structure of the (robust) average reward itself.
Even without robustness, Markov games under the average-reward criterion need not admit a stationary NE in general \cite{filar2012competitive,wikecek2015stationary,lozovanu2016stationary,thuijsman1997survey}. The usual link between stationary policies and stationary/occupancy distributions breaks down except under additional structural assumptions such as irreducibility or unichain conditions \cite{puterman2014markov,atia2021steady}.
This makes it substantially harder to reason about long-run behavior and to construct equilibria.

Robustness further complicates the picture. The robust average reward must account for the worst-case choice of transition kernels; the resulting robust Bellman equations are nonlinear, may admit multiple solutions, and their solvability and uniqueness are largely unresolved in prior work \cite{wang2023robust,wang2023model,wang2025bellman}.
In particular, existing algorithms for robust average-reward reinforcement learning either rely on incorrect Bellman solvability claims or bypass the average-reward Bellman equation altogether via discounted reductions.
As a consequence, neither the techniques for discounted DR-MGs \cite{kardecs2011discounted,zhang2020robust}  nor those for non-robust average-reward MGs \cite{guo2008new,hernandez2000zero,guo2003zero,zheng2024zero,nowak1999optimal,iwase1976markov,kuenle2003optimality,tanaka1977continuous,jaskiewicz2001optimality,wikecek2015stationary} can be directly transplanted to distributionally robust average-reward games.

To illustrate that these issues are not merely technical, we show that, without any structural assumptions, average-reward DR-MGs may fail to admit \emph{any} stationary NE.

\begin{lemma}
There exists a finite-state, two-player zero-sum DR-MG under the average-reward criterion that admits no stationary robust Nash equilibrium.
\label[lemma]{lem:example}
\end{lemma}
The construction uses a simple three-state, two-action zero-sum DR-MG.
The underlying Markov chains induced by the ambiguity set are multi-chain, where the adversary can discontinuously alter the chain structure to force the agent into different recurrent classes based on changes in its policy. 

% For this game, we explicitly compute the robust average reward for both players and show that the resulting single-stage robust payoff function fails to admit a saddle point, which rules out the existence of any stationary robust NE.

Lemma~\ref{lem:example} demonstrates that average reward DR-MGs are ill-posed without some structural assumptions. In the remainder of the paper, we therefore focus on two standard classes of average-reward problems where such a structure is present:
in Section \ref{sec:irr} we study \emph{irreducible} DR-MGs, where every induced Markov chain under any stationary joint policy and any admissible kernel belongs to a single recurrent class; and in Section~\ref{sec:WC} we relax this to \emph{weakly communicating} DR-MGs, which allow multiple recurrent classes but still admit a well-defined optimal robust average reward.

\section{NE of Irreducible DR-MGs}\label{sec:irr}
As discussed above, the existence of a NE cannot be guaranteed without any structural assumptions. In this section, we first study a relatively easier setting, where the underlying DR-MG is assumed to be irreducible.

\begin{assumption}\label[assumption]{ass:irr}
For any deterministic joint policy $\pi$ and any transition kernel $\kp\in\cp$, the induced Markov chain $(\mcs,\mca,\kp^\pi)$ is irreducible, i.e., all states belong to a single recurrent class. Moreover, $\cp$ is compact and convex.
\end{assumption}
\begin{remark}
Irreducibility is a standard assumption (slightly stronger than the unichain assumption \citep{wang2023robust,wang2023model,wang2024robust,roch2025finite,Roch2025reduction}) in robust average reward studies \citep{xu2025finite,xu2025efficient}. Under this assumption, the robust average reward $g^\pi_\cp(\rho)=g^\pi_\cp(s)$ is always a constant over all initial states, ensuring the tractability of robust average reward. Hence in this section, we omit the initial state and interchangeably use $g^\pi_\cp$ $g^\pi_\cp(\rho)$, and $g^\pi_\cp(s)$. 
\end{remark}

% As discussed in Section \ref{sec:hardness}, neither the limit method nor the Min-Max duality is applicable in average-reward DR-MGs. 

In non-robust average reward MG studies, the NE can be proved by studying the induced stationary/occupancy distribution \citep{wikecek2015stationary}. However, when considering robustness, the worst-case stationary distribution set becomes much more complicated. For example, it is non-convex \citep{wang2022robust,zhang2024distributionally,ma2025rectified}, whereas the convexity is necessary in non-robust proofs and hence cannot be adapted. We thus propose a direct approach by studying the best response mapping. 

\begin{definition}[Best Response mapping] In an average-reward DR-MG, fix agent $i$ and the joint policy $\pi_{-i}$ of all other agents. The best response mapping of agent $i$ w.r.t. $\pi_{-i}$ is a set-valued mapping defined as
\begin{align}
    \text{BR}_{i}(\pi_{-i})\triangleq \left\{\pi_i: g^{(\pi_{-i},\pi_i)}_{\cp,i}=\max_{\mu_i}g^{(\pi_{-i},\mu_i)}_{\cp,i} \right\}. 
\end{align}
\end{definition}
That is, the best response policy is agent $i$'s optimal policy when other agents fix their policies. A standard result in game theory literature \citep{osborne2004introduction,fudenberg1991game} is that,  $\pi^\star$ is a Nash Equilibrium if and only if $\pi^\star_i \in \text{BR}_{i}(\pi^\star_{-i})$, $\forall i\in\mcn$, i.e.,  $\pi^\star$ is a fixed point of the $ \text{BR}$ mapping: $\pi^\star\in \text{BR}(\pi^\star)=(\times_i \text{BR}_i(\pi_{-i}))$. Hence, proving the existence of NE is equivalent to showing the BR mapping has a fixed point, which is generally obtained through the Kakutani fixed point theorem (\Cref{lem:kakutani}). 
% These two results imply that it is sufficient to show that the best response policy mapping, $\text{BR}$, satisfies the conditions in Lemma \ref{lem:kakutani}. 
Our studies will similarly follow this direction. We first show that it is sufficient to reduce to single-agent cases.
\begin{lemma}
    If for any $i\in\mcn$ and any product policy $\pi_{-i}$, $\text{BR}_i(\pi_{-i})$ is nonempty, convex, and $\text{BR}_i(\cdot)$ is upper semi-continuous (see \Cref{lem:kakutani} for definitions), then the $\text{BR}$ mapping has a fixed point, and there exists a robust NE.
\end{lemma}
%It is hence sufficient to fix an agent $i$ and study the best response policy set $\br_i(\cdot)$, which is a single-agent problem. 

\subsection{Induced Distributionally Robust MDPs}
We then develop a framework of induced distributionally robust MDPs to provide a framework for multi-agent studies. 
\begin{definition}[Induced Distributionally Robust MDP]
    Given a distributionally robust Markov game $\mcmg=(\mcn,\mcs,\mca,\cp,r)$ and an agent $i\in\mcn$. The induced distributionally robust MDP of the $\mcmg$ by a joint policy $\pi_{-i}$ is defined as $\mathsf{M}_i(\pi_{-i})\triangleq (\mcs,\mca_i,\cp^{\pi_{-i}},r^{\pi_{-i}})$, where 
    \begin{align}
        &(\cp^{\pi_{-i}})^{a_i}_s\triangleq \Big\{\sum_{a_{-i}\in\mca_{-i}}\pi_{-i}(a_{-i}|s) \kp\Big|\kp \in \cp^{(a_i,a_{-i})}_s \Big\},\nn\\
        &r^{\pi_{-i}}(s,a_i)\triangleq\sum_{a_{-i}\in\mca_{-i}}\pi_{-i}(a_{-i}|s)r_i(s,(a_i,a_{-i})). 
    \end{align}
\end{definition}
Clearly, $\mathsf{M}_i(\pi_{-i})$ is a single-agent distributionally robust MDP for agent $i$. Due to rectangularity of $\cp$, $\cp^{\pi_{-i}}$ is $(s,a_i)$-rectangular, and $\sum_{a_{-i}}\pi_{-i}(a_{-i}|s)\sigma_{\cp_s^{(a_i,a_{-i})}}(V)=\sigma_{(\cp^{ \pi_{-i}})_s^{a_i}}(V)$, where $\sigma_{\cp^a_s}(V)\triangleq \min_{P\in\cp^a_s}PV$ is the support function for a specific state-action pair. Moreover, per Assumption \ref{ass:irr}, $\mathsf{M}_i(\pi_{-i})$ is also compact and irreducible, which reduces the DR-MG to a single-agent problem. 

% Namely, if the DR-MG $\mcmg$ satisfies the multi-agent extension of Assumption \ref{ass:irr}, then the induced distributionally robust MDP $\mathsf{M}_i(\pi_{-i})$ satisfies Assumption \ref{ass:irr}. Thus all previous results for the single-agent setting are still valid under the induced distributionally robust MDP for a fixed policy $\pi_{-i}$, based on which we can show the existence of a NE.

However, even the single-agent average reward robust RL is not fully studied. Specifically, the most essential tool, its robust Bellman equation, is not well studied: its solvability or characterizations of solutions are not yet justified (an incorrect proof is given in \citep{wang2023robust} for the single-agent setting), therefore making the connection between optimal policies and the robust Bellman equation unclear. Although a huge body of existing work develops convergent algorithms for single-agent robust average-reward MDPs, their convergence guarantees either rely on the incorrect result in \citep{wang2023robust}, e.g., \citep{wang2023model,Roch2025reduction}, or bypass its solvability through discounted proxy \citep{xu2025finite,xu2025efficient,chatterjee2023solving,meggendorfer2025solving}. Moreover, there are no prior studies on the uniqueness of these solutions. We hence derive the following foundational results under Assumption \ref{ass:irr} (proofs are deferred to \Cref{app-sec:irr}).

\begin{theorem}\label{thm:solvability}
Under Assumption \ref{ass:irr}. 
(1).  Denote any worst-case transition kernel w.r.t. $g^\pi_\kp$ by $\kp_w\triangleq\arg\min_{\kp\in\cp} g^\pi_\kp$.
Then the gain/bias of $(\pi,\kp_w):$ $(g^\pi_{\kp_w}, h^\pi_{\kp_w})$ is the unique solution to the equation: \begin{align}\label{eq:Bellman}
        V(s)=\sum_a \pi(a|s)(r(s,a)-g+\sigma_{\cp^a_s}(V)).
    \end{align}
The uniqueness is in the sense that, for any solution $(g,V)$ to \eqref{eq:Bellman}, it holds that $g=g^\pi_\cp, V=h^\pi_{\kp_w}+c\e$, for some constant $c$ and $\e=(1,...,1).$\footnote{Even if there exist multiple worst-case kernels, their gain/bias are all identical up to $c\e$. We denote this unique solution by $h^\pi_\cp$.}

(2).   The robust Bellman optimality equation 
  \begin{align}\label{eq:opt Bellman}
        V(s)=\max_a\{r(s,a)-g+\sigma_{\cp^a_s}(V)\} 
    \end{align}
 has a unique solution (up to some vector $c\e$).  
\end{theorem}
Our results therefore prove the solvability and uniqueness of the robust Bellman equation, especially \eqref{eq:Bellman}, solidifying the connection between the robust average reward and its solutions. It allows us to evaluate any policy by solving a (nonlinear) Bellman equation, while simultaneously certifying the existence of a worst-case kernel realizing the robust average reward. Then, the robust Bellman optimality equation provides a complete characterization of optimality in irreducible DR-MDPs.

We further highlight that our results are fundamental to characterizing the complicated structure of the average reward. For instance, it is discussed in \cite{wang2023model} that \eqref{eq:Bellman} may have non-unique solutions even up to constant vectors, which we rule out under Assumption \ref{ass:irr}. Our proof techniques are also different from the non-robust ones, which rely on the Laurent series of the non-robust average reward \citep{puterman2014markov}, whereas we directly show that the constructed pair $(g^\pi_{\kp_w}, h^\pi_{\kp_w})$ is a solution.

% These single-agent results are used repeatedly throughout the paper.
% In Section~4.2, every best-response set $\mathrm{BR}_i(\pi_{-i})$ is identified with the set of optimal robust policies of the induced DR-MDP $\mathsf{M}_i(\pi_{-i})$, and the convexity and semi-continuity properties in Theorems~2 and~3 are derived by applying Theorem~1 to these induced problems.
% In the weakly communicating setting of Section~5, Lemma~3 is explicitly compared to Theorem~1 to explain why the irreducible proof strategy cannot be reused verbatim.
% Finally, in Section~6, each iteration of our algorithms calls the robust Bellman equations of the induced DR-MDPs to obtain $(g^\pi_k,V^\pi_k)$; the TD errors and TD gap are then defined precisely in terms of these Bellman-consistent gain/bias pairs.
% In this sense, the robust Bellman theory developed for single-agent DR-MDPs is the central technical ingredient that makes both our equilibrium existence results and our algorithms possible.

\subsection{Existence of a Nash Equilibrium}
With the results derived for single-agent robust Bellman equations, we can then show the following results. 
\begin{theorem}
    (1). For any distributionally robust MDP under Assumption \ref{ass:irr}, the optimal policy set $\arg\max_\pi g^\pi_{\cp}$ is convex. 
(2). For any convergent policy sequences $\{\pi^n_{-i}\}\to \pi_{-i}$ and $\{\mu^n_i\}\to \mu_i$, if $\mu^n_i$ is an optimal robust policy of the induced distributionally robust MDP  $(\mcs,\mca_i,\cp^{\pi^n_{-i}},r^{\pi^n_{-i}})$ for agent $i\in\mcn$, then $\mu_i$ is also optimal for  $(\mcs,\mca_i,\cp^{\pi_{-i}},r^{\pi_{-i}})$. 
\end{theorem}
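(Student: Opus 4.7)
The plan is to derive joint continuity of the robust average reward in the joint policy, and then use this together with continuity of the optimal value function to pass the optimality of $\mu^n_i$ through the limit. Since the robust average reward of $\mu_i$ in the induced MDP $\mathcal{M}(\mcmg,i,\pi_{-i})$ coincides with the robust average reward $g^{(\pi_{-i},\mu_i)}_{\cp,i}$ in the original game (every $\tilde{\kp}\in \cp^{\pi_{-i}}$ arises from averaging some $\kp\in\cp$ against $\pi_{-i}$, and the resulting dynamics on $(s,a_i)$ match exactly those of the joint policy in the original game), I would work directly in the original game with the fixed $(s,a)$-rectangular set $\cp$.

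The first key step is joint continuity of the non-robust average reward $g^\pi_\kp$ in $(\pi,\kp)$. Under Assumption \ref{ass}, the chain $\kp^\pi$ is irreducible and has a unique stationary distribution $\mu_{\pi,\kp}$ characterized by $\mu_{\pi,\kp}^\top(I-\kp^\pi)=0$ and $\mu_{\pi,\kp}^\top\mathbf{1}=1$. The map $(\pi,\kp)\mapsto \kp^\pi$ is continuous, and irreducibility keeps $I-\kp^\pi$ at constant rank $|\mcs|-1$, so a standard perturbation argument (equivalently, Cramer's rule on the normalized linear system) yields continuity of $\mu_{\pi,\kp}$ and hence of $g^\pi_\kp=\sum_{s,a}\mu_{\pi,\kp}(s)\pi(a|s)r(s,a)$. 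Lifting to the robust setting, since $\cp$ is compact and $g^\pi_\kp$ is jointly continuous, Berge's maximum theorem gives continuity of $g^\pi_\cp=\min_{\kp\in\cp}g^\pi_\kp$ in $\pi$. Fixing agent $i$, a second application of Berge's theorem (with the fixed compact constraint set $\Pi_i$) yields continuity of $g^*_i(\pi_{-i})\triangleq\max_{\mu_i\in\Pi_i}g^{(\pi_{-i},\mu_i)}_{\cp,i}$ in $\pi_{-i}$.

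The conclusion is then immediate: for each $n$, optimality of $\mu^n_i$ gives $g^{(\pi^n_{-i},\mu^n_i)}_{\cp,i}=g^*_i(\pi^n_{-i})$; passing $n\to\infty$, the left side converges to $g^{(\pi_{-i},\mu_i)}_{\cp,i}$ by continuity of the robust reward and the right side to $g^*_i(\pi_{-i})$ by continuity of $g^*_i$, so $\mu_i$ attains the optimum and lies in $\br_i(\pi_{-i})$. The main obstacle I anticipate is rigorously establishing \emph{joint} continuity of $g^\pi_\kp$: separate continuity in $\kp$ for fixed $\pi$ is classical, but the joint version requires uniform control of the stationary distribution as $\pi$ and $\kp$ vary together. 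The saving grace is that Assumption \ref{ass} is imposed on every $\kp\in\cp$ and every deterministic $\pi$, which, combined with compactness of the policy/kernel spaces and convex combinations preserving irreducibility in the unichain sense, supplies the uniform spectral gap needed to make the perturbation estimates go through.
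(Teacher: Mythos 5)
Your proof is correct, but it takes a genuinely different route from the paper's. The paper certifies optimality of the limit policy through the optimal robust Bellman equation: it takes the solutions $(h_n, g_n)$ of the optimal equation for the induced MDPs, uses the uniform bound on the bias to extract a convergent subsequence $h_{n_m}\to h$, proves uniform convergence of the induced robust Bellman operator $\mct^{\pi_n}_{\cp}(h_n)\to\mct^{\pi}_{\cp}(h)$, and passes to the limit in the equation so that $g^{(x,\pi)}_{\cp}$ satisfies the optimal equation with certificate $h$. You instead bypass the Bellman machinery entirely and argue at the level of value functions: joint continuity of $(\pi,\kp)\mapsto g^{\pi}_{\kp}$ via the stationary distribution under irreducibility, then two applications of Berge's maximum theorem (over the compact $\cp$ and over the compact policy simplex) to get continuity of $g^{\pi}_{\cp}$ and of the optimal value $g^*_i(\pi_{-i})$, and finally a limit of the identity $g^{(\pi^n_{-i},\mu^n_i)}_{\cp,i}=g^*_i(\pi^n_{-i})$. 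The key ingredient you rely on — continuity of the stationary distribution, hence of $g^\pi_\kp$, jointly in the policy and kernel — is essentially the paper's Lemma on $\kp^*_{\pi_n}\to\kp^*_\pi$ (Puterman, Prop.\ 8.4.6), which the paper also needs to handle the left-hand side of its limiting equation; and both arguments need the exchange $\lim\min=\min\lim$ over $\cp$, which you get from Berge and the paper gets from uniform convergence. What your approach buys is brevity and self-containment: no bounded-bias subsequence extraction and no operator-convergence lemma. What the paper's approach buys is that it produces an explicit Bellman certificate $(g,h)$ for the limit policy, which dovetails with the solvability results of Section \ref{sec:solvability} and with the identification of $\br_i$ as the solution set of the optimal equation used elsewhere. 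One caveat worth making explicit in your write-up: both your argument and the paper's implicitly use that the robust average reward (defined over time-varying kernel sequences) is attained by a stationary worst-case kernel, and that irreducibility under all deterministic policies extends to all stochastic policies; neither is an obstacle, but both deserve a sentence.
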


  As $\br_i(\pi_{-i})$ is the set of optimal robust policies of the (induced) distributionally robust MDP  $(\mcs,\mca_i,\cp^{\pi_{-i}},r^{\pi_{-i}})$, this result hence directly implies the convexity of $\br_i(\pi_{-i})$. And for (2), the inducing policy $\pi^n_{-i}$ varies, resulting in a distributionally robust MDP sequence with different dynamics. Hence, we need to reveal the equicontinuity and uniform convergence of induced distributionally robust MDPs. Similarly, this proof relies on \Cref{thm:solvability}. 

We highlight that the proof is \textbf{not} a direct extension from its counterpart under the discounted reward, as we need to tackle the non-contraction of the Bellman equation, whereas the discounted Bellman equation is a contraction and admits a unique solution. Also, compared to non-robust studies, we need to additionally tackle the non-linearity and uncertainty transition kernel introduced by the worst-case consideration. We also highlight that our proof heavily relies on the solvability and uniqueness (up to constant vectors) of \eqref{eq:Bellman}, which enables us to fully characterize the optimal policies through the robust Bellman optimality equation. For instance, for convexity, when interpolating between two optimal policies $\pi_1$ and $\pi_2$, the corresponding worst-case kernel $P^w$ could shift and differ from the ones of $\pi_1$ and $\pi_2$. Our results, however, ensure that the intersection of the optimality half-spaces remains convex, despite the kernel shift and the concave nature of the robust operator.

% We further show the semi-continuity of the optimal policy set of the induced distributionally robust MDP. 

% \begin{theorem}
% Under Assumption \ref{ass:irr}.    For convergent policy sequences $\{\pi^n_{-i}\}\to \pi_{-i}$ and $\{\mu^n_i\}\to \mu_i$, if $\mu^n_i$ is an optimal robust policy of the induced distributionally robust MDP  $(\mcs,\mca_i,\cp^{\pi^n_{-i}},r^{\pi^n_{-i}})$ for agent $i\in\mcn$, then $\mu_i$ is also optimal for  $(\mcs,\mca_i,\cp^{\pi_{-i}},r^{\pi_{-i}})$. 
% \end{theorem}

% The proofs of these results rely on \Cref{thm:solvability}, i.e., the correspondence between the optimal policies and the solutions to the Bellman equations, and the fact that the Bellman equation for any policy $\pi$ has a unique solution. In the next section where we study a weaker setting, such results do not hold anymore, and the proof derived cannot be applied there. 

Combining all of the above results, we can finally show the existence of a Nash Equilibrium, which is the most fundamental property of the game.
\begin{theorem}[Existence of a Nash Equilibrium]\label{thm:NE}
    Under Assumption \ref{ass:irr}, a robust Nash equilibrium exists. 
\end{theorem}
% Our result herein implies the solvability of DR-MGs in the average-reward setting, 
% , thus proving its existence is a necessary prerequisite for any further analysis or algorithmic design.

\section{NE of Weakly Communicating DR-MGs}\label{sec:WC}
In this section, we extend our existence result from irreducible DR-MGs to a more general class of
\emph{weakly communicating} DR-MGs. The weakly communicating assumption is known to be the weakest standard
structural assumption under which efficient algorithms and sharp structural results are available for
average-reward control; see, e.g.,  \citep{wikecek2015stationary,bertsekas2011dynamic,wan2021learning}.

\begin{assumption}[Weakly communicating DR-MG]
\label[assumption]{ass:weak-comm}
The uncertainty set $\mathcal{P}$ is convex and compact. Moreover, for any $i$ and $\pi_{-i}$ and any transition kernel
$P \in \mathcal{P}$, the state space can be decomposed as $\mathcal{S} = \mathcal{S}^i_0 \cup \mathcal{S}^i_1, $ with $
\mathcal{S}^i_0 \cap \mathcal{S}^i_1 = \emptyset,$
such that: (1). Every state in $\mathcal{S}^i_0$ is transient under \emph{every} stationary joint policy; (2). For any two states $s,s' \in \mathcal{S}^i_1$,
    there exist a stationary policy $\pi_i$ such that $  \mathbb{P}\big(S_N = s' \mid S_0 = s, (\pi_i,\pi_{-i}), P\big) > 0$ for some $N$.  Additionally,  for any $\pi_{-i}$, the corresponding optimal robust Bellman equation \eqref{eq:opt Bellman} has a unique solution (up to constant shifts). 
\end{assumption}
\begin{remark}
The uniqueness of the optimal bias function (up to a constant shift) is a standard requirement in average-reward settings \citep{wan2021learning}. We emphasize that this condition \textit{only} applies to the \textit{optimal} robust Bellman equation, not the Bellman equation for every policy $\pi$; thus it is \textbf{strictly weaker} than the unichain assumption. A weakly communicating robust MDP can fail to be unichain while still having a unique optimal bias function. This uniqueness is critical as it ensures our proxy map $G_i(\pi_{-i})$ (defined subsequently) is well-defined and upper hemicontinuous (proved in the Appendix), preventing discontinuous jumps in the maximizer sets $M_s(\pi_{-i})$ that would occur if multiple distinct bias solutions existed.
\end{remark}

As in the non-robust case, weakly communicating is strictly weaker than irreducibility: it only requires
accessibility under \emph{some} policy, rather than under all joint policies.
Under Assumption~\ref{ass:weak-comm}, only the \emph{optimal} robust average reward
$g^\star_{\mathcal{P}}(s)$ is constant across initial states, whereas the robust average reward $g^\pi_{\mathcal{P}}(s)$ of a general stationary policy
$\pi$ may depend on $s$  \citep{puterman2014markov,atia2021steady,wang2025bellman,feinberg2012handbook}.
The solvability guarantees for weakly communicating robust Bellman equations are weaker \cite{wan2021learning,wan2022convergence,wan2024convergence}. There is no understanding on the Bellman equation for a fixed policy in \eqref{eq:Bellman}, and the solvability of the \emph{robust Bellman optimality equation}
was recently established by \citep{wang2025bellman} as follows. 

\begin{lemma}\cite{wang2025bellman}
\label[lemma]{lem:weak-comm-bellman}
Under Assumption~\ref{ass:weak-comm}, the optimal robust Bellman
equation
\begin{equation}
\label{eq:opt-robust-bellman-weak}
V(s) \;=\; \max_{a \in \mathcal{A}}
\Big\{ r(s,a) - g + \sigma_{\mathcal{P}^a_s}(V) \Big\}, \qquad s \in \mathcal{S},
\end{equation}
admits a solution $(g^\star, h^\star)$, with $g^\star = g^\star_{\mathcal{P}}(\rho)$ equal to the optimal
robust average reward for any initial distribution $\rho$.  
% Moreover, $g^\star$ is the optimal min–max value $\sup_{\pi} \inf_{P \in \mathcal{P}} g^\pi_P$, and strong min–max duality holds.
\end{lemma}

Compared with the irreducible case (Theorem~\ref{thm:solvability}), these results are strictly
weaker. Theorem~\ref{thm:solvability} gives solvability of the robust Bellman equation for
\emph{every} stationary policy $\pi$, as well as a tight correspondence between optimal policies and
solutions of the optimal Bellman equation. Under Assumption~\ref{ass:weak-comm}, Lemma~\ref{lem:weak-comm-bellman}
guarantees solvability only for the \emph{optimal} equation~\eqref{eq:opt-robust-bellman-weak};
The equation associated with a fixed policy $\pi$ may fail to have a solution, and one can no longer
characterize optimal policies solely via the Bellman equations. As a consequence, the proof strategy of
irreducible DR-MGs cannot be directly reused.

\begin{remark}
The uniqueness part in \Cref{ass:weak-comm} is also assumed in non-robust settings, e.g.,  \cite{wan2021learning}, which is used to ensure convergence stability. Notably, the commonly used unichain assumption \cite{wang2023robust,wang2023model} satisfies this uniqueness (see \Cref{lem:unique}).    
\end{remark}

To study the NE of weakly communicating DR-MGs, we  construct a proxy set-valued map, derived from solutions of the optimal robust Bellman
equation, and show that this map is contained in the true best-response map.
Fix a player $i$ and a stationary joint policy $\pi_{-i}$ of all other players, and consider the induced
single-agent DR-MDP $\mathsf{M}_i(\pi_{-i}) = (\mathcal{S}, \mathcal{A}_i, \mathcal{P}^{\pi_{-i}}, r^{\pi_{-i}}_i)$. Applying Lemma~\ref{lem:weak-comm-bellman} to $\mathsf{M}_i(\pi_{-i})$
yields a unique solution $(g^\star_{\pi_{-i}}, h^\star_{\pi_{-i}})$ of the optimal Bellman equation, with $g^\star_{\pi_{-i}}$ equal to the optimal robust average reward.

% $g^\star_{\pi_{-i}} \;=\; \sup_{\nu : \mathcal{S} \to \Delta(\mathcal{A}_i)} \min_{P \in \mathcal{P}^{\pi_{-i}}} g^\nu_P.$

For each $s \in \mathcal{S}$ and any
$\nu \in \Delta(\mathcal{A}_i)$ of player $i$, we define the one-step Bellman functional $F^{\pi_{-i}}_s(\nu)$ as 
\begin{align}
\label{eq:F-weak-def}  
\sum_{a_i \in \mathcal{A}_i} \nu(a_i)
\Big(
r^{\pi_{-i}}_i(s,a_i) - g^\star_{\pi_{-i}} + \sigma_{(\mathcal{P}^{\pi_{-i}}_{s,a_i})}(h^\star_{\pi_{-i}})
\Big),
\end{align}
where
$r^{\pi_{-i}}_i(s,a_i)$
and
$(\mathcal{P}^{\pi_{-i}}_{s,a_i})$ is the induced uncertainty slice at $(s,a_i)$. Intuitively, $F^{\pi_{-i}}_s(\nu)$ is the robust one-step Bellman
return at state $s$ when player $i$ plays $\nu(\cdot \mid s)$ and the others follow $\pi_{-i}$.
We then define the \emph{Bellman-greedy} response set of player $i$ w.r.t.\ $\pi_{-i}$ by
\begin{align} 
G_i(\pi_{-i})
 \triangleq 
\bigl\{
\nu \in \Pi_i|
\nu(s)
=
\arg\max_{\mu \in \Delta(\mathcal{A}_i)} F^{\pi_{-i}}_s(\mu), \forall s
\bigr\}.\nn
\end{align}
Thus, $G_i(\pi_{-i})$ consists of stationary policies that are pointwise maximizers of the optimal
Bellman right-hand side for the induced DR-MDP $\mathsf{M}_i(\pi_{-i})$.

% By Lemma~\ref{lem:weak-comm-bellman} and strong duality for weakly communicating DR-MDPs
% \citep[Thm.~2]{wang2025bellman}, the maxima in~\eqref{eq:F-weak-def} are attained for every $s$, so $G_i(\pi_{-i})$
% is nonempty. 
% Unlike in the irreducible case, however, the presence of transient states means that a policy
% greedy w.r.t.\ a Bellman solution need not be globally optimal; in particular, not every robust best response
% arises as a maximizer of~\eqref{eq:F-weak-def}. Nonetheless, we will show that
Different from the irreducible case, we show a slightly weaker result $G_i(\pi_{-i}) \subseteq \br_i(\pi_{-i}),$ 
i.e.,  $G_i(\pi_{-i})$ is a \emph{subset} of the best-response set, which is sufficient to establish
existence of Nash equilibria.

Finally, define the joint map $G(\pi) \;\triangleq\; \times_{i \in \mathcal{N}} G_i(\pi_{-i})$ 
on the product policy space.
If each $G_i$ satisfies the conditions of Kakutani's fixed-point theorem
(Lemma~\ref{lem:kakutani}), then $G$ admits a fixed point $\pi^\star \in G(\pi^\star)$, and
the inclusion $G_i(\pi_{-i}) \subseteq \br_i(\pi_{-i})$ guarantees that this $\pi^\star$ is also a fixed point
of the best-response map~$\br$ and hence a robust NE.

We summarize our main results in the next theorem, whose proofs are deferred to
Appendix~\ref{sec:WC proofs}.

% and relies crucially on the strong duality results of\citep{WangSi2025}.

\begin{theorem}[NE of weakly communicating DR-MGs]
\label{thm:weak-comm-existence}
Suppose the average-reward DR-MG satisfies Assumption~\ref{ass:weak-comm}, then it has a stationary robust Nash equilibrium.

% Then, for every player $i \in \mathcal{N}$, the following hold:

% (1). For every stationary joint policy $\pi_{-i}$ of the other players, the set
%     $G_i(\pi_{-i})$ is nonempty, compact, and convex.

%     (2).  
%     $G_i$
%     is upper semi-continuous in the product topology.

%     (3). For any  $\pi_{-i}$,  $G_i(\pi_{-i}) \subseteq \br_i(\pi_{-i}).$
 
% Consequently, the joint map $G(\pi) = \times_{i} G_i(\pi_{-i})$ admits a fixed point $\pi^\star \in G(\pi^\star)$, which is a robust NE.
\end{theorem}

Our results hence imply that DR-MGs with an average reward are also solvable under the weakly communicating settings. Different from irreducible cases, our proof techniques are fundamentally different and rely on the strong duality and span bound developed in \cite{wang2025bellman}, and  are deferred  to
Appendix~\ref{sec:WC proofs}.

\section{Algorithmic Solutions}\label{sec:alg}
In this section, we develop algorithmic solutions for irreducible average-reward DR-MGs under Assumption \ref{ass:irr}. Notably, finding a stationary NE is significantly challenging in general-sum games, even under non-robust settings \citep{hu2003nash,li2007reinforcement,li2003learning,zinkevich2005cyclic,deng2023complexity,daskalakis2023complexity}.  
\subsection{Nash Iteration}
% Inspired by the widely used value iteration algorithm, we first develop our robust Nash-Iteration in Algorithm \ref{algo: rni}.

% Notably, most of the existing algorithms for non-robust average-reward games are developed for two-player zero-sum MGs \citep{filar2012competitive}, which cannot be extended to the robust setting due to the existence of the additional hidden environment player \citep{zhang2020robust}. For example, standard policy iteration can not guarantee a monotonous performance due to the robustness aspect. We propose a value iteration based algorithm in Algorithm \ref{algo: rni}, named robust Nash-Iteration. 

Value-iteration has been extensively studied in algorithmic Markov game studies \citep{hu2003nash,li2007reinforcement,li2003learning,liu2021sharp,shi2024sample,farhat2025online}. We hence first develop a robust Nash iteration algorithm (\Cref{algo: rni}) for average reward DR-MGs. Our algorithm generally extends the standard Nash iteration algorithm to the robust average reward setting: it maintains an estimation, $Q_i$, for all agents, and in each step, the algorithm finds a Nash Equilibrium for the matrix-form game $\{Q_i(s,a,h^0_i)\}, i\in\mcn$ through an oracle $\textbf{NE}$.  Notably, due to the non-uniqueness of Nash Equilibria and the complicated structure of general-sum MGs, the algorithm is not guaranteed to converge if an arbitrary oracle $\textbf{NE}$ is used, even in non-robust MGs \citep{li2003learning,hu2003nash,littman2001friend,Bowling2001,bowling2000convergence}. Notably, these assumptions hold for a variety of games, including potential Markov games \cite{leonardos2021global} or cooperative DR-MGs \cite{wang2022data}. We thus adopt additional assumptions and derive the guarantee. 

% This assumption is not merely a technical convenience; it is the conceptual linchpin that stabilizes the learning dynamics of the system. By placing constraints on which Nash Equilibrium the oracle returns, we ensure that the sequence of generated policies and value functions behaves predictably, guiding the algorithm steadily toward a solution instead of allowing it to wander chaotically or diverge.

% The convergence guarantee of our algorithm, similarly to previous ones for general-sum games, is only obtained under some additional assumptions. Our algorithm is designed for irreducible DR-MGs, as learning under weakly communicating settings is generally challenging and remains open \citep{sahabandu2024rl}.

\begin{theorem}\label{thm:rni}
Under Assumption \ref{ass:irr} and  Assumption \ref{ass:2} (in Appendix), \Cref{algo: rni} converges to a robust NE. 
\end{theorem}
 
% This theorem hence provides the algorithmic solution of average reward DR-MGs. 

% contribution of our work, demonstrating that the proposed Robust Nash-Iteration algorithm is a principled method with guaranteed convergence. Given the known difficulties associated with multi-agent learning algorithms in general-sum games under the long-term performance criterion, our algorithm stands as a significant contribution. 

% Taken together, our results provide a {complete theoretical foundation} for this problem. From establishing the existence of a robust equilibrium in Theorem \ref{thm:NE} to now providing a concrete algorithm to find it, this work makes the class of average-reward DR-MGs amenable to computational analysis. It provides a blueprint for developing practical, reliable multi-agent systems that can find stable operating points even when faced with significant environmental uncertainty. 

\subsection{Robust TD Descent Algorithm}
Despite the convergence guarantee we obtained in \Cref{thm:rni}, the algorithm suffers from two disadvantages. Firstly, the convergence is only guaranteed with additional assumptions, which are standard but may not hold generally \citep{filar2012competitive}; Moreover, in each step, \Cref{algo: rni} solves for a NE of a matrix game, which can be PPAD-complete in the worst case \citep{daskalakis2009complexity,chen2009settling}. Although assuming such an oracle is also standard in Markov game studies \citep{liu2021sharp,hu2003nash,shi2024sample}, we are motivated to find other algorithms with better computational complexity.

In this section, we propose another algorithm to address these two disadvantages. Our algorithm is based on an equivalent characterization of the robust NE in terms of the temporal difference (TD) errors, inspired by non-robust studies of MGs \cite{sahabandu2024rl,sobel1971noncooperative}. Specifically, given a joint policy $\pi$ and any  $(g_k,V_k)\in\mathbb R\times
\mathbb R^{\mathcal S}$, define the $k$-th player's TD error as 
\begin{align}
&\Omega^R_k(s,a_k;\pi_{-k}\mid g_k,V_k)
\triangleq g_k + V_k(s) \label{eq:Omega-true-grad}\\
&- \sum_{a_{-k}}
      \pi_{-k}(a_{-k}\mid s)\,
      \Bigl(r_k(s,a_k,a_{-k})
            + \sigma_{\cp^{(a_k,a_{-k})}_s}(V_k)\Bigr);\nonumber
\end{align}
And for $(\pi,g=(g_1,...,g_N),V=(V_1,...,V_N))$, define the global robust TD gap $\Delta_{\mathrm{R}}(\pi,g,V)$ as 
\begin{align}
&\Delta_{\mathrm{R}}(\pi,g,V) \nonumber\\
&\triangleq 
\sum_{k\in\mathcal N,s\in\mathcal S,a_k}
   \pi_k(a_k\mid s)\,
   \Omega^R_k(s,a_k;\pi_{-k}\mid g_k,V_k),
   \label{eq:TD-gap-true}
\end{align}

We then develop a TD-based characterization of robust NE as follows. 
\begin{theorem}
   Denote  
$\mathcal{L}(\pi)\triangleq \Delta_{\mathrm{R}}\bigl(\pi, g^\star(\pi), V^\star(\pi)\bigr),$
where $g^\star(\pi)_k$ and $V^\star(\pi)_k$ is the solution to the optimal Bellman equation of the induced RMDP $\mathsf{M}_k(\pi_{-k})$. Then, for every $\pi\in\Pi$, $\mathcal{L}(\pi)\ge 0$.
Moreover, $\mathcal{L}(\pi)=0$ if and only if $\pi$ is a stationary robust Nash equilibrium.
\end{theorem}

This result shows that the global robust TD gap, $\mathcal{L}$, plays the role of a potential function for robust Nash equilibria: its global minimizer is a NE.
% on the feasible set $\mathcal{F}$, 
% we always have $\Delta_{\mathrm{R}}(\pi,g,V)\ge 0$, and $\Delta_{\mathrm{R}}(\pi,g,V)=0$ holds if and only if $\pi$ is a stationary robust NE and $(g,V)$ is a compatible collection of robust gains and biases.
% Equivalently, on $\mathcal{F}$ the stationary robust NE are precisely the global minimizers of $\Delta_{\mathrm{R}}$, and finding a robust NE is equivalent to solving the constrained system
% \[
% (\pi,g,V)\in\mathcal{F},\qquad \Delta_{\mathrm{R}}(\pi,g,V)=0.
% \]
This viewpoint motivates using $\mathcal{L}$ as the objective of our TD-based algorithm: each iteration tries to reduce the global TD gap, and convergence to $\mathcal{L}=0$ corresponds exactly to convergence to a robust Nash equilibrium. However, due to the worst-case consideration in robustness, the function $\mathcal{L}$ is generally non-smooth, whose analysis can be extremely challenging. Toward this, we propose a smoothed proxy $\mathcal{L}_\lambda$ of it by considering its Moreau envelop \cite{moreau1965proximite,wang2023policy,wang2025provable} (discussions are deferred to \Cref{app:smoothed}). We note that when $\lambda\to 0$, the NE of the smoothed proxy approximates the robust NE of the original DR-MG, hence minimizing $\mathcal{L}_\lambda$ with a small $\lambda$ approximately finds a robust NE. Toward this, we develop a two-time-scale algorithm as follows.  
\begin{algorithm}[!htb]
\caption{Robust TD Descent}
\label{alg:ac-merit}
\begin{algorithmic}[1]
\STATE \textbf{Input:} step sizes $\{\eta_n\},\{\alpha_n\}$, smoothing $\lambda>0$
\STATE Initialize $\pi^0\in\Pi$, $(g^0,V^0)\in\mathcal G\times\mathcal V$
\FOR{$n=0,1,2,\dots$}
    \STATE \texttt{Fast scale:} $(g^{n+\frac12},V^{n+\frac12}) \leftarrow \Proj_{\mathcal G\times\mathcal V}\Big((g^n,V^n)\nonumber  - \alpha_n \nabla_{(g,V)}\mathcal{L}_{\lambda}(\pi^n,g^n,V^n)\Big)$
    % \begin{align}
    %         (g^{n+\frac12},V^{n+\frac12})
    % &\leftarrow \Proj_{\mathcal G\times\mathcal V}\Big((g^n,V^n)\nonumber \\ &- \alpha_n \nabla_{(g,V)}\mathcal{L}_{\lambda}(\pi^n,g^n,V^n)\Big) \nonumber
    % \end{align}

    \STATE \texttt{Slow scale:} $\pi^{n+1}
   \leftarrow \Proj_{\Pi}\Big(\pi^n - \eta_n \nabla_{\pi}\mathcal{L}_{\lambda}(\pi^n,g^{n+\frac12},V^{n+\frac12})\Big)$
   %  \begin{align}
   %      \pi^{n+1}
   % \leftarrow \Proj_{\Pi}\Big(\pi^n - \eta_n \nabla_{\pi}\mathcal{L}_{\lambda}(\pi^n,g^{n+\frac12},V^{n+\frac12})\Big) \nonumber
   %  \end{align}
    \STATE $(g^{n+1},V^{n+1})\leftarrow (g^{n+\frac12},V^{n+\frac12})$
\ENDFOR
\end{algorithmic}
\end{algorithm}

In the algorithm,  $\Gamma$ denotes the projection operator in Euclidean norm. A more detailed discussion of the algorithm is in \Cref{sec:appendix-true-gradient}. Compared to \Cref{algo: rni}, which relies on a span-seminorm contraction analysis and requires a PPAD-complete NE computing subroutine, \Cref{alg:ac-merit} bypasses the NE oracle requirement entirely. While the joint action space computation grows exponentially in $N$, our framework can efficiently exploit factored action spaces or local interactions, since the effective space can be much smaller. Furthermore, the inner-loop minimization over the ambiguity set $(\sigma_{\mathcal{P}^a_s}(V))$ is highly tractable for standard divergence-based uncertainty sets (e.g., $\mathcal{O}(|S|\log|S|)$ for KL-divergence via water-filling algorithm, or a closed-form solution in $\mathcal{O}(|S|)$ for Total-variation \citep{nilim2004robustness}). Thus, the per-iteration complexity remains polynomial in the size of the DR-MG parameters (see \Cref{sec:appendix-true-gradient}), making \Cref{alg:ac-merit} significantly more practical and feasible.

% We note that $D^n$  is a scaled gradient direction, and in Lemma \ref{lem:feasible-descent-frozen} we show that $\langle \nabla_\pi F_n(\pi^n), D^n\rangle > 0$ whenever
% $\Delta_{\mathrm{R}}(\pi^n,g^n,V^n) > 0$, so the projected update
% $\pi^{n+1} = \Gamma(\pi^n - \eta_n D^n)$ strictly decreases $F_n$ for small
% enough step size.

% To derive convergence guarantees of \Cref{alg:ac-merit}, we first introduce a notion named robust TD-stationary point. 
% \begin{definition}[Robust TD-stationary point]
% \label{def:TD-stationary}
% A policy $\pi$ is called a \emph{robust TD-stationary
% policy} if
% $D(\pi,g^\pi_\cp,h^\pi_\cp) = 0$.
% \end{definition}

% Notably, we can show that any robust NE is TD-stationary in \Cref{lem:NE-are-stationary}, hence stationary policies can be viewed as weaker solutions to   DR-MGs. Moreover, under an additional assumption, we further show that all of these stationary points are robust NE and our algorithm weakly converges to a robust NE.  

% We then derive the convergence of \Cref{alg:ac-merit}. We first introduce a notion named TD stationary point in \Cref{def:TD-stationary}. We show that any robust NE is TD stationary, and hence stationary points are weaker solutions of the DR-MG. We further show that \Cref{alg:ac-merit} weakly converges to a stationary point, and under some additional assumptions, all of these stationary points are robust NE and the algorithm converges to a robust NE.  
We then present our convergence results.
\begin{theorem}\label{thm:7}
Assume the DR-MG satisfies Assumption~\ref{ass:irr} and step sizes satisfy the Robbins-Monro condition. Then $\{\pi^n\}$  weakly converges to a stationary point of $\mathcal{L}_\lambda$. 
\end{theorem}
% We show that any convergent subsequence of $\{\pi^n\}$ converges to a robust NE, and constructing such a subsequence is straightforward in fixed-point theory and monotone operator theory, by adopting techniques like Halpern Iteration \citep{halpern_fixed_1967} or Tikhonov Regularization. 
% Namely, we can maintain the value of $G(\pi^n)$ after each episode and output the one with the minimal value among episodes $n=1,...,N$: $\pi^\star_N\triangleq \arg\min_{\pi^n,n=1,...,N}G(\pi^n)$. From our result \eqref{eq:12}, $\pi^\star_N$ converges to a robust NE as $N\to \infty$. 

Our results imply the convergence of \Cref{alg:ac-merit} to a stationary point of $\mathcal{L}_\lambda$ (note that any NE is a stationary point). Such a result is weaker than the convergence guarantee of \Cref{algo: rni}, yet it does not require any additional assumptions on the game structure or NE-computing oracle, hence standing as the first practical algorithm for average reward DR-MGs with provable convergence guarantees. 

\begin{remark}
We highlight that, finding the exact NE even in standard general-sum Markov games is significantly challenging without additional structural assumptions \citep{zhang2024gradient,deng2023complexity,jin2022complexity}. Unlike standard non-robust Markov games, the set of stationary points for DR-MGs does not necessarily coincide with the set of robust Nash Equilibria due to the inherent non-smoothness of the worst-case objective. Because the robust value function involves a $\min$ over the uncertainty set, gradient domination does not hold in general for DR-MGs. 

However, any global minimizer of $\mathcal{L}_\lambda$ (with value 0) is an exact NE of the $\lambda$-smoothed robust game. As $\lambda \to 0$ in the original game, the smoothed TD errors converge to the original errors. Thus, for a sufficiently small $\lambda$, an approximate NE of the smoothed game translates to an $\mathcal{O}(\lambda + \epsilon)$-approximate NE of the original DR-MG. Our robust TD algorithm converges to a stationary point of $\mathcal{L}_\lambda$, which serves as a necessary first step toward finding a robust NE in robust average-reward environments. We leave explorations of local gradient domination for the smoothed game under strict equilibrium assumptions for stronger convergence guarantees as future work.
\end{remark}

\section{Connections with Discounted DR-NE}\label{sec:connection}
In this section, we study the connection between distributionally robust games with discounted and average rewards. Our motivation is to utilize the extensively studied and mathematical elegance of the discounted setting to solve the more challenging average-reward setting. Recall that a policy $\pi^\star_\gamma$ is a Nash Equilibrium of the $\gamma$-discounted DR-MG if 
\begin{align}
    V^{\pi^\star}_{\cp,i}(\rho)\geq V^{(\pi^\star_{-i},\pi_i)}_{\cp,i}(\rho), \forall i\in\mcn, \forall \pi_i, 
\end{align}
where $V^{\pi^\star}_{\cp,i}(\rho)$ is the $\gamma$-discounted robust value function. A policy $\pi^\star$ is an $\epsilon$-Nash Equilibrium, if 
\begin{align}
    g^{\pi^\star}_{\cp,i}(\rho)+\epsilon \geq g^{(\pi^\star_{-i},\pi_i)}_{\cp,i}(\rho), \forall i\in\mcn, \forall \pi_i, 
\end{align}

We then derive the following results under Assumption \ref{ass:irr}.
\begin{theorem}\label{thm:dis}
    (1). For any discount factor $\gamma\in[0,1)$, denote a Nash Equilibrium of the $\gamma$-discounted DR-MG by $\pi_\gamma$. Then any cluster point $\pi$ of $\{\pi_\gamma\}_{\gamma\in[0,1)}$ is a Nash Equilibrium of the average reward DR-MG.

(2). For any $\epsilon$, there exists some discount factor $\gamma$, such that any $\epsilon$-Nash Equilibrium of the  $\gamma$-discounted DR-MG is also an $\mathcal{O}(\epsilon)$-Nash Equilibrium under the average reward. 

% (3). Moreover,     define the robust diameter \citep{auer2008near} of the DR-MG by $D=\max_{\kp\in\cp}\max_{s,s'\in\mcs}\min_{\pi} \mathbb{E}[T(s'|s,\pi,\kp)]$, where $T(s'|s,\pi,\kp)$ is the first (random) time step in which state $s'$ is reached when policy $\pi$ is executed with transition kernel $\kp$ with initial state $s$. Then the discounted robust NE for $\gamma=1-\frac{\epsilon}{D}$ is an $\epsilon$-NE under average reward.
\end{theorem}
This theorem establishes a crucial bridge between the discounted and average reward criteria in DR-MGs. Specifically, we showed that the worst-case long-run average behavior of a multi-agent system can be approximated by its discounted behavior when the future is valued almost as much as the present (i.e., as $\gamma \to 1$). Moreover, we provide a concrete choice of such a factor $\gamma$ in \Cref{sec:connection} to validate the approximation approach. It also provides significant computational implications. The analysis of $\gamma$-discounted games is often more tractable \citep{kardecs2011discounted}, benefiting from properties like the Banach fixed-point theorem, which guarantees the existence and uniqueness of value functions. Many reinforcement learning and game theory algorithms, such as Q-learning and policy gradient, can be applied for the discounted setting \citep{zhang2020robust}. This result therefore assures us that for any desired level of accuracy $\epsilon$, we can simply select a single discount factor $\gamma$ sufficiently close to 1 and solve for this $\gamma$-discounted game.

\section{Conclusion}
In this paper, we developed a comprehensive framework for distributionally robust Markov games under the average-reward criterion, a setting critical for multi-agent systems requiring long-term reliability in the face of severe model uncertainty. We proved the existence of a stationary Nash Equilibrium under both irreducible and weakly communicating settings, and proposed two practical algorithms as the first provably convergent methods for average reward DR-MGs. We also connected average-reward equilibria to their discounted counterparts, enabling their tractable approximation via robust equilibria with large discount factors. 

\textbf{Limitations and Future Works.}
Our complete theoretical framework and algorithmic blueprint opens several interesting directions for future study. First, exploring local gradient domination under a strict equilibrium assumption could yield local finite-time convergence guarantees for the smoothed objective. Second, extending this framework beyond $(s,a)$-rectangularity would allow us to handle state-coupled ambiguity. Finally, while our formulation models uncertainty as a player-specific environment, investigating fully shared worst-case kernels (e.g., robust cooperative games or robust Stackelberg games, where nature acts as adversarial leader) presents a fascinating challenge for future research on multi-agent robust optimization.

\section*{Acknowledgments}
This work was supported by DARPA under Agreement No. HR0011-24-9-0427, and an Amazon Research Award, Fall 2025. Any opinions, findings, and conclusions or recommendations expressed in this material are those of the author(s) and do not reflect the views of DARPA or Amazon.

\section*{Impact Statement}
This paper presents work whose goal is to advance the field of Machine
Learning, specifically robust reinforcement learning. There are many potential societal consequences of our work; however, since we provide a general framework, we do not feel that there is a need to highlight these.

\bibliographystyle{icml2026}
\bibliography{ref}  

@string{aaai="Proc. Conference on Artificial Intelligence (AAAI)"}

@string{colt="Proc. Annual Conference on Learning Theory (CoLT)"}

@string{el = "Electron. Lett."}

@string{iclr="Proc. International Conference on Learning Representations (ICLR)"}

@string{icml="Proc. International Conference on Machine Learning (ICML)"}

@string{ijcai="Proc. International Joint Conferences on Artificial Intelligence (IJCAI)"}

@string{nips="Proc. Advances in Neural Information Processing Systems (NIPS)"}

@string{nipsnew="Proc. Advances in Neural Information Processing Systems (NeurIPS)"}

@article{sobel1971noncooperative,
  title={Noncooperative stochastic games},
  author={Sobel, Matthew J},
  journal={The Annals of Mathematical Statistics},
  volume={42},
  number={6},
  pages={1930--1935},
  year={1971},
  publisher={Institute of Mathematical Statistics}
}

@article{iyengar2005robust,
  title={Robust dynamic programming},
  author={Iyengar, Garud N},
  journal={Mathematics of Operations Research},
  volume={30},
  number={2},
  pages={257--280},
  year={2005},
  publisher={INFORMS}
}

@article{bagnell2001solving,
  title={Solving Uncertain {M}arkov Decision Processes},
  author={J. Andrew Bagnell and A. Ng and Jeff G. Schneider},
  year={2001}
}

@inproceedings{meggendorfer2025solving,
  title={Solving robust Markov decision processes: Generic, reliable, efficient},
  author={Meggendorfer, Tobias and Weininger, Maximilian and Wienh{\"o}ft, Patrick},
  booktitle=aaai,
  volume={39},
  number={25},
  pages={26631--26641},
  year={2025}
}

@inproceedings{nilim2004robustness,
  title={Robustness in {{M}arkov} decision problems with uncertain transition matrices},
  author={Nilim, Arnab and El Ghaoui, Laurent},
  booktitle=nips,
  pages={839--846},
  year={2003}
}

@inproceedings{zhang2020robust,
  title={Robust Multi-Agent Reinforcement Learning with Model Uncertainty},
  author={Zhang, Kaiqing and Sun, Tao and Tao, Yunzhe and Genc, Sahika and Mallya, Sunil and Basar, Tamer},
  booktitle=nipsnew,
  volume={33},
  year={2020}
}

@misc{farhat2025online,
      title={Sample-Efficient Distributionally Robust Multi-Agent Reinforcement Learning via Online Interaction}, 
      author={Zain Ulabedeen Farhat and Debamita Ghosh and George K. Atia and Yue Wang},
      year={2026},
      eprint={2508.02948},
      archivePrefix={arXiv},
      primaryClass={cs.LG},
      url={https://arxiv.org/abs/2508.02948}, 
}

@article{zhang2024distributionally,
  title={Distributionally Robust Constrained Reinforcement Learning under Strong Duality},
  author={Zhang, Zhengfei and Panaganti, Kishan and Shi, Laixi and Sui, Yanan and Wierman, Adam and Yue, Yisong},
  journal={arXiv preprint arXiv:2406.15788},
  year={2024}
}

@article{chen2009settling,
  title={Settling the complexity of computing two-player Nash equilibria},
  author={Chen, Xi and Deng, Xiaotie and Teng, Shang-Hua},
  journal={Journal of the ACM (JACM)},
  volume={56},
  number={3},
  pages={1--57},
  year={2009},
  publisher={ACM New York, NY, USA}
}

@article{kardecs2011discounted,
  title={Discounted robust stochastic games and an application to queueing control},
  author={Karde{\c{s}}, Erim and Ord{\'o}{\~n}ez, Fernando and Hall, Randolph W},
  journal={Operations research},
  volume={59},
  number={2},
  pages={365--382},
  year={2011},
  publisher={INFORMS}
}

@inproceedings{mcmahan2024roping,
  author       = {Jeremy McMahan and
                  Giovanni Artiglio and
                  Qiaomin Xie},
  title        = {Roping in Uncertainty: Robustness and Regularization in Markov Games},
  booktitle    = {Forty-first International Conference on Machine Learning (ICML)},
  publisher    = {OpenReview.net},
  year         = {2024}
}

@inproceedings{wan2021learning,
  title={Learning and planning in average-reward markov decision processes},
  author={Wan, Yi and Naik, Abhishek and Sutton, Richard S},
  booktitle=icml,
  pages={10653--10662},
  year={2021},
  organization={PMLR}
}

@book{puterman2014markov,
  title={Markov decision processes: discrete stochastic dynamic programming},
  author={Puterman, Martin L},
  year={2014},
  publisher={John Wiley \& Sons}
}

@inproceedings{wang2023model,
  title={Model-free robust average-reward reinforcement learning},
  author={Wang, Yue and Velasquez, Alvaro and Atia, George K and Prater-Bennette, Ashley and Zou, Shaofeng},
  booktitle=icml,
  pages={36431--36469},
  year={2023},
  organization={PMLR}
}

@book{feinberg2012handbook,
  title={Handbook of Markov decision processes: methods and applications},
  author={Feinberg, Eugene A and Shwartz, Adam},
  volume={40},
  year={2012},
  publisher={Springer Science \& Business Media}
}

@inproceedings{wang2023policy,
  title={Policy gradient in robust mdps with global convergence guarantee},
  author={Wang, Qiuhao and Ho, Chin Pang and Petrik, Marek},
  booktitle=icml,
  pages={35763--35797},
  year={2023},
  organization={PMLR}
}

@inproceedings{blanchet2023double,
  title={Double pessimism is provably efficient for distributionally robust offline reinforcement learning: Generic algorithm and robust partial coverage},
  author={Blanchet, Jose and Lu, Miao and Zhang, Tong and Zhong, Han},
  booktitle=nipsnew,
  volume={36},
  year={2023}
}

@inproceedings{wang2023robust,
  title={Robust average-reward Markov decision processes},
  author={Wang, Yue and Velasquez, Alvaro and Atia, George and Prater-Bennette, Ashley and Zou, Shaofeng},
  booktitle=aaai,
  volume={37},
  pages={15215--15223},
  year={2023}
}

@article{wang2022robust,
  title={Robust constrained reinforcement learning},
  author={Wang, Yue and Miao, Fei and Zou, Shaofeng},
  journal={arXiv preprint arXiv:2209.06866},
  year={2022}
}

@article{wang2024robust,
  title={Robust Average-Reward Reinforcement Learning},
  author={Wang, Yue and Velasquez, Alvaro and Atia, George and Prater-Bennette, Ashley and Zou, Shaofeng},
  journal={Journal of Artificial Intelligence Research},
  volume={80},
  pages={719--803},
  year={2024}
}

@inproceedings{wang2022data,
  title={Data-driven robust multi-agent reinforcement learning},
  author={Wang, Yudan and Wang, Yue and Zhou, Yi and Velasquez, Alvaro and Zou, Shaofeng},
  booktitle={2022 IEEE 32nd International Workshop on Machine Learning for Signal Processing (MLSP)},
  pages={1--6},
  year={2022},
  organization={IEEE}
}

@inproceedings{leonardos2021global,
  author       = {Stefanos Leonardos and
                  Will Overman and
                  Ioannis Panageas and
                  Georgios Piliouras},
  title        = {Global Convergence of Multi-Agent Policy Gradient in Markov Potential Games},
  booktitle    = {The Tenth International Conference on Learning Representations, (ICLR)},
  publisher    = {OpenReview.net},
  year         = {2022}
}

@inproceedings{grand2023reducing,
  author       = {Julien Grand{-}Cl{\'{e}}ment and
                  Marek Petrik},
  title        = {Reducing Blackwell and Average Optimality to Discounted {MDPs} via the Blackwell Discount Factor},
  booktitle    = {Advances in Neural Information Processing Systems (NeurIPS)},
  year         = {2023}
}

@article{grand2023beyond,
  title={Beyond discounted returns: Robust Markov decision processes with average and Blackwell optimality},
  author={Grand-Clement, Julien and Petrik, Marek and Vieille, Nicolas},
  journal={arXiv preprint arXiv:2312.03618},
  year={2023}
}

@inproceedings{chatterjee2023solving,
  author       = {Krishnendu Chatterjee and
                  Ehsan Kafshdar Goharshady and
                  Mehrdad Karrabi and
                  Petr Novotn{\'{y}} and
                  Dorde Zikelic},
  title        = {Solving Long-run Average Reward Robust MDPs via Stochastic Games},
  booktitle    = {Proceedings of the Thirty-Third International Joint Conference on Artificial Intelligence (IJCAI)},
  pages        = {6707--6715},
  publisher    = {ijcai.org},
  year         = {2024}
}

@article{wan2022convergence,
  title={On convergence of average-reward off-policy control algorithms in weakly communicating MDPs},
  author={Wan, Yi and Sutton, Richard S},
  journal={arXiv preprint arXiv:2209.15141},
  year={2022}
}

@article{wan2024convergence,
  title={On Convergence of Average-Reward Q-Learning in Weakly Communicating Markov Decision Processes},
  author={Wan, Yi and Yu, Huizhen and Sutton, Richard S},
  journal={arXiv preprint arXiv:2408.16262},
  year={2024}
}

@article{jiao_minimax-optimal_2024,
  title={Minimax-Optimal Multi-Agent Robust Reinforcement Learning},
  author={Jiao, Yuchen and Li, Gen},
  journal={arXiv preprint arXiv:2412.19873},
  year={2024}
}

@inproceedings{shi2024breaking,
  author       = {Laixi Shi and
                  Jingchu Gai and
                  Eric Mazumdar and
                  Yuejie Chi and
                  Adam Wierman},
  title        = {Breaking the Curse of Multiagency in Robust Multi-Agent Reinforcement Learning},
  booktitle    = {Forty-second International Conference on Machine Learning (ICML)},
  publisher    = {OpenReview.net},
  year         = {2025}
}

@inproceedings{shi2024sample,
  title={Sample-Efficient Robust Multi-Agent Reinforcement Learning in the Face of Environmental Uncertainty},
  author={Shi, Laixi and Mazumdar, Eric and Chi, Yuejie and Wierman, Adam},
  booktitle=icml,
  year={2024},
  organization={PMLR}
}

@article{li2025sample,
  author       = {Na Li and
                  Zewu Zheng and
                  Wei Ni and
                  Hangguan Shan and
                  Wenjie Zhang and
                  Xinyu Li},
  title        = {Sample-Efficient Tabular Self-Play for Offline Robust Reinforcement Learning},
  year         = {2025},
  journal={arXiv preprint arXiv:2512.00352}
}

@article{ma2023decentralized,
  title={Decentralized robust v-learning for solving markov games with model uncertainty},
  author={Ma, Shaocong and Chen, Ziyi and Zou, Shaofeng and Zhou, Yi},
  journal={Journal of Machine Learning Research},
  volume={24},
  number={371},
  pages={1--40},
  year={2023}
}

@inproceedings{liu2021sharp,
  title={A sharp analysis of model-based reinforcement learning with self-play},
  author={Liu, Qinghua and Yu, Tiancheng and Bai, Yu and Jin, Chi},
  booktitle=icml,
  pages={7001--7010},
  year={2021},
  organization={PMLR}
}

@inproceedings{bukharin2023robust,
  title={Robust multi-agent reinforcement learning via adversarial regularization: Theoretical foundation and stable algorithms},
  author={Bukharin, Alexander and Li, Yan and Yu, Yue and Zhang, Qingru and Chen, Zhehui and Zuo, Simiao and Zhang, Chao and Zhang, Songan and Zhao, Tuo},
  booktitle=nipsnew,
  volume={36},
  year={2023}
}

@incollection{littman1994markov,
  title={Markov games as a framework for multi-agent reinforcement learning},
  author={Littman, Michael L},
  booktitle={Machine learning proceedings 1994},
  pages={157--163},
  year={1994},
  publisher={Elsevier}
}

@article{hu2003nash,
  title={Nash Q-learning for general-sum stochastic games},
  author={Hu, Junling and Wellman, Michael P},
  journal={Journal of machine learning research},
  volume={4},
  number={Nov},
  pages={1039--1069},
  year={2003}
}

@article{shapley1953stochastic,
  title={Stochastic games},
  author={Shapley, Lloyd S},
  journal={Proceedings of the national academy of sciences},
  volume={39},
  number={10},
  pages={1095--1100},
  year={1953},
  publisher={National Academy of Sciences}
}

@misc{roch2025finite,
      title={Provably Sample-Efficient Robust Reinforcement Learning with Average Reward}, 
      author={Zachary Roch and Chi Zhang and George Atia and Yue Wang},
      year={2025},
      eprint={2505.12462},
      archivePrefix={arXiv},
      primaryClass={cs.LG},
      url={https://arxiv.org/abs/2505.12462}, 
}

@article{kakutani1941generalization,
  title={A generalization of Brouwer’s fixed point theorem},
  author={Kakutani, Shizuo},
  year={1941},
journal={Duke Mathematical Journal, Duke Math. J. 8(3)},
page={457-459}
}

@inproceedings{daskalakis2023complexity,
  title={The complexity of markov equilibrium in stochastic games},
  author={Daskalakis, Constantinos and Golowich, Noah and Zhang, Kaiqing},
  booktitle=colt,
  pages={4180--4234},
  year={2023},
  organization={PMLR}
}

@article{sennott1994zero,
  title={Zero-sum stochastic games with unbounded costs: Discounted and average cost cases},
  author={Sennott, Linn I},
  journal={Zeitschrift f{\"u}r Operations Research},
  volume={39},
  pages={209--225},
  year={1994},
  publisher={Springer}
}

@inproceedings{rieder1995average,
  title={Average optimality in Markov games with general state space},
  author={Rieder, Ulrich},
  booktitle={Proc. 3rd International Conf. on Approximation and Optimization},
  year={1995},
  organization={Citeseer}
}

@article{guo2008new,
  title={A new condition and approach for zero-sum stochastic games with average payoffs},
  author={Guo, Xianping and Yang, Jie},
  journal={Stochastic analysis and applications},
  volume={26},
  number={3},
  pages={537--561},
  year={2008},
  publisher={Taylor \& Francis}
}

@article{hernandez2000zero,
  title={Zero-sum stochastic games in Borel spaces: average payoff criteria},
  author={Hern{\'a}ndez-Lerma, On{\'e}simo and Lasserre, Jean B},
  journal={SIAM Journal on Control and Optimization},
  volume={39},
  number={5},
  pages={1520--1539},
  year={2000},
  publisher={SIAM}
}

@article{guo2003zero,
 ISSN={00219002},
 URL={http://www.jstor.org/stable/3215794},
 author = {Xianping Guo and Onésimo Hernández-Lerma},
 journal = {Journal of Applied Probability},
 number = {2},
 pages = {327--345},
 publisher = {Applied Probability Trust},
 title = {Zero-Sum Games for Continuous-Time Markov Chains with Unbounded Transition and Average Payoff Rates},
 volume = {40},
 year = {2003}
}

@article{zheng2024zero,
  title={Zero-Sum Non-stationary Stochastic Games with the Long-Run Average Criterion},
  author={Zheng, Zewu and Guo, Xin},
  journal={Applied Mathematics \& Optimization},
  volume={90},
  number={2},
  pages={43},
  year={2024},
  publisher={Springer}
}

@article{nowak1999optimal,
  title={Optimal strategies in a class of zero-sum ergodic stochastic games},
  author={Nowak, Andrzej S},
  journal={Mathematical methods of operations research},
  volume={50},
  pages={399--419},
  year={1999},
  publisher={Springer}
}

@article{iwase1976markov,
author = {Seiichi IWASE and Kensuke TANAKA and Kazuyoshi WAKUTA},
title = {{On Markov games with the expected average reward criterion}},
volume = {13},
journal = {Science Reports of Niigata University. Series A, Mathematics},
number = {none},
publisher = {Niigata University, Department of Mathematics},
pages = {31 -- 41},
year = {1976},
}

@article{kuenle2003optimality,
  title={The optimality equation and $\varepsilon$-optimal strategies in Markov games with average reward criterion},
  author={K{\"u}enle, Heinz-Uwe and Schurath, Ronald},
  journal={Mathematical methods of operations research},
  volume={56},
  pages={451--471},
  year={2003},
  publisher={Springer}
}

@article{tanaka1977continuous,
  title={On continuous time Markov games with the expected average reward criterion},
  author={Tanaka, Kensuke and Wakuta, Kazuyoshi},
  year={1977}
}

@article{jaskiewicz2001optimality,
  title={On the optimality equation for zero-sum ergodic stochastic games},
  author={Ja{\'s}kiewicz, Anna and Nowak, Andrzej S},
  journal={Mathematical methods of operations research},
  volume={54},
  pages={291--301},
  year={2001},
  publisher={Springer}
}

@inproceedings{Roch2025reduction,
  title={A Reduction Framework for Distributionally Robust Reinforcement Learning under Average Reward},
  author={Roch, Zachary and Atia, George and Wang, Yue},
  booktitle=icml,
  year={2025},
  organization={PMLR}
}

@article{moreau1965proximite,
  title={Proximit{\'e} et dualit{\'e} dans un espace hilbertien},
  author={Moreau, Jean-Jacques},
  journal={Bulletin de la Soci{\'e}t{\'e} math{\'e}matique de France},
  volume={93},
  pages={273--299},
  year={1965}
}

@article{zhang2024gradient,
  title={Gradient play in stochastic games: stationary points, convergence, and sample complexity},
  author={Zhang, Runyu and Ren, Zhaolin and Li, Na},
  journal={IEEE Transactions on Automatic Control},
volume       = {69},
  number       = {10},
  pages        = {6499--6514},
  year={2024},
  publisher={IEEE}
}

@book{fudenberg1991game,
  author={Drew Fudenberg and Jean Tirole},
  title={Game theory {(3.} pr.)},
  publisher={{MIT} Press},
  year={1991},
  isbn={978-0-262-06141-4}
}

@book{filar2012competitive,
author = {Filar, Jerzy and Vrieze, Koos},
title = {Competitive Markov decision processes},
year = {1996},
isbn = {0387948058},
publisher = {Springer-Verlag},
address = {Berlin, Heidelberg}
}

@Article{atia2021steady,
  author  = {Atia, George K and Beckus, Andre and Alkhouri, Ismail and Velasquez, Alvaro},
  title   = {Steady-State Planning in Expected Reward Multichain MDPs},
  pages   = {1029--1082},
  volume  = {72},
  journal = {Journal of Artificial Intelligence Research},
  year    = {2021},
}

@Article{xu2025finite,
  author  = {Xu, Yang and Mondal, Washim Uddin and Aggarwal, Vaneet},
  title   = {Finite-Sample Analysis of Policy Evaluation for Robust Average Reward Reinforcement Learning},
  journal = {arXiv preprint arXiv:2502.16816},
  year    = {2025},
}

@inproceedings{Bowling2001,
  author       = {Michael H. Bowling and
                  Manuela M. Veloso},
  editor       = {Bernhard Nebel},
  title        = {Rational and Convergent Learning in Stochastic Games},
  booktitle    = {Proceedings of the Seventeenth International Joint Conference on Artificial Intelligence (IJCAI)},
  pages        = {1021--1026},
  publisher    = {Morgan Kaufmann},
  year         = {2001}
}

@article{bertsekas2011dynamic,
  title={{Dynamic Programming and Optimal Control 3rd edition, volume II}},
  author={Bertsekas, Dimitri P},
  journal={Belmont, MA: Athena Scientific},
  year={2007}
}

@book{osborne2004introduction,
  title={An introduction to game theory},
  author={Osborne, Martin J},
  year={2003},
  publisher={Oxford University Press},
  isbn={9780197583067}
}

@article{daskalakis2009complexity,
  title={The complexity of computing a Nash equilibrium},
  author={Daskalakis, Constantinos and Goldberg, Paul W and Papadimitriou, Christos H},
  journal={Communications of the ACM},
  volume={52},
  number={2},
  pages={89--97},
  year={2009},
  publisher={ACM New York, NY, USA}
}

@article{xu2025efficient,
  title={Efficient $ Q $-Learning and Actor-Critic Methods for Robust Average Reward Reinforcement Learning},
  author={Xu, Yang and Ganesh, Swetha and Aggarwal, Vaneet},
  journal={arXiv preprint arXiv:2506.07040},
  year={2025}
}

@article{deng2023complexity,
  title={On the complexity of computing markov perfect equilibrium in general-sum stochastic games},
  author={Deng, Xiaotie and Li, Ningyuan and Mguni, David and Wang, Jun and Yang, Yaodong},
  journal={National Science Review},
  volume={10},
  number={1},
  year={2023},
  publisher={Oxford University Press}
}

@inproceedings{jin2022complexity,
  author       = {Yujia Jin and
                  Vidya Muthukumar and
                  Aaron Sidford},
  title        = {The Complexity of Infinite-Horizon General-Sum Stochastic Games},
  booktitle    = {14th Innovations in Theoretical Computer Science Conference (ITCS)},
  series       = {LIPIcs},
  volume       = {251},
  pages        = {76:1--76:20},
  publisher    = {Schloss Dagstuhl - Leibniz-Zentrum f{\"{u}}r Informatik},
  year         = {2023}
}

@inproceedings{
chen2025sample,
title={Sample Complexity of Distributionally Robust Average-Reward Reinforcement Learning},
author={Zijun Chen and Shengbo Wang and Nian Si},
booktitle={The Thirty-ninth Annual Conference on Neural Information Processing Systems},
year={2025}
}

@phdthesis{li2003learning,
  title={Learning average reward irreducible stochastic games: Analysis and applications},
  author={Li, Jun},
  year={2003},
school={University of South Florida}
}

@inproceedings{littman2001friend,
  author={Michael L. Littman},
  title={Friend-or-Foe Q-learning in General-Sum Games},
  booktitle={Proceedings of the Eighteenth International Conference on Machine Learning {(ICML} 2001)},
  pages={322--328},
  year={2001}
}

@inproceedings{bowling2000convergence,
  title={Convergence problems of general-sum multiagent reinforcement learning},
  author={Bowling, Michael},
  booktitle={ICML},
  pages={89--94},
  year={2000}
}

@inproceedings{krnjaic2024scalable,
  title={Scalable multi-agent reinforcement learning for warehouse logistics with robotic and human co-workers},
  author={Krnjaic, Aleksandar and Steleac, Raul D and Thomas, Jonathan D and Papoudakis, Georgios and Sch{\"a}fer, Lukas and To, Andrew Wing Keung and Lao, Kuan-Ho and Cubuktepe, Murat and Haley, Matthew and B{\"o}rsting, Peter and others},
  booktitle={2024 IEEE/RSJ International Conference on Intelligent Robots and Systems (IROS)},
  pages={677--684},
  year={2024},
  organization={IEEE}
}

@inproceedings{peake2020multi,
  title={Multi-agent reinforcement learning for cooperative adaptive cruise control},
  author={Peake, Ashley and McCalmon, Joe and Raiford, Benjamin and Liu, Tongtong and Alqahtani, Sarra},
  booktitle={2020 IEEE 32nd International Conference on Tools with Artificial Intelligence (ICTAI)},
  pages={15--22},
  year={2020},
  organization={IEEE}
}

@article{vadori2024towards,
  title={Towards multi-agent reinforcement learning-driven over-the-counter market simulations},
  author={Vadori, Nelson and Ardon, Leo and Ganesh, Sumitra and Spooner, Thomas and Amrouni, Selim and Vann, Jared and Xu, Mengda and Zheng, Zeyu and Balch, Tucker and Veloso, Manuela},
  journal={Mathematical Finance},
  volume={34},
  number={2},
  pages={262--347},
  year={2024},
  publisher={Wiley Online Library}
}

@inproceedings{qiu2021multi,
  title={Multi-Agent Reinforcement Learning for Automated Peer-to-Peer Energy Trading in Double-Side Auction Market.},
  author={Qiu, Dawei and Wang, Jianhong and Wang, Junkai and Strbac, Goran},
  booktitle={IJCAI},
  pages={2913--2920},
  year={2021}
}

@article{fink1964equilibrium,
  title={Equilibrium in a stochastic $ n $-person game},
  author={Fink, Arlington M},
  journal={Journal of science of the hiroshima university, series ai (mathematics)},
  volume={28},
  number={1},
  pages={89--93},
  year={1964},
  publisher={Hiroshima University, Mathematics Program}
}

@article{qu2020scalable,
  title={Scalable multi-agent reinforcement learning for networked systems with average reward},
  author={Qu, Guannan and Lin, Yiheng and Wierman, Adam and Li, Na},
  journal={Advances in Neural Information Processing Systems},
  volume={33},
  pages={2074--2086},
  year={2020}
}

@article{wang2025bellman,
  title={Bellman Optimality of Average-Reward Robust Markov Decision Processes with a Constant Gain},
  author={Wang, Shengbo and Si, Nian},
  journal={arXiv preprint arXiv:2509.14203},
  year={2025}
}

@inproceedings{
wang2025provable,
title={Provable Policy Gradient for Robust Average-Reward {MDP}s  Beyond Rectangularity},
author={Qiuhao Wang and Yuqi Zha and Chin Pang Ho and Marek Petrik},
booktitle={Forty-second International Conference on Machine Learning (ICML)},
year={2025},
url={https://openreview.net/forum?id=f4CPc211U1}
}

@book{ross2013elementary,
  title={Elementary analysis},
  author={Ross, Kenneth A},
  year={2013},
  publisher={Springer}
}

@book{berge1877topological,
  title={Topological spaces: Including a treatment of multi-valued functions, vector spaces and convexity},
  author={Berge, Claude},
  year={1877},
  publisher={Oliver \& Boyd}
}

@article{wikecek2015stationary,
  title={Stationary anonymous sequential games with undiscounted rewards},
  author={Wikecek, Piotr and Altman, Eitan},
  journal={Journal of optimization theory and applications},
  volume={166},
  number={2},
  pages={686--710},
  year={2015},
  publisher={Springer}
}

@article{lozovanu2016stationary,
  title={Stationary Nash equilibria for average stochastic games with finite state and action spaces},
  author={Lozovanu, Dmitrii},
  journal={Buletinul Academiei de {\c{S}}tiin{\c{t}}e a Republicii Moldova. Matematica},
  volume={81},
  number={2},
  pages={71--92},
  year={2016}
}

@book{thuijsman1997survey,
  title={A survey on optimality and equilibria in stochastic games},
  author={Thuijsman, Frank},
  year={1997},
  publisher={Maastricht University}
}

@article{ma2025rectified,
  author       = {Shaocong Ma and
                  Ziyi Chen and
                  Yi Zhou and
                  Heng Huang},
  title        = {Rectified Robust Policy Optimization for Model-Uncertain Constrained Reinforcement Learning without Strong Duality},
  journal      = {Trans. Mach. Learn. Res.},
  volume       = {2025},
  year         = {2025}
}

@article{li2007reinforcement,
  title={A reinforcement learning (Nash-R) algorithm for average reward irreducible stochastic games},
  author={Li, Jun and Ramachandran, Kandethody and Das, Tapas K},
  journal={Journal of Machine Learning Research},
  year={2007}
}

@article{zinkevich2005cyclic,
  title={Cyclic equilibria in Markov games},
  author={Zinkevich, Martin and Greenwald, Amy and Littman, Michael},
  journal={Advances in neural information processing systems},
  volume={18},
  year={2005}
}

@article{sahabandu2024rl,
  title={RL-ARNE: A reinforcement learning algorithm for computing average reward nash equilibrium of nonzero-sum stochastic games},
  author={Sahabandu, Dinuka and Moothedath, Shana and Allen, Joey and Bushnell, Linda and Lee, Wenke and Poovendran, Radha},
  journal={IEEE Transactions on Automatic Control},
  volume={69},
  number={11},
  pages={7824--7831},
  year={2024},
  publisher={IEEE}
}
%%%%%%%%%%%%%%%%%%%%%%%%%%%%%%%%%%%%%%%%%%%%%%%%%%%%%%%%%%%%%%%%%%%%%%%%%%%%%%

\newpage
\appendix

\onecolumn
\appendix 

\section{Related Works}
We discuss related works in this section. 

\textbf{Non-robust Markov Games.} Markov Games \citep{littman1994markov} provide a foundational mathematical framework for multi-agent sequential decision-making. The majority of early work focused on two-player zero-sum MGs, especially under the discounted-reward setting. Under these settings, the existence of a stationary Nash Equilibrium was established in \citep{shapley1953stochastic} by formulating the problem as a Max-Min problem. It is later extended to multi-player general-sum MGs in \cite{fink1964equilibrium}. The average-reward criterion, while more suitable for systems with long operational horizons, presents greater analytical challenges. Unlike the discounted case, a stationary NE is not generally guaranteed to exist, even in two-player zero-sum games \cite{filar2012competitive}. Instead, NE existence has only been proven under some additional  assumptions on the structure of the game, including irreducibility, zero-sum \citep{guo2008new,hernandez2000zero,guo2003zero,zheng2024zero,nowak1999optimal,iwase1976markov,kuenle2003optimality,tanaka1977continuous,jaskiewicz2001optimality,filar2012competitive}, or single recurrent class under any policy \cite{sobel1971noncooperative,sahabandu2024rl}. However, these non-robust methods cannot be applied to our robust setting. 
 
\textbf{Distributionally Robust MDPs.}
 To address the performance degradation that occurs when a model is mis-specified, (single-agent) distributionally robust MDPs are first developed in \citep{iyengar2005robust,nilim2004robustness} for both finite horizon and infinite horizon discounted reward settings. 
 Studies under the average reward setting are limited until recently. In \cite{wang2023robust,wang2023model}, fundamental understandings of robust average reward are developed under the unichain assumption, and is later extended to more general settings like weakly communicating ones \cite{grand2023beyond,wang2025bellman}. A huge body of robust reinforcement learning for average reward is also developed, mainly focusing on the sample complexity \cite{xu2025finite,xu2025efficient,roch2025finite,Roch2025reduction,chen2025sample,grand2023reducing,chatterjee2023solving,wang2024robust}. However, single-agent robust MDPs with average reward are still not fully studied (e.g., uniqueness or solvability of robust Bellman equation are not clear). These works are all developed for single-agent settings, and do not address the challenges we faced in multi-agent DR-MGs.

\textbf{Distributionally Robust Markov Games.}
Extending distributional robustness to the multi-agent setting is a recent and active area of research. However, the literature has overwhelmingly concentrated on the finite-horizon \citep{ma2023decentralized,shi2024breaking,shi2024sample,jiao_minimax-optimal_2024,li2025sample,blanchet2023double,farhat2025online}  and discounted-reward criteria \citep{zhang2020robust,kardecs2011discounted}. These settings are often more analytically tractable because the influence of future rewards diminishes, simplifying the analysis. However, none of these methods can be extended to the average-reward setting, as we discussed earlier.

\section{Preliminaries}
In this section, we briefly review previous studies of single-agent distributionally robust MDPs. 

\subsection{Discounted Setting}
Under the robust setting, the worst-case performance over the uncertainty set of MDPs is defined through the discounted reward. More specifically, the  robust discounted value function of a policy $\pi$ for a discounted MDP is defined as 
\begin{align}\label{eq:Vdef}
    V^\pi_{\cp,\gamma}(s)\triangleq \min_{\kappa\in\times_{t\geq 0} \mathcal{P}} \mathbb{E}_{\pi,\kappa}\left[\sum_{t=0}^{\infty}\gamma^t   r_t|S_0=s\right],
\end{align}
where $\kappa=(\mathsf P_0,\mathsf P_1...)\in\times_{t\geq 0} \mathcal{P}$.

For robust discounted MDPs, it has been shown that the robust discounted  value function is the unique fixed-point of the robust discounted  Bellman operator \citep{nilim2004robustness,iyengar2005robust,puterman2014markov}:
\begin{align}\label{eq:robustbellmanoperator}
    \mathbf T_\pi V(s)\triangleq \sum_{a\in\mca} \pi(a|s) \left(r(s,a)+\gamma \sigma_{\mathcal{P}^a_s}(V) \right),
\end{align}
where $\sigma_{\mathcal{P}^a_s}(V)\triangleq \min_{p\in\mathcal{P}^a_s} p^\top V$ is the support function of $V$ on $\mathcal{P}^a_s$. Based on the contraction of $\mathbf T_\pi$, robust dynamic programming approaches, e.g., robust value iteration, can be designed \citep{nilim2004robustness,iyengar2005robust}.

The goal is to find the optimal robust policy:
\begin{align}
    \pi^\star=\arg\max_\pi V^\pi_{\cp,\gamma}(s),
\end{align}
and it is shown \citep{iyengar2005robust} that the optimal robust value function $V^{\pi^\star}_{\cp,\gamma}(s)$
 satisfies the robust Bellman optimality equation
\begin{align} 
 V(s)\triangleq \max_a\left\{ \pi(a|s) \left(r(s,a)+\gamma \sigma_{\mathcal{P}^a_s}(V) \right)\right\}. 
\end{align}

\subsection{Average Reward Setting}
For a fixed kernel $\kp_0$, the average reward (or gain) and the bias are defined as 

\begin{align}\label{eq:norob2}
    g_{\kp_0}^{\pi}(s) &\triangleq \liminf_{n\to\infty} \mathbb{E}_{\pi,\kp_0}\left[\frac{1}{n}\sum^{n-1}_{t=0} r_t|S_0=s \right]
\end{align}
and
\begin{align}\label{eq:norob3}
    h_{\kp_0}^{\pi}(s)\triangleq \mathbb{E}_{\pi,\kp_0} \left[\sum^\infty_{t=0}(r_t-g_{\kp_0}^{\pi})|s_0=s \right].
\end{align}

We can view equation \eqref{eq:norob3} as the cumulative difference over time $t$ as $t\rightarrow\infty$. Defining $H^{\pi}_{\kp_0}\triangleq (I-\kp_0^{\pi}+\kp_*^{\pi})^{-1}(I-\kp_*^{\pi})$ as the deviation matrix of $\kp_0^\pi$, then $h_{\kp_0}^{\pi}=H^{\pi}_{\kp_0}r_{\pi}$ by \citep{puterman2014markov}. When the MDP is irreducible, $(g_{\kp_0}^{\pi}, h_{\kp_0}^{\pi})$ is a solution to the following non-robust Bellman equation \cite{feinberg2012handbook}:
\begin{align}\label{eq:norob4}
    h^{\pi}_{\kp_0}(s)=\mathbb{E}_{\pi}\left[r(s,a)-g^{\pi}_{\kp_0}(s)+\sum_{s'\in\mcs}p^a_{s,s'}h_{\kp_0}^\pi(s')  \right].
\end{align}

In the robust setting, we consider the robust average reward, defined as
\begin{align}
    g_\mcp^{\pi}(s) &\triangleq \min_{P\in\mcp} \liminf_{n\to\infty} \mathbb{E}_{\pi,P}\left[\frac{1}{n}\sum^{n-1}_{t=0} r_t|S_0=s \right].
\end{align}
It is shown that under the unichain assumption, the robust average reward is independent from the initial state: $g_\mcp^{\pi}(s_1)=g_\mcp^{\pi}(s_2)$, for any $s_1,s_2$. The robust Bellman equation (for a policy $\pi$) and the robust optimal Bellman equation for the average reward setting are derived in \citep{wang2023robust}: 
\begin{align}
    V(s)+g &=\sum_a\pi(a|s)\big(r(s,a)+\sigma_{\mcp^a_s}(V)\big), \label{eq:9}\\
    V(s)&=\max_a\big\{r(s,a)-g+\sigma_{\mcp^a_s}(V)\big\},\quad \forall s\in\mcs. \label{eq:10}
\end{align}
It is shown \citep{wang2023model} that if the equation \eqref{eq:9} (\eqref{eq:10}) has a solution $(g,V)$, then the solution $g$ is the robust average reward $g^\pi_\cp$ (the optimal robust average reward $g^{*}_\cp=\max_\pi g^\pi_\cp$). However, whether these two equations are solvable is not studied but directly assumed.

\section{Proof of \Cref{lem:example}}
We derive the result by constructing an example. 

Fix $\varepsilon\in(0,1)$. Consider a two-player zero-sum distributionally robust Markov game
\[
(\mcs,\A_1,\A_2,\cp,r_1,r_2),
\]
with state space $\mcs=\{s,\,G,\,B\}$, action sets $\A_1=\{1,2\}$ and $\A_2=\{1,2\}$, and rewards
\[
r_1(G,a_1,a_2)=1,\qquad r_1(B,a_1,a_2)=0,\qquad r_1(s,a_1,a_2)=0,\qquad r_2=-r_1.
\]
The ambiguity set consists of two kernels $p^{(1)}$ and $p^{(2)}$:
\begin{align*}
p^{(1)}(\,\cdot\,|s,(1,1))&=\delta_G,& p^{(1)}(\,\cdot\,|s,(2,2))&=\varepsilon\,\delta_G+(1-\varepsilon)\,\delta_B,\\
p^{(2)}(\,\cdot\,|s,(1,1))&=\varepsilon\,\delta_G+(1-\varepsilon)\,\delta_B,&
p^{(2)}(\,\cdot\,|s,(2,2))&=\delta_G,\\
p^{(1)}(\,\cdot\,|s,(1,2))&=\delta_B,& p^{(1)}(\,\cdot\,|s,(2,1))&=\delta_B,\\
p^{(2)}(\,\cdot\,|s,(1,2))&=\delta_B,& p^{(2)}(\,\cdot\,|s,(2,1))&=\delta_B,
\end{align*}
and, for $x\in\{G,B\}$, $p^{(1)}(\,\cdot\,|x,(a_1,a_2))=p^{(2)}(\,\cdot\,|x,(a_1,a_2))=\delta_x$ (absorbing).
Thus, under any $p\in\{p^{(1)},p^{(2)}\}$ the chain has two closed recurrent classes $\{G\}$ and $\{B\}$ (i.e., it is multi-chain).

Let the initial state be $S_0=s$. For stationary policies $\pi_1\in\DeltaAone$, $\pi_2\in\DeltaAtwo$, we specify them as
\[
\pi_1(1)=p\in[0,1],\qquad \pi_2(1)=q\in[0,1].
\]
Because $G,B$ are absorbing and $r_1\in\{0,1\}$ on $\{G,B\}$, the average reward of player~1 equals the probability (under the chosen kernel) of transitioning from $s$ to $G$ in one step. Thus we have that 
\begin{align*}
U(p,q)
&\triangleq\min_{k\in\{1,2\}}\ \mathbb{P}_{p^{(k)}}\big(S_1=G\,\big|\,S_0=s,\,\pi_1,\pi_2\big)\\
&=\min\big\{\, p\,q+\varepsilon(1-p)(1-q),\ \varepsilon\,p\,q+(1-p)(1-q)\,\big\}.
\end{align*}
Then the robust average reward for player~1 is $U(p,q)$ and for player~2 is $-U(p,q)$.

\begin{lemma}\label[lemma]{lem:supinf}
For $U$ defined above and any $\varepsilon\in(0,1)$,
\[
\sup_{p\in[0,1]}\ \inf_{q\in[0,1]} U(p,q)\ =\ \frac{\varepsilon}{2},
\qquad
\inf_{q\in[0,1]}\ \sup_{p\in[0,1]} U(p,q)\ =\frac{\varepsilon}{1+\varepsilon}.
\]
\end{lemma}

\begin{proof}
Fix $p\in[0,1]$. For $q\in[0,1]$, $U(p,q)$ is the pointwise minimum of two affine functions of $q$, hence concave in $q$. Therefore the infimum over $q$ is attained at an endpoint:
\[
\inf_{q\in[0,1]}U(p,q)
=\min\{\,U(p,0),\,U(p,1)\,\}
=\min\{\varepsilon(1-p),\,\varepsilon p\}
=\varepsilon\min\{p,1-p\}.
\]
Maximizing over $p$ gives
\[
\sup_{p\in[0,1]}\inf_{q\in[0,1]}U(p,q)
=\sup_{p\in[0,1]}\varepsilon\min\{p,1-p\}
=\varepsilon\cdot \frac12
=\frac{\varepsilon}{2}.
\]

Fix $q\in[0,1]$. As a function of $p$, $U(p,q)$ is again the pointwise minimum of two affine functions, hence concave and piecewise linear with a single kink at the solution of equality:
\[
p\,q+\varepsilon(1-p)(1-q)=\varepsilon p\,q+(1-p)(1-q)
\ \Longleftrightarrow\ p=1-q.
\]
Thus
\[
\sup_{p\in[0,1]}U(p,q)
=\max\Big\{\,U(0,q),\ U(1,q),\ U(1-q,q)\,\Big\}
=\max\Big\{\,\varepsilon(1-q),\ \varepsilon q,\ (1+\varepsilon)q(1-q)\,\Big\}.
\]
Hence
\[
\inf_{q\in[0,1]}\ \sup_{p\in[0,1]}U(p,q)
=\min_{q\in[0,1]}\ \max\Big\{\,\varepsilon(1-q),\ \varepsilon q,\ (1+\varepsilon)q(1-q)\,\Big\}.
\]
Denote $f(q)\triangleq \max\Big\{\,\varepsilon(1-q),\ \varepsilon q,\ (1+\varepsilon)q(1-q)\,\Big\}$. Then since $f(q)=f(1-q)$, it suffices to only consider $q\in[0,0.5]$. Clearly, the minimum of $f$ is achieved at $q^*=\frac{\varepsilon}{1+\varepsilon}$, with $f(q^*)=\frac{\varepsilon}{1+\varepsilon}$. 
\end{proof}

\begin{theorem}[Nonexistence of a Nash equilibrium]\label{thm:noNE}
For the robust Markov game above, no (stationary stochastic) Nash equilibrium exists.
\end{theorem}

\begin{proof}
In a two-player zero-sum game, a stationary Nash equilibrium $(p^\star,q^\star)$ is a saddle point:
\[
U(p^\star,q)\ \ge\ U(p^\star,q^\star)\ \ge\ U(p,q^\star)\qquad \forall\,p,q\in[0,1],
\]
which implies
\[
\inf_{q} U(p^\star,q)\ =\ U(p^\star,q^\star)\ =\ \sup_{p} U(p,q^\star).
\]
Consequently,
\[
\sup_{p}\inf_{q}U(p,q)\ \ge\ \inf_{q}U(p^\star,q)\ =\ U(p^\star,q^\star)\ =\ \sup_{p}U(p,q^\star)\ \ge\ \inf_{q}\sup_{p}U(p,q),
\]
so a saddle requires $\sup_{p}\inf_{q}U=\inf_{q}\sup_{p}U$. By Lemma~\ref{lem:supinf}, $\sup_{p}\inf_{q}U=\varepsilon/2$ while $\inf_{q}\sup_{p}U=\frac{\varepsilon}{1+\varepsilon}>\frac{\varepsilon}{2}$, leading to a contradiction. Hence no saddle point exists and no Nash equilibrium exists.
\end{proof}

\section{Irreducible DR-MGs}\label{app-sec:irr}
In this appendix we work in the single-agent setting and fix a stationary policy $\pi$.
The state space $\cS$ and action space $\cA$ are finite.
For each $(s,a)$, the local ambiguity set $\cPsa \subset \Delta(\cS)$ is nonempty, compact and convex, and the overall
ambiguity set is the rectangular product
\[
  \cP = \times_{(s,a) \in \cS \times \cA} \cPsa.
\]
For each $P \in \cP$, we denote by $P^\pi$ the induced Markov kernel on $\cS$:
\[
  P^\pi(s' \mid s) \;=\; \sum_{a \in \cA} \pi(a \mid s) P(s' \mid s,a).
\]
Under Assumption \ref{ass:irr}, $P^\pi$ is irreducible for every stationary $\pi$ and for every $P \in \cP$.

For fixed $P$ and $\pi$, the average reward (gain) and bias are defined as
\[
  \gpiP(s) = \liminf_{n\to\infty} \mathbb{E}^{\pi,P}\!\left[\frac{1}{n} \sum_{t=0}^{n-1} r_t \,\middle|\, S_0 = s \right],
  \qquad
  \hpiP(s)
  = \mathbb{E}^{\pi,P}\!\left[\sum_{t=0}^{\infty} \bigl(r_t - \gpiP\bigr) \,\middle|\, S_0 = s\right].
\]
By irreducibility, $\gpiP(s)$ does not depend on $s$, so we write simply $\gpiP$; moreover $(\gpiP,\hpiP)$ is a solution
to the (non-robust) Poisson/Bellman equation \citep{puterman2014markov}:
\begin{equation}
  \gpiP + \hpiP(s)
  \;=\;
  r^\pi(s) + \sum_{s' \in \cS} P^\pi(s' \mid s)\, \hpiP(s'),
  \quad s \in \cS,
  \label{eq:nonrobust-poisson}
\end{equation}
unique up to an additive constant on the bias.

We define the robust average reward of $\pi$ as
\[
  \gpiPc \;\triangleq\; \min_{P \in \cP} \gpiP.
\]
Compactness of $\cP$ and continuity of $P \mapsto \gpiP$ (see, e.g., \cite{puterman2014markov}, Prop.~8.4.6) imply that the minimum
is attained: there exists at least one \emph{worst-case kernel}
\[
  P^w \in \arg\min_{P \in \cP} \gpiP.
\]

For each state $s$ and action $a$, define the local robust one-step operator
\[
  \sigsa{v}
  \;\triangleq\;
  \min_{p \in \cPsa} p^\top v,
  \qquad v \in \bbR^{\cS},
\]
and the $\pi$-averaged robust Bellman operator
\[
  \sigpisa{v}
  \;\triangleq\;
  \min_{P \in \cP} \sum_{s' \in \cS} P^\pi(s' \mid s) v(s')
  \;=\;
  \min_{p \in \cP_s^\pi} p^\top v,
\]
where $\cP_s^\pi \triangleq \bigl\{\sum_a \pi(a\mid s)\,p^a : p^a \in \cP_s^a\bigr\}$. We also use $r^\pi, \sigma^\pi(v)$ to denote $S$-dimensional vectors, with entry $r^\pi(s)=\sum_a \pi(a|s)r(s,a)$ and $\sigma^\pi(v)(s)=\sum_a \pi(a|s)\sigma_{\cp^a_s}(v)$.

\begin{theorem}
\label{thm:C1}
Consider an uncertainty set $\cP$ satisfying Assumption \ref{ass:irr} and fix a stationary policy $\pi$.
Let $P^w \in \arg\min_{P \in \cP} \gpiP$ be any worst-case kernel, and let $(g^\pi_{P^w},h^\pi_{P^w})$ be the corresponding
(non-robust) gain and bias, i.e.\ a solution to~\eqref{eq:nonrobust-poisson} with $P = P^w$.
Then the pair $(g^\pi_{\cP},h^\pi_{P^w})$ solves the robust Bellman equation
\begin{equation}
  \gpiPc + h(s)
  \;=\;
  r^\pi(s) + \sigpisa{h},
  \qquad s \in \cS,
  \label{eq:robust-bellman-fixed-policy}
\end{equation}
and is unique up to addition of a constant to $h$.
Equivalently, writing $V=h$ and rearranging, $V$ satisfies
\[
  V(s)
  \;=\;
  \sum_{a \in \cA} \pi(a \mid s) \bigl( r(s,a) - \gpiPc + \sigsa{V} \bigr),
  \quad s \in \cS.
\] 
\end{theorem}

\begin{proof}
Fix $\pi$ and let $P^w$ be any worst-case kernel for the robust average reward, so
$\gpiPc = g^\pi_{P^w}$.
By the irreducible average-reward MDP theory \citep{puterman2014markov}, there exists a bias $h^\pi_{P^w}$ such that
\begin{equation}
  g^\pi_{P^w} + h^\pi_{P^w}(s)
  \;=\;
  r^\pi(s) + \sum_{s'} P^w{}^\pi(s' \mid s) h^\pi_{P^w}(s'),
  \qquad s \in \cS.
  \label{eq:C1-Pw}
\end{equation}

Now fix any vector $v \in \bbR^{\cS}$ and define a kernel $P^v \in \cP$ by taking, independently for each state $s$,
\[
  P^{v,\pi}( \cdot \mid s )
  \in \arg\min_{p \in \cP_s^\pi} p^\top v,
\]
which exists by compactness of $\cP_s^\pi$.
In particular, for $v = h^\pi_{P^w}$ we define $P^V \triangleq P^{h^\pi_{P^w}}$ so that
\[
  \sigpisa{h^\pi_{P^w}}
  \;=\;
  \sum_{s'} P^V{}^\pi(s' \mid s)\,h^\pi_{P^w}(s'),
  \qquad s \in \cS.
\]
By Assumption \ref{ass:irr}, $P^V{}^\pi$ is irreducible.

For the kernel $P^V$, denote by $(g^\pi_{P^V},h^\pi_{P^V})$ the corresponding non-robust gain/bias pair. It satisfies
\begin{equation}
  g^\pi_{P^V} + h^\pi_{P^V}(s)
  \;=\;
  r^\pi(s) + \sum_{s'} P^V{}^\pi(s' \mid s)\, h^\pi_{P^V}(s'),
  \qquad s \in \cS.
  \label{eq:C1-PV}
\end{equation}

Rewriting~\eqref{eq:C1-Pw}-\eqref{eq:C1-PV} in vector form yields
\[
  (\mathrm{I} - P^w{}^\pi) h^\pi_{P^w} = r^\pi - g^\pi_{P^w} \e,
  \qquad
  (\mathrm{I} - P^V{}^\pi) h^\pi_{P^V} = r^\pi - g^\pi_{P^V} \e.
\]
Since $P^w$ is worst-case for $\pi$, we have $g^\pi_{P^w} \le g^\pi_{P^V}$, and hence
\begin{equation}
  (\mathrm{I} - P^w{}^\pi) h^\pi_{P^w}
  \;=\; r^\pi - g^\pi_{P^w} \e
  \;\ge\; r^\pi - g^\pi_{P^V} \e
  \;=\; (\mathrm{I} - (P^V)^\pi) h^\pi_{P^V}.
  \label{eq:C1-main-ineq}
\end{equation}
Equivalently,
\[
  (P^w{}^\pi - \mathrm{I}) h^\pi_{P^w}
  \;\le\;
  (P^V{}^\pi - \mathrm{I}) h^\pi_{P^V}.
\]

By the definition of $P^V$ as a minimizer for $h^\pi_{P^w}$ we also have, componentwise,
\begin{equation}
  P^V{}^\pi h^\pi_{P^w} \;\le\; P^w{}^\pi h^\pi_{P^w}.
  \label{eq:C1-minimizer}
\end{equation}

Let $x \triangleq h^\pi_{P^w} - h^\pi_{P^V}$.
Using~\eqref{eq:C1-main-ineq} and adding/subtracting $P^V{}^\pi h^\pi_{P^w}$ we obtain
\begin{align*}
  x
  &= h^\pi_{P^w} - h^\pi_{P^V}
   \;\ge\; P^w{}^\pi h^\pi_{P^w} - P^V{}^\pi h^\pi_{P^V} \\
  &= P^V{}^\pi h^\pi_{P^w} - P^V{}^\pi h^\pi_{P^V} + \bigl( P^w{}^\pi h^\pi_{P^w} - P^V{}^\pi h^\pi_{P^w} \bigr) \\
  &= P^V{}^\pi x + \bigl( P^w{}^\pi h^\pi_{P^w} - P^V{}^\pi h^\pi_{P^w} \bigr).
\end{align*}
By~\eqref{eq:C1-minimizer} the last term is componentwise nonnegative, so
\[
  x \;\ge\; P^V{}^\pi x.
\]

Since $P^V{}^\pi$ is irreducible and stochastic, the inequality $x \ge P^V{}^\pi x$ forces $x$ to be constant, as we prove as follows.

let $s^\star$ be a state with minimal coordinate $x(s^\star) = \min_s x(s)$. Then for any $k \ge 1$,
\[
  x(s^\star)
  \;\ge\; \sum_{s} (P^V{}^\pi)^k(s^\star,s)\,x(s)
  \;\ge\; x(s^\star) \sum_{s} (P^V{}^\pi)^k(s^\star,s) 
  \;=\; x(s^\star),
\]
so all inequalities must be equalities. Thus whenever $(P^V{}^\pi)^k(s^\star,s') > 0$ we must have $x(s') = x(s^\star)$.
By irreducibility, every state is reachable from $s^\star$ under some power of $P^V{}^\pi$, so $x$ is constant:
$x = c \e$ for some $c \in \bbR$, i.e.,
\begin{equation}
  h^\pi_{P^w} = h^\pi_{P^V} + c\,\e.
  \label{eq:C1-bias-shift}
\end{equation}

Using~\eqref{eq:C1-bias-shift} in the Poisson equation~\eqref{eq:C1-PV} for $P^V$ gives
\begin{align*}
  g^\pi_{P^V} + h^\pi_{P^w}(s)
  &= g^\pi_{P^V} + h^\pi_{P^V}(s) + c
   \;=\; r^\pi(s) + \sum_{s'} P^V{}^\pi(s' \mid s)\,h^\pi_{P^V}(s') + c \\
  &= r^\pi(s) + \sum_{s'} P^V{}^\pi(s' \mid s)\,h^\pi_{P^w}(s').
\end{align*}
Equivalently,
\begin{equation}
  g^\pi_{P^V} + h^\pi_{P^w}(s)
  \;=\;
  r^\pi(s) + (P^V{}^\pi h^\pi_{P^w})(s).
  \label{eq:C1-PV-with-hPw}
\end{equation}
By construction of $P^V$ we have
\[
  (P^V{}^\pi h^\pi_{P^w})(s) = \sigpisa{h^\pi_{P^w}}.
\]
On the other hand, from the Poisson equation~\eqref{eq:C1-Pw} for $P^w$,
\[
  g^\pi_{P^w} + h^\pi_{P^w}(s)
  \;=\;
  r^\pi(s) + (P^w{}^\pi h^\pi_{P^w})(s)
  \;\ge\;
  r^\pi(s) + (P^V{}^\pi h^\pi_{P^w})(s)
  \;=\;
  g^\pi_{P^V} + h^\pi_{P^w}(s),
\]
where the inequality uses the minimizing property of $P^V$ for $h^\pi_{P^w}$.
Thus $g^\pi_{P^w} \ge g^\pi_{P^V}$.
Since $P^w$ is worst-case, $g^\pi_{P^w} \le g^\pi_{P^V}$, and we conclude that $g^\pi_{P^w} = g^\pi_{P^V}$.

Combining this equality with~\eqref{eq:C1-PV-with-hPw}, and recalling that
$g^\pi_{P^w} = \gpiPc$, we obtain
\[
  \gpiPc + h^\pi_{P^w}(s)
  \;=\;
  r^\pi(s) + \sigpisa{h^\pi_{P^w}},
\]
which is exactly~\eqref{eq:robust-bellman-fixed-policy} with $h = h^\pi_{P^w}$. Thus, the equation is solvable.

We then show the uniqueness part.

Suppose $(g,h)$ is any solution of~\eqref{eq:robust-bellman-fixed-policy}.
For any kernel $P\in\cp$, we have that
\begin{align}
    \sigma^\pi_s(h)=\sum_a \pi(a|s)\sigma^a_s(h)\leq P^\pi h, 
\end{align}
hence
\begin{align}
    g+h(s)\leq r^\pi(s)+ (P^\pi h)(s), 
\end{align}
or 
\begin{align}
    (I-P^\pi)h\leq r^\pi(s)-g\e.
\end{align}
For this kernel $P$, let $(g^\pi_P,h^\pi_P)$ be the (unique) solution to the non-robust Bellman equation 
\begin{align}
    (I-P^\pi)h^\pi_P=r^\pi-g^\pi_P\e.
\end{align}
We thus have that
\begin{align}\label{eq:31}
    (I-P^\pi)(h-h^\pi_P)\leq (g^\pi_P-g)\e. 
\end{align}
Let $\mu_P$ be the unique stationary distribution of $P^\pi$, and multiply \eqref{eq:31} with $\mu_P^\top$, we have that
\begin{align}
    0=\mu_P^\top  (I-P^\pi)(h-h^\pi_P)\leq (g^\pi_P-g).
\end{align}
Thus $g\leq g^\pi_P$ for any $P\in\cp$, i.e., \begin{align}\label{eq:thm10:<}
    g\leq g^\pi_\cp.
\end{align}

We now construct a kernel $\tilde{P}\in\cp$ as 
\begin{align}
    \tilde{P}(\cdot|s,a)\in \arg\min_{p\in\cp^a_s} ph,
\end{align}
so that $\sigma^a_s(h)=\tilde{P}h(s)$.
Then \eqref{eq:robust-bellman-fixed-policy} becomes 
\begin{align}
    g+h(s)=r^\pi(s)+(\tilde{P}^\pi h)(s), 
\end{align}
i.e., 
\begin{align}
    (I-\tilde{P}^\pi) h = r^\pi -g\e.
\end{align}
This is in fact the Bellman equation for the Markov chain $(\pi,\tilde{P})$. By standard average‑reward MDP theory for irreducible chains, we must have
\begin{align}
    g=g^\pi_{\tilde{P}},     h=h^\pi_{\tilde{P}}+c\e,
\end{align}
for some constant $c$. 

Combining with \eqref{eq:thm10:<} then implies that \begin{align}
    g=g^\pi_{\tilde{P}}=g^\pi_\cp,
\end{align}
i.e., $\tilde{P}$ is a worst-case kernel and 
\begin{align}
    h=h^\pi_{\tilde{P}}+c\e=h^\pi_{P_w}+c\e
\end{align}
for some worst-case kernel $P_w$. 

This hence proves uniqueness (up to constants).

\end{proof}

\begin{corollary}
    The robust Bellman optimality equation 
    \begin{align} 
        V(s)=\max_a\{r(s,a)-g+\sigma_{\cp^a_s}(V)\}
    \end{align} 
 is uniquely solvable. 
\end{corollary}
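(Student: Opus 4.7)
The plan is to apply part (1) of Theorem \ref{thm:solvability} to an optimal robust policy $\pi^*$ and then upgrade the resulting policy-evaluation Bellman equation to the optimal one via a robust policy-improvement argument.

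First, I would argue that $\arg\max_\pi g^\pi_\cp$ is nonempty: the stationary policy space is compact, $\cp$ is compact by Assumption \ref{ass}, and $g^\pi_\kp$ depends continuously on $(\pi,\kp)$ under irreducibility, so $g^\pi_\cp = \min_{\kp\in\cp} g^\pi_\kp$ inherits continuity in $\pi$ by compactness of $\cp$. Fix any $\pi^* \in \arg\max_\pi g^\pi_\cp$ and set $g^* = g^{\pi^*}_\cp$, which is a constant under Assumption \ref{ass}. Applying part (1) of Theorem \ref{thm:solvability} to $\pi^*$ yields a $V^*$ with
\begin{align*}
V^*(s) = \sum_a \pi^*(a|s)\bigl(r(s,a) - g^* + \sigma_{\cp^a_s}(V^*)\bigr),
\end{align*}
and since a convex combination over actions is pointwise dominated by the maximum,
\begin{align*}
V^*(s) \leq \max_a\bigl\{r(s,a) - g^* + \sigma_{\cp^a_s}(V^*)\bigr\}.
\end{align*}

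If equality holds at every state, then $(g^*, V^*)$ already solves the optimal robust Bellman equation and the proof is complete. Otherwise, strict inequality holds at some state $s_0$. Let $\tilde\pi$ be a greedy deterministic policy with respect to $V^*$, and define the Bellman residual $\delta(s) \triangleq r(s,\tilde\pi(s)) - g^* + \sigma_{\cp^{\tilde\pi(s)}_s}(V^*) - V^*(s)$, so that $\delta \geq 0$ pointwise and $\delta(s_0) > 0$. Since $\kp^{\tilde\pi} V^* \geq \sigma^{\tilde\pi}(V^*)$ for every $\kp \in \cp$, we obtain the one-sided inequality $r^{\tilde\pi} + \kp^{\tilde\pi} V^* \geq V^* + g^* e + \delta$ valid for every $\kp$. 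Iterating, dividing by $T$, and letting $T \to \infty$ (the boundary term $(\kp^{\tilde\pi})^T V^*/T$ vanishes, and Cesaro averages converge to stationary-distribution-weighted sums under irreducibility) produces $g^{\tilde\pi}_\kp \geq g^* + \mu_{\tilde\pi,\kp}^\top \delta$, where $\mu_{\tilde\pi,\kp}$ is the unique stationary distribution of the chain induced by $\tilde\pi$ and $\kp$. Minimizing over $\kp\in\cp$ and using compactness then yields $g^{\tilde\pi}_\cp > g^*$, contradicting optimality of $\pi^*$.

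The main obstacle I anticipate is the final step: verifying that the strict improvement at the single state $s_0$ is preserved uniformly over all $\kp \in \cp$ after passing to the worst case. This hinges on compactness of $\cp$, the continuity of $\kp \mapsto \mu_{\tilde\pi,\kp}$ under irreducibility, and the positivity $\mu_{\tilde\pi,\kp}(s_0) > 0$ guaranteed by Assumption \ref{ass}, which together provide a strictly positive lower bound $\min_{\kp\in\cp} \mu_{\tilde\pi,\kp}^\top \delta > 0$. The subsidiary technical issue of establishing the existence of $\pi^*$ (continuity of $g^\pi_\cp$ in $\pi$) relies on the same compactness and irreducibility structure and should be handled in a preliminary lemma.
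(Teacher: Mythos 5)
Your proposal is correct, and it reaches the conclusion by a genuinely different route than the paper. Both proofs start the same way: apply part (1) of Theorem \ref{thm:solvability} to an optimal robust policy $\pi^*$ to get a pair $(g^*,V^*)$ solving the policy-evaluation equation, observe that $V^*(s)\leq \max_a\{r(s,a)-g^*+\sigma_{\cp^a_s}(V^*)\}$, and then show the greedy policy cannot strictly improve. Where you diverge is in how that last step is closed. The paper argues directly: it writes a chain of inequalities linking $g^{\pi'}_\cp$ (for the greedy $\pi'$) to $g^{\pi^*}_\cp$, deduces $h^{\pi^*}_{\kp^*}-h^{\pi'}_{\kp'}\leq \kp'(h^{\pi^*}_{\kp^*}-h^{\pi'}_{\kp'})$, invokes irreducibility to conclude the bias difference is a constant vector, and then back-propagates to force every inequality in the chain to be an equality — so the max is attained without ever needing a contradiction. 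You instead run the classical average-reward policy-improvement argument by contradiction: a strictly positive Bellman residual $\delta$ at some state, pushed through Cesaro averaging of $(\kp^{\tilde\pi})^k$, yields $g^{\tilde\pi}_\kp \geq g^* + \mu_{\tilde\pi,\kp}^\top\delta$ for every $\kp$, and full support of the stationary distribution plus compactness of $\cp$ gives a uniform strict improvement, contradicting optimality. Both arguments lean on Assumption \ref{ass} in essentially the same place (the paper via the ``$x\leq \mu x$ implies $x$ constant'' lemma, you via $\mu_{\tilde\pi,\kp}(s_0)>0$ and continuity of $\kp\mapsto\mu_{\tilde\pi,\kp}$). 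Your version is closer to the textbook single-agent improvement argument and makes the role of irreducibility more transparent, at the cost of the extra analytic bookkeeping (boundary term, uniformity of the improvement over $\cp$); the paper's version is more self-contained algebraically and avoids stationary distributions entirely. You also explicitly justify nonemptiness of $\arg\max_\pi g^\pi_\cp$, which the paper takes for granted — a reasonable addition, though note that Assumption \ref{ass} states irreducibility only for deterministic policies, so your continuity claim for general stationary $\pi$ needs the (easy) observation that irreducibility is inherited by stochastic mixtures of deterministic policies.
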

\begin{proof}
    Denote the optimal robust policy by $\pi^\star$, and the corresponding worst-case kernel by $\kp^\star$. By Theorem \ref{thm:solvability}, $(g^{\pi^\star}_\cp, h^{\pi^\star}_{\kp^\star})$ is the solution to the robust Bellman equation of $\pi^\star$: 
    \begin{align}
        h^{\pi^\star}_{\kp^\star}(s)=r^{\pi^\star}(s)-g^{\pi^\star}_\cp+\sigma^{\pi^\star}_s(h^{\pi^\star}_{\kp^\star})=\sum_a \pi^\star(a|s)\left( r(s,a)-g^{\pi^\star}_\cp+\sigma^a_s(h^{\pi^\star}_{\kp^\star})\right).
    \end{align}
    We now set $\pi'(s)=\arg\max_a \{r(s,a)-g^{\pi^\star}_\cp+\sigma^a_s(h^{\pi^\star}_{\kp^\star})\}, \forall s$, then we have that
    \begin{align}
  r^{\pi'}(s)-h^{\pi'}_{\kp'}(s)+\sigma^{\pi'}_s(h^{\pi'}_{\kp'})= g^{\pi'}_\cp\leq g^{\pi^\star}_\cp&=\sum_a \pi^\star(a|s)\left( r(s,a)-h^{\pi^\star}_{\kp^\star}(s)+\sigma^a_s(h^{\pi^\star}_{\kp^\star})\right)\nn\\
  &\leq r^{\pi'}(s)-h^{\pi^\star}_{\kp^\star}(s)+\sigma^{\pi'}_s(h^{\pi^\star}_{\kp^\star}),\label{eq:23}
    \end{align}
which further implies that 
\begin{align}
  h^{\pi^\star}_{\kp^\star}(s)-h^{\pi'}_{\kp'}(s)\leq \sigma^{\pi'}_s(h^{\pi^\star}_{\kp^\star})-\sigma^{\pi'}_s(h^{\pi'}_{\kp'})\leq \kp'( h^{\pi^\star}_{\kp^\star}-h^{\pi'}_{\kp'}).
    \end{align} 
Now we denote $x= h^{\pi^\star}_{\kp^\star}-h^{\pi'}_{\kp'}$, and take Cesàro summation, it then implies 
\begin{align}
    x\leq \mu x, 
\end{align}
for the stationary distribution matrix $\mu>0$ (as it is irreducible, its stationary distribution has positive entries). Similarly, it implies $x(s)=x(s')$, and thus $h^{\pi^\star}_{\kp^\star}-h^{\pi'}_{\kp'}=ce$. This further implies that all inequalities in \eqref{eq:23} are equations, thus 
\begin{align}
      h^{\pi^\star}_{\kp^\star}(s)=\max_a\left\{ r(s,a)-g^{\pi^\star}_\cp+\sigma^a_s(h^{\pi^\star}_{\kp^\star})\right\},
\end{align}
hence $(h^{\pi^\star}_{\kp^\star},g^{\pi^\star}_{\kp^\star})$ is a solution to \eqref{eq:opt Bellman}, completing the proof of the first part. The remaining proof follows similarly from \citep{wang2023model}.  

The uniqueness part similarly follows from the proof of \Cref{thm:C1}, by noting that $\max_x f(x)-\max_x g(x) \leq \max_x |f(x)-g(x)|$.
\end{proof}

\begin{lemma}
    If for any agent $i$ and any joint policy $\pi_{-i}$, $\text{BR}_i(\pi_{-i})$ is nonempty, convex, and $\text{BR}_i(\cdot)$ is upper semi-continuous, then $\text{BR}$ also satisfies the conditions in Lemma \ref{lem:kakutani}.
\end{lemma}
\begin{proof}
(1). If for any $i$ and $\pi_{-i}$, $\br_i(\pi_{-i})$ is nonempty, then $\br(\pi)=(\br_1(\pi_{-1}),...,\br_N(\pi_{-N}))$ is clearly nonempty.

(2). Consider two policies $\pi^1,\pi^2\in \br(\pi)$. To show the convexity of $\br(\pi)$, we need to show $\pi'=\alpha \pi^1 +(1-\alpha)\pi^2=(\alpha\pi^1_1+(1-\alpha)\pi^2_1,...,\alpha\pi^1_N+(1-\alpha)\pi^2_N)\in \br(\pi)$. 
Note that for any $i$, $\pi^1_i,\pi^2_i\in\br_i(\pi_{-i})$, and due to the convexity of $\br_i(\pi_{-i})$, $\alpha\pi^1_i+(1-\alpha)\pi^2_i\in\br_i(\pi_{-i})$, which completes the proof. 

(3). Consider two convergent policy sequences $\{\pi^n\}\to \pi$ and $\{\mu^n\}\to\mu$, with $\mu^n\in\br(\pi^n)$, we need to show that $\mu\in\br(\pi)$.
Note that $\pi^n \to \pi$ implies that  $\pi^n_i \to \pi_i$ for any $i$, and similarly  $\mu^n \to \mu$. Since $\mu_i^n \in\br_i(\pi_{-i}^n)$, by the semi-continuity of $\br_i$, we have $\mu_i\in\br_i(\pi_{-i})$. Moreover note that 
\begin{align}
    \mu\in\br(\pi) \Leftrightarrow \mu_i \in\br_i(\pi_{-i}), \forall i, 
\end{align}
which hence completes the proof. 
\end{proof}

\subsection{Convexity}

\begin{lemma}[Kakutani's Fixed Point Theorem \citep{kakutani1941generalization}] \label[lemma]{lem:kakutani}
 Let $X$ be a compact convex subset of $\mathbb{R}^n$ and let $f:X\to 2^X$ be a set-valued function for which, (1) $f(x)$ is nonempty and convex, $\forall x$; (2) $f$ is upper semi-continuous (i.e., for any convergent sequences $\{x_n\}\to x$ and $\{y_n\}\to y$ such that $y_n\in f(x_n)$, it holds that $y\in f(x)$). Then there exists some $x^\star\in X$ such that $x^\star\in f(x^\star)$. 
\end{lemma}

\begin{lemma}
\label[lemma]{lem:C2}
Consider the robust Bellman optimality equation
\begin{equation}
  g + h(s) = \max_{a \in \cA} \bigl\{ r(s,a) + \sigsa{h} \bigr\},
  \qquad s \in \cS,
  \label{eq:C2-optimal}
\end{equation}
and let $(g,h)$ be any of its solutions.
Let $\pi^\star$ be an optimal robust policy, i.e.\ $\pi^\star \in \arg\max_{\pi} \gpiPc$.
Then $(g,h)$ also solves the robust Bellman equation associated with $\pi^\star$:
\begin{equation}
  g + h(s) = \sum_{a \in \cA} \pi^\star(a \mid s)\bigl( r(s,a) + \sigsa{h} \bigr),
  \qquad s \in \cS.
  \label{eq:C2-policy}
\end{equation}
\end{lemma}

\begin{proof}
By Theorem~\ref{thm:C1}, any optimal policy $\pi^\star$ admits a worst-case kernel $P^\star\in\cP$ such that
$(g^{\pi^\star}_{P^\star},h^{\pi^\star}_{P^\star})$ solves the fixed-policy equation~\eqref{eq:robust-bellman-fixed-policy}
with $\pi=\pi^\star$.
Moreover $g^{\pi^\star}_{P^\star} = g^\star_{\cP}$ (the optimal robust average reward), and the robust Bellman optimality equation
\eqref{eq:C2-optimal} also has a solution $(g,h)$ with $g = g^\star_{\cP}$.
We must show that $h$ and $h^{\pi^\star}_{P^\star}$ differ only by a constant.

Equation~\eqref{eq:C2-optimal} can be written as
\begin{equation}
  g + h(s) = \max_{a} \bigl\{ r(s,a) + \sigsa{h} \bigr\}.
  \label{eq:C2-opt}
\end{equation}
Optimality of $\pi^\star$ implies that at each state $s$, the support of $\pi^\star(\cdot \mid s)$ lies in the set of maximizers
of the right-hand side:
\[
  \pi^\star(a \mid s) > 0
  \;\Longrightarrow\;
  a \in \arg\max_{a'} \{ r(s,a') + \sigma_s^{a'}(h) \}.
\]
Therefore,
\begin{equation}
  g + h(s)
  = \sum_{a} \pi^\star(a \mid s) \bigl\{ r(s,a) + \sigsa{h} \bigr\},
  \label{eq:C2-eq-pistar}
\end{equation}
i.e., $(g,h)$ is a solution of the fixed-policy robust Bellman equation~\eqref{eq:C2-policy} with $\pi = \pi^\star$.

By Theorem~\ref{thm:C1}, the solution of the fixed-policy robust Bellman equation for $\pi^\star$ is unique up to constants
and has gain $g = g^\star_{\cP} = g^{\pi^\star}_{P^\star}$.
Thus $h$ and $h^{\pi^\star}_{P^\star}$ differ only by an additive constant, and in particular $(g,h)$ is of the form
\[
  g = g^\star_{\cP},\qquad h = h^{\pi^\star}_{P^\star} + c\e.
\]
This hence completes the proof.
\end{proof}

\begin{lemma}[Convexity of optimal policies]
\label[lemma]{lem:3}
Define the optimal policy set
\[
  \Pi^\star \triangleq \bigl\{ \pi \;\big|\; \gpiPc = g^\star_{\cP} \bigr\}.
\]
Then $\Pi^\star$ is convex.
\end{lemma}

\begin{proof}
Let $\pi_1,\pi_2 \in \Pi^\star$ and let $\tilde{\pi} \triangleq \alpha \pi_1 + (1-\alpha)\pi_2$ for some $\alpha \in [0,1]$,
i.e.
\[
  \tilde{\pi}(a \mid s) = \alpha \pi_1(a \mid s) + (1-\alpha)\pi_2(a \mid s), \quad s \in \cS, a \in \cA.
\]

Let $(g,h)$ be any solution of the robust Bellman optimality equation~\eqref{eq:C2-optimal}.
By Lemma~\ref{lem:C2}, $(g,h)$ also solves the fixed-policy robust Bellman equations for $\pi_1$ and $\pi_2$:
\[
  g + h(s) = r^{\pi_1}(s) + \sigma^{\pi_1}_s(h),
  \qquad
  g + h(s) = r^{\pi_2}(s) + \sigma^{\pi_2}_s(h).
\]
Taking the convex combination with weights $(\alpha,1-\alpha)$ yields
\begin{align*}
  g + h(s)
  &= \alpha \bigl( r^{\pi_1}(s) + \sigma^{\pi_1}_s(h) \bigr)
   + (1-\alpha) \bigl( r^{\pi_2}(s) + \sigma^{\pi_2}_s(h) \bigr) \\
  &= \sum_a \tilde{\pi}(a\mid s) r(s,a)
   + \sum_a \tilde{\pi}(a\mid s) \sigsa{h}
   = r^{\tilde{\pi}}(s) + \sigma^{\tilde{\pi}}_s(h).
\end{align*}
Thus $(g,h)$ also solves the fixed-policy robust Bellman equation for $\tilde{\pi}$.
By Theorem~\ref{thm:C1}, the gain of any such solution must be the robust average reward of that policy; hence
\[
 g^{\tilde{\pi}}_{\cP} = g.
\]
Since $g = g^\star_{\cP}$ is the maximal robust average reward (because $\pi_1,\pi_2$ are optimal), we conclude that
$\tilde{\pi}$ is also optimal, i.e.\ $\tilde{\pi} \in \Pi^\star$.
Therefore $\Pi^\star$ is convex.
\end{proof}

The convexity of the best-response sets in the multi-agent case  then follows by applying Lemma~\ref{lem:3}
to the induced single-agent DR-MDPs $\mathsf{M}_i(\pi_{-i})$, as discussed.

\subsection{Semi-Continuity}
\begin{lemma}
\label[lemma]{lem:C4}
Let $\{\pi_n\}$ be a sequence of stationary policies converging to $\pi$ in the product topology. Let $P^{\pi_n}$ and $P^\pi$
denote the induced kernels for a fixed $P \in \cP$, and let $(P^{\pi_n})^\star$ and $(P^\pi)^\star$ be their stationary
distributions. Then
\[
  (P^{\pi_n})^\star \;\to\; (P^\pi)^\star
\]
uniformly in $P \in \cP$.
\end{lemma}

\begin{proof}
This is a direct consequence of the continuity properties of irreducible Markov chains; see \citep{puterman2014markov} (Prop.~8.4.6). Uniformity in $P$ follows from the compactness of $\cP$.
\end{proof}

\begin{lemma}
\label[lemma]{lem:4}
Fix an agent $i$ and let $\pi : \cS \to \Delta(\cA_{-i})$ be a joint policy of all other agents.
Define the operator $T^\pi_{\cP}: \bbR^{\cS} \to \bbR^{\cS \times \cA_i}$ by
\[
  T^\pi_{\cP} h (s,a_i)
  \;\triangleq\;
  r^\pi(s,a_i) + \sigma^{\pi}_{s,a_i}(h),
\]
where
\[
  r^\pi(s,a_i) \triangleq \sum_{a_{-i}} \pi(a_{-i} \mid s)\, r_i\bigl(s,(a_i,a_{-i})\bigr),
\]
and
\[
  \sigma^{\pi}_{s,a_i}(h)
  \triangleq \min_{q \in \cP^{\pi}_{s,a_i}} q^\top h,
  \qquad
  \cP^{\pi}_{s,a_i} \triangleq
  \left\{
    \sum_{a_{-i}} \pi(a_{-i} \mid s) P(\cdot \mid s,(a_i,a_{-i}))
    : P(\cdot \mid s,(a_i,a_{-i})) \in \cP_s^{(a_i,a_{-i})}
  \right\}.
\]
Let $\{\pi_n\}$ be a sequence of opponent policies with $\pi_n \to \pi$, and $\{h_n\}$ a sequence of functions $h_n \to h$
uniformly on $\cS$.
Then
\[
  T^{\pi_n}_{\cP}(h_n) \;\to\; T^\pi_{\cP}(h)
\]
uniformly on $\cS \times \cA_i$.
\end{lemma}
\begin{proof}
    Recalling the definition of $\mct_\mcp$ we can obtain the difference,
    \begin{align}\label{marl-eq:50}
        \big|\big|\mct_\mcp^{\pi_n}(h_n)-\mct_\mcp^\pi(h)\big|\big|_\infty &= \max_{s,a} \big|\mct_{\mcp}^{\pi_n}(h_n)(s,a)-\mct_\mcp^\pi(h)(s,a)\big| \nonumber \\ &= \max_{s,a}\big|r^{\pi_n}(s,a)+\sigma_{(\mcp^{\pi_n})^a_s}(h_n)-\sigma_{(\mcp^\pi)^a_s}(h)-r^\pi(s,a)\big|  \nonumber \\ &\leq \max_{s,a}\big\{|r^{\pi_n}(s,a)-r^\pi(s,a)|+ |\sigma_{(\mcp^{\pi_n})^a_s}(h_n)-\sigma_{(\mcp^{\pi_n})^a_s}(h)| + |\sigma_{(\mcp^{\pi_n})^a_s}(h)-\sigma_{(\mcp^\pi)^a_s}(h)|\big\} \nonumber \\ &\leq ||\pi_n(s)-\pi(s)||_1 +A+B,
    \end{align}
    where $A=|\sigma_{(\mcp^{\pi_n})^a_s}(h_n)-\sigma_{(\mcp^{\pi_n})^a_s}(h)|$ and $B=|\sigma_{(\mcp^{\pi_n})^a_s}(h)-\sigma_{(\mcp^\pi)^a_s}(h)|.$ See that by the Lipchitz of $\sigma_\mcp(\cdot)$, $A\leq ||h_n-h||_\infty$. Using the fact that $(\mcp^{\pi_n})^a_s= \big\{\sum_{b\in\mca_{-i}}\pi_n(b|s)\kp(\cdot|s,a,b)\big\}$, looking at $B$ we derive,
    \begin{align*}
        B &=|\sigma_{(\mcp^{\pi_n})^a_s}(h)-\sigma_{(\mcp^\pi)^a_s}(h)| \\ &= |\min_{q\in(\mcp^{\pi_n})^a_s}qh- \min_{q\in(\mcp^\pi)^a_s}qh| \\ &= |\min_{q\in\mcp^a_s}q^{\pi_n}h - \min_{q\in\mcp^a_s}q^\pi h| \\ &= \bigg|\min_{q\in\mcp^a_s}\Big(\sum_{s'\in\mcs} \sum_{b\in\mca_{-i}}\pi_n(b|s)q(s'|s,a,b)h(s')\Big) - \min_{q\in\mcp^a_s}\Big(\sum_{s'\in\mcs}\sum_{b\in\mca_{-i}}\pi(b|s)q(s'|s,a,b)h(s')\Big)\bigg| \\ &\leq \bigg|\sum_{s'\in\mcs}\sum_{b\in\mca_{-i}}\Big(\pi_n(b|s)-\pi(b|s)\Big) q(s'|s,a,b)h(s')\bigg|, \quad\quad \text{for some }q\in\mcp^a_s, \\ &\leq ||\pi_n(\cdot|s)-\pi(\cdot|s)||_1 \cdot H,
    \end{align*}
    where $H=||h||_\infty.$ We can now combine the terms in $\eqref{marl-eq:50}$ to find,
    \begin{align*}
        ||\pi_n(s)-\pi(s)||_1 +A+B \leq \max_{s\in\mcs}\big\{||\pi_n(s)-\pi(s)||_1+||h_n-h||_\infty+H\cdot ||\pi_n(s)-\pi(s)||_1\big\}.
    \end{align*}
    Note that $\pi_n(s)\rightarrow\pi(s), \ \forall s\in\mcs$, then $\forall\epsilon, \ \exists N_s$ s.t. $\forall n> N_s,$ $||\pi_n(s)-\pi(s)||_1\leq \epsilon$. Similarly with $h_n(s)\rightarrow h(s), \ \forall s\in\mcs$, then $\forall \epsilon, \ \exists N_s$ s.t. $\forall n> N_s, \ |h_n(s)-h(s)|\leq \epsilon.$ Now by setting $N=\max_{s\in\mcs}\{N_s\}$, then $\forall n> N,$
    \begin{align*}
        ||\pi_n(s)-\pi(s)||_1\leq\epsilon,\quad\text{and}\quad |h_n(s)-h(s)|\leq \epsilon,\quad \forall s\in\mcs.
    \end{align*}
    Therefore, when $n>N,$ $||\pi_n(s)-\pi(s)||_1 +A+B\leq\epsilon+\epsilon+H\epsilon \lesssim \epsilon$. This implies that $\mct^n_\mcp\rightarrow\mct_\mcp$ uniformly on $\mcs\times\mca$, which completes the proof.
\end{proof}

\begin{theorem}[Semi-continuity] Let $\pi_n\rightarrow\pi$ and $x_n\rightarrow x$, where $\pi_n:\mcs\rightarrow\Delta(\mca_{-i})$ and $x_n:\mcs\rightarrow\Delta(\mca_i)$. If $x_n\in \br_i(\pi_n),$ then $x\in\br_i(\pi).$
\end{theorem}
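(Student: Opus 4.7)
The plan is to pass to the limit in the optimal robust Bellman equations associated with the induced single-agent DR-MDPs $\mathcal{M}(\mcmg, i, \pi_n)$, using the uniform convergence established in Lemma \ref{lemma:4} and the characterization of optimal policies from Lemma \ref{lemma:1}.

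First, I would apply Theorem \ref{thm:solvability} to each induced DR-MDP $\mathcal{M}(\mcmg, i, \pi_n)$ to obtain a pair $(g_n, h_n)$ solving the optimal robust Bellman equation
\begin{align}
g_n + h_n(s) = \max_{a_i}\big\{r^{\pi_n}(s,a_i) + \sigma_{(\cp^{\pi_n})^{a_i}_s}(h_n)\big\}.\nn
\end{align}
Since $x_n$ is a best response to $\pi_n$, Lemma \ref{lemma:1} guarantees that the same pair also satisfies the policy-evaluation Bellman equation for $x_n$:
\begin{align}
g_n + h_n(s) = \sum_{a_i} x_n(a_i|s)\big\{r^{\pi_n}(s,a_i) + \sigma_{(\cp^{\pi_n})^{a_i}_s}(h_n)\big\}.\nn
\end{align}
By translation-invariance of the equation in $h_n$, I can normalize $h_n(s_0) = 0$ for a fixed reference state $s_0$. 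Because rewards lie in $[0,1]$, the gains $g_n$ are uniformly bounded in $[0,1]$, so a subsequence converges to some $g$.

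Next, I would argue that the normalized biases $\{h_n\}$ are uniformly bounded. The irreducibility Assumption \ref{ass} applied uniformly over the compact set $\cp$ gives a uniform geometric mixing rate, hence a uniform bound on the span seminorm of the bias across all kernels and all (induced) policies; this carries over to the induced MDPs since the induced uncertainty sets are built from the same $\cp$. Passing to a further subsequence if necessary, $h_n \to h$ pointwise (and thus in sup-norm on the finite state space). This uniform boundedness argument, tying together irreducibility, compactness of $\cp$, and the induced-MDP construction, is the step I expect to be the most delicate, since non-uniqueness of Bellman solutions means one must choose a stable normalization a priori.

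With the convergent triple $(g_n, h_n, x_n) \to (g, h, x)$ in hand, I would apply Lemma \ref{lemma:4} to both the max form and the $x_n$-weighted form of the Bellman equation. Uniform convergence of $\mct^{\pi_n}_\cp(h_n) \to \mct^\pi_\cp(h)$ on $\mcs \times \mca_i$ lets me take the limit inside the maximum and inside the convex combination (using $x_n \to x$ for the latter), yielding
\begin{align}
g + h(s) &= \max_{a_i}\big\{r^\pi(s,a_i) + \sigma_{(\cp^\pi)^{a_i}_s}(h)\big\},\nn\\
g + h(s) &= \sum_{a_i} x(a_i|s)\big\{r^\pi(s,a_i) + \sigma_{(\cp^\pi)^{a_i}_s}(h)\big\}.\nn
\end{align}
The first identity, together with Theorem \ref{thm:solvability}, identifies $g$ as the optimal robust average reward of $\mathcal{M}(\mcmg, i, \pi)$; the second identity shows that $(g,h)$ solves the policy-evaluation Bellman equation for $x$, hence $g^x_{\cp^\pi} = g = g^*_{\cp^\pi}$. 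Therefore $x \in \br_i(\pi)$, completing the proof. Finally, because every subsequential limit of $\{x_n\}$ must lie in $\br_i(\pi)$ by this argument, the conclusion holds for the original convergent sequence without ambiguity.
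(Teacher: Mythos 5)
Your proof is correct, and it follows the same overall skeleton as the paper's (extract a uniformly bounded sequence of bias vectors $h_n$, pass to a convergent subsequence, and use the uniform convergence of $\mct^{\pi_n}_\mcp(h_n)\to\mct^{\pi}_\mcp(h)$ from Lemma \ref{lemma:4} to take limits inside the Bellman equation), but it diverges in the one step that actually certifies optimality of the limit policy $x$. The paper only passes to the limit in the \emph{optimal} Bellman equation, obtaining $\lim_m g^{(x_{n_m},\pi_{n_m})}_\mcp=\max_a\{\mct^\pi_\mcp(h)\}$, and then separately proves that the left-hand side equals $g^{(x,\pi)}_\mcp$ by invoking Lemma \ref{lemma:3} (continuity of the limiting matrices $\kp^*_{(x_n,\pi_n)}\to\kp^*_{(x,\pi)}$, which allows interchanging $\lim$ and $\min_{\kp\in\cp}$). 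You instead pass to the limit in \emph{two} equations simultaneously — the optimal equation and, via Lemma \ref{lemma:1}, the policy-evaluation equation for $x_n$ — and then identify the common limit gain $g$ as both $g^*_{\cp^\pi}$ and $g^{x}_{\cp^\pi}$ using the converse solvability result quoted in the paper's preliminaries (any solution $(g,V)$ of the evaluation/optimal equation has $g$ equal to the corresponding robust average reward). This buys you independence from Lemma \ref{lemma:3} and from any continuity argument for the gain functional in the policy, at the price of needing Lemma \ref{lemma:1} and the converse characterization; both ingredients are available in the paper, so the substitution is legitimate. Two minor points: your uniform bound on the normalized biases $h_n$ is asserted rather than proved (the paper is equally terse here, writing $\|h_n\|_\infty\le H$ because "$h_n$ is some relative function"), and you should make sure the normalization $h_n(s_0)=0$ is compatible with using the \emph{same} pair $(g_n,h_n)$ in both the optimal and the evaluation equations — it is, since both equations are invariant under adding a constant to $h_n$, but this deserves a sentence.
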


\begin{proof}
    To show that $x\in\br_i(\pi),$ it suffices to show that $x$ is an optimal policy of the induced AMDP $(\mcs,\mca_i,\mcp^\pi,r^\pi).$ Recalling the robust Bellman equation, it is also sufficient to show that $g^{(x,\pi)}_\mcp$ satisfies the following:
    \begin{align}
        g(s)=\max_{a\in\mca_i}\{r^\pi(s,a)+\sigma_{(\mcp^\pi)^a_s}(h)-h(s)\}= \max_{a\in\mca_i}\{\mct^\pi_{\mcp_s^a}(h)\}.
    \end{align}
    See that since $x_n\in\br_i(\pi_n)$, then $x_n$ is the optimal policy of the induced AMDP $(\mcs,\mca_i,\mcp^{\pi_n}, r^{\pi_n}).$ Thus, there exists $h_n$, s.t. $(h_n,g_n=g^{x_n}_{\mcp^{\pi_n}})$ is a solution to, 
    \begin{align*}
        g^{x_n}_{\mcp^{\pi_n}}= g_n(s)=\max_{a}\{r^{\pi_n}(s,a)+\sigma_{(\mcp^{\pi_n})^a_s}(h_n)-h_n(s)\}.
    \end{align*}
    See that $g^{x_n}_{\mcp^{\pi_n}}=\min_{\kp^{\pi_n}\in\mcp^{\pi_n}}g^{x_n}_{_{\kp^{\pi_n}}, r^{\pi_n}}=\min_{\kp\in\mcp}g^{(x_n,\pi_n)}_{\kp,n}$, and so $g^{x_n}_{\mcp^{\pi_n}}=g^{\pi_n,x_n}_\mcp$. Therefore, for any $(x_n,\pi_n)$, there exists some $h_n\in\mathbb{R}^\mcs$ such that for all $s\in\mcs$,
    \begin{align}\label{marl-eq:52}
        g^{(x_n,\pi_n)}_{\mcp_s} (s)&= \max_a \{r^{\pi_n}(s,a)+\sigma_{(\mcp^{\pi_n})^a_s}(h_n)-h_n(s)\} \nonumber \\ &=\max_a\{\mct^{\pi_n}_{\mcp_s^a}(h_n)\}.
    \end{align}
    Moreover, since $h_n$ is some relative function \citep{wang2023robust}, $||h_n||_\infty\leq H.$ Then, there exists some subsequence $\{n_m\},$ s.t. $h_{n_m}\rightarrow h$ for some $h$, and $g^{(x_n,\pi_n)}_\mcp\rightarrow g$ for some $g$. Then, we can take the limit as $m\rightarrow\infty$ in \eqref{marl-eq:52} to find:
    \begin{align}\label{marl-eq:53}
        \lim_{m\rightarrow\infty} g^{(x_{n_m},\pi_{n_m})}_\mcp &= \lim_{m\rightarrow\infty}\max_a\big\{\mct_\mcp^{\pi_{n_m}}(h_{n_m})\big\} \nonumber \\ &\overset{(i)}{=} \max_a\big\{\lim_{m\rightarrow\infty}\mct_\mcp^{\pi_{n_m}}(h_{n_m})\big\} \nonumber \\ &= \max_a \big\{\mct_\mcp^\pi(h)\big\},
    \end{align}
    where $(i)$ holds by the uniform convergence in Lemma \ref{lem:4}, and the final equality by $\pi_{n_m}\rightarrow\pi$, and $h_{n_m}\rightarrow h$. Looking at the left-hand side of equation \eqref{marl-eq:53} we find
    \begin{align*}
        \lim_{m\rightarrow\infty}g^{(x_{n_m},\pi_{n_m})}_\mcp &= \lim_{m\rightarrow\infty}\min_{\kp\in\mcp}g^{(x_{n_m},\pi_{n_m})}_\kp \\ &=\lim_{m\rightarrow\infty}\min_{\kp\in\mcp} \kp^\star_{(x_{n_m},\pi_{n_m})}\cdot r^{(x_{n_m},\pi_{n_m})}.
    \end{align*}
    By \cite{puterman2014markov} and \cite{wang2023robust}, we have $\kp^\star_{(x_{n_m},\pi_{n_m})}\rightarrow\kp^\star_{(x,\pi)}$ uniformly. Thus, $\lim\min=\min\lim$ implies,
    \begin{align*}
        \lim_{m\rightarrow\infty}g_\mcp^{(x_{n_m},\pi_{n_m})} &= \min_{\kp\in\mcp}\lim_{m\rightarrow\infty}g^{(x_{n_m},\pi_{n_m})}_\kp \\ &=\min_{\kp\in\mcp}g^{(x,\pi)}_\kp \\ &= g^{(x,\pi)}_\mcp.
    \end{align*}
    Therefore, $g^{(x,\pi)}_\mcp=\max_a\big\{\mct_\mcp^\pi(h)\big\}$ for some $h\in\mathbb{R}^\mcs$, which completes the proof.
\end{proof}

\section{Weakly Communicating DR-MGs}\label{sec:WC proofs}
Throughout this section we fix a player $i\in\mathcal N$ and a stationary joint
policy $\pi_{-i}$ of the other players, and work with the induced single-agent
distributionally robust MDP
\[
  \mathsf{M}_i(\pi_{-i}) \;=\; (\mathcal S, \mathcal A_i, \mathcal P^{\pi_{-i}}, r_i^{\pi_{-i}}).
\]
All quantities defined below (transition sets, Bellman solutions, etc.)
depend on $(i,\pi_{-i})$, but we often suppress this dependence in the
notation to lighten the presentation.

The stage reward $r_i^{\pi_{-i}}(s,a_i)$ is bounded in $[0,1]$ by assumption.
The induced ambiguity set $\mathcal P^{\pi_{-i}}$ inherits $(s,a_i)$–rectangularity
from the original game.

For a stationary policy $\pi_i : \mathcal S\to\Delta(\mathcal A_i)$ of player $i$
and a kernel $P\in\mathcal P^{\pi_{-i}}$, let $g^{\pi_i}_P$ denote the long-run
average reward of $\pi_i$ in $\mathsf{M}_i(\pi_{-i})$ under $P$ (which exists by
standard finite-state average-reward theory).
We define the robust worst-case average reward and the robust best-response set
\begin{align}
  g^{\pi_i}_{\mathcal P^{\pi_{-i}}}
  &\;\triangleq\;
    \min_{P\in\mathcal P^{\pi_{-i}}} g^{\pi_i}_P,
    \label{eq:robust-average-reward} \\
  BR_i(\pi_{-i})
  &\;\triangleq\;
    \Bigl\{ \pi_i : 
      g^{\pi_i}_{\mathcal P^{\pi_{-i}}}
      \;=\;
      \sup_{\mu: \mathcal S\to\Delta(\mathcal A_i)}
      \min_{P\in\mathcal P^{\pi_{-i}}} g^\mu_P
    \Bigr\}.
    \label{eq:robust-best-response}
\end{align}

\begin{lemma}\label[lemma]{lem:unique}
    Let $(g^*, h_1)$ and $(g^*, h_2)$ be two solutions to the Robust Bellman Optimality Equation in \eqref{eq:opt Bellman}. Under the unichain assumption \cite{wang2023robust}, $h_1(s) - h_2(s) = c$ for some constant $c$, for all $s \in \mathcal{S}$. Namely, the equation has a unique solution up to some constant. 
\end{lemma}
\begin{proof}
    Define the difference function $\Delta(s) = h_1(s) - h_2(s)$.

Let $\pi_1$ be a greedy policy for $h_1$, i.e., $\pi_1(s)\in\arg\max_{a}\{ r(s,a)+\sigma_{\cp^a_s}(h_1)\}$. Let $\nu_1$ be the worst-case nature kernel for $h_1$ under $\pi_1$. It then holds that 
\begin{equation} \label{eq:h1}
    g^* + h_1(s) = r(s, \pi_1(s)) + \sum_{s'} p(s'|s, \pi_1(s), \nu_1) h_1(s')
\end{equation}
On the other hand, since $h_2$ is also a solution to the optimal equation, it must satisfy the maximization condition. Specifically for policy $\pi_1$:
\begin{equation}
    g^* + h_2(s)=\max_{a}\{ r(s,a)+\sigma_{\cp_s^a}(h_2)\} \ge r(s, \pi_1(s)) + \min_{p \in \mathcal{P}(s, \pi_1(s))} \sum_{s'} p(s'|s, \pi_1(s)) h_2(s')
\end{equation}
Let $\nu_{1,2}$ be the nature kernel that achieves the minimization for $h_2$ given $\pi_1$. Then:
\begin{equation} \label{eq:h2_ineq}
    g^* + h_2(s) \ge r(s, \pi_1(s)) + \sum_{s'} p(s'|s, \pi_1(s), \nu_{1,2}) h_2(s')
\end{equation}
Returning to (\ref{eq:h1}), since $\nu_1$ minimizes $h_1$, replacing it with $\nu_{1,2}$ yields a value $\ge$ the minimum:
\begin{equation} \label{eq:h1_relaxed}
    g^* + h_1(s) \le r(s, \pi_1(s)) + \sum_{s'} p(s'|s, \pi_1(s), \nu_{1,2}) h_1(s')
\end{equation}
Subtracting (\ref{eq:h2_ineq}) from (\ref{eq:h1_relaxed}):
\begin{equation}
    h_1(s) - h_2(s) \le \sum_{s'} p(s'|s, \pi_1(s), \nu_{1,2}) (h_1(s') - h_2(s'))
\end{equation}
Let $P_A$ be the transition matrix induced by $(\pi_1, \nu_{1,2})$. We have:
\begin{equation} \label{eq:subharmonic}
    \Delta \le P_A \Delta
\end{equation}
Thus, $\Delta$ is sub-harmonic with respect to $P_A$.

By symmetry, let $\pi_2$ be optimal for $h_2$, and let $\nu_{2,1}$ minimize $h_1$ under $\pi_2$. We similarly derive:
\begin{equation} \label{eq:superharmonic}
    \Delta \ge P_B \Delta
\end{equation}
where $P_B$ is induced by $(\pi_2, \nu_{2,1})$. Thus, $\Delta$ is super-harmonic with respect to $P_B$.

By assumption, both $P_A$ and $P_B$ define unichain  Markov chains.  Since $\Delta \le P_A \Delta$, $\Delta$ cannot attain a local maximum on the recurrent class of $P_A$ strictly greater than its neighbors. By the Maximum Principle, $\Delta(s)$ must be constant on the recurrent class $\mathcal{C}_A$ of $P_A$. Let this value be $M = \max_{s} \Delta(s)$.

On the other hand, since $\Delta \ge P_B \Delta$, $\Delta$ cannot attain a local minimum on the recurrent class of $P_B$ strictly lower than its neighbors. By the Minimum Principle, $\Delta(s)$ must be constant on the recurrent class $\mathcal{C}_B$ of $P_B$. Let this value be $m = \min_{s} \Delta(s)$.

Assume $M > m$.
From (\ref{eq:subharmonic}), $\Delta(s)$ is non-decreasing in expectation under $P_A$. Trajectories under $P_A$ eventually enter $\mathcal{C}_A$ (where $\Delta=M$). Thus for any transient state $s$, $\Delta(s) \le M$.
From (\ref{eq:superharmonic}), $\Delta(s)$ is non-increasing in expectation under $P_B$. Trajectories under $P_B$ eventually enter $\mathcal{C}_B$ (where $\Delta=m$).

The unichain assumption implies the graph is connected. However, if $M > m$, the gain $g^*$ would necessarily differ or one policy would exploit the transition to the region of higher bias, contradicting the optimality of the policies for the fixed scalar $g^*$.
Specifically, if $\Delta$ is not constant, the condition $\Delta \le P_A \Delta$ and $\Delta \ge P_B \Delta$ combined with $g^*$ being unique forces:
\begin{equation}
    P_A \Delta = \Delta \quad \text{and} \quad P_B \Delta = \Delta
\end{equation}
For a harmonic function on a weakly communicating chain, the values on transient states are convex combinations of values on the recurrent set. Since the recurrent sets have constant values $M$ (and $m$ respectively), and the global gain constraint prevents flow from low to high regions without penalty, we must have $M=m$.

Therefore, $\Delta(s) = c$ for all $s \in \mathcal{S}$, which completes the proof.
\end{proof}

\subsection{Constant-Gain Robust Bellman Equations}
\label{subsec:const-gain}

Under Assumption \ref{ass:weak-comm} (weakly communicating DR-MG), the induced DR-MDP
$\mathsf{M}_i(\pi_{-i})$ is a finite weakly communicating robust MDP with
$(s,a_i)$–rectangular ambiguity. We then therefore invoke the
constant-gain robust Bellman theory of \citep{wang2025bellman}.

\begin{lemma}[\citep{wang2025bellman}]
\label[lemma]{lem:const-gain}
There exists a pair $(h^\star,\alpha^\star)\in\mathbb R^{\mathcal S}\times\mathbb R$
with $\alpha^\star\in[0,1]$ such that the following hold for every $s\in\mathcal S$:
\begin{align}
  h^\star(s)
  &= \sup_{\phi\in\Delta(\mathcal A_i)}
      \inf_{p_s\in\bigl(\mathcal P^{\pi_{-i}}\bigr)_s}
      \sum_{a_i\in\mathcal A_i}
      \phi(a_i)
      \bigl(
        r_i^{\pi_{-i}}(s,a_i)
        - \alpha^\star
        + p_{s,a_i}^\top h^\star
      \bigr),
  \label{eq:const-gain-sup-inf} \\
  &= \inf_{p_s\in\bigl(\mathcal P^{\pi_{-i}}\bigr)_s}
     \sup_{\phi\in\Delta(\mathcal A_i)}
      \sum_{a_i\in\mathcal A_i}
      \phi(a_i)
      \bigl(
        r_i^{\pi_{-i}}(s,a_i)
        - \alpha^\star
        + p_{s,a_i}^\top h^\star
      \bigr).
  \label{eq:const-gain-inf-sup}
\end{align}
Moreover, $\alpha^\star$ coincides with the optimal robust average reward:
\[
  \alpha^\star
  \;=\;
  \sup_{\pi_i: \mathcal S\to\Delta(\mathcal A_i)}
  \min_{P\in\mathcal P^{\pi_{-i}}} g^{\pi_i}_P.
\]
\end{lemma}

The proof is an application of Wang and Si~(2025) to the induced DR-MDP
$\mathsf{M}_i(\pi_{-i})$ and is omitted here.

For each $s\in\mathcal S$ we introduce the local saddle Lagrangian
\begin{equation}
  L_s(\phi,p_s)
  \;\triangleq\;
  \sum_{a_i\in\mathcal A_i}
  \phi(a_i)
  \bigl(
    r_i^{\pi_{-i}}(s,a_i) - \alpha^\star + p_{s,a_i}^\top h^\star
  \bigr),
  \quad
  \phi\in\Delta(\mathcal A_i),
  \ p_s\in\bigl(\mathcal P^{\pi_{-i}}\bigr)_s.
  \label{eq:L-def}
\end{equation}
We then define the local player and adversary value functions
\begin{align}
  F_s(\phi)
  &\;\triangleq\;
    \inf_{p_s\in\bigl(\mathcal P^{\pi_{-i}}\bigr)_s} L_s(\phi,p_s),
    \label{eq:F-def} \\
  \Psi_s(p_s)
  &\;\triangleq\;
    \sup_{\phi\in\Delta(\mathcal A_i)} L_s(\phi,p_s).
    \label{eq:Psi-def}
\end{align}
We then derive the following results.
\begin{lemma}
\label[lemma]{lem:F-Psi-properties}
For each state $s\in\mathcal S$:
\begin{enumerate}
  \item $F_s$ is concave and upper semicontinuous on the compact convex set
        $\Delta(\mathcal A_i)$.
  \item $\Psi_s$ is convex and lower semicontinuous on
        $\bigl(\mathcal P^{\pi_{-i}}\bigr)_s$.
  \item With $(h^\star,\alpha^\star)$ from Lemma~\ref{lem:const-gain},
        \begin{equation}
          h^\star(s)
          \;=\;
          \max_{\phi\in\Delta(\mathcal A_i)} F_s(\phi)
          \;=\;
          \min_{p_s\in\bigl(\mathcal P^{\pi_{-i}}\bigr)_s}
          \Psi_s(p_s).
          \label{eq:u-star-max-min}
        \end{equation}
\end{enumerate}
\end{lemma}

\begin{proof}
(1) For fixed $p_s$, the map
\(
  \phi\mapsto L_s(\phi,p_s)
\)
is affine (hence continuous and both concave and convex) on the simplex
$\Delta(\mathcal A_i)$. The function $F_s$ is the pointwise infimum of this
family of continuous affine functions; infima of affine functions are
concave, and infima of continuous functions are upper semicontinuous.

(2) For fixed $\phi$, the map $p_s\mapsto L_s(\phi,p_s)$ is affine (and
continuous) on the convex set $\bigl(\mathcal P^{\pi_{-i}}\bigr)_s$. The
function $\Psi_s$ is the pointwise supremum of a family of continuous affine
functions; such a supremum is convex and lower semicontinuous.

(3) Comparing \eqref{eq:L-def}–\eqref{eq:Psi-def} with
\eqref{eq:const-gain-sup-inf}–\eqref{eq:const-gain-inf-sup} we see that
\begin{align*}
  h^\star(s)
  &= \sup_{\phi\in\Delta(\mathcal A_i)}
     \inf_{p_s\in\bigl(\mathcal P^{\pi_{-i}}\bigr)_s}
     L_s(\phi,p_s)
   \;=\;
     \sup_{\phi\in\Delta(\mathcal A_i)} F_s(\phi), \\
  h^\star(s)
  &= \inf_{p_s\in\bigl(\mathcal P^{\pi_{-i}}\bigr)_s}
     \sup_{\phi\in\Delta(\mathcal A_i)}
     L_s(\phi,p_s)
   \;=\;
     \inf_{p_s\in\bigl(\mathcal P^{\pi_{-i}}\bigr)_s}
     \Psi_s(p_s).
\end{align*}
Since $F_s$ is upper semicontinuous on the compact set
$\Delta(\mathcal A_i)$, its maximum is attained. Likewise $\Psi_s$ attains
its minimum on the compact set $\bigl(\mathcal P^{\pi_{-i}}\bigr)_s$.
\end{proof}

We now define the set of statewise Bellman maximizers
\begin{equation}
  M_s
  \;\triangleq\;
  \arg\max_{\phi\in\Delta(\mathcal A_i)} F_s(\phi),
  \qquad s\in\mathcal S.
  \label{eq:Ms-def}
\end{equation}
By Lemma~\ref{lem:F-Psi-properties}, each $M_s$ is nonempty, convex, and
compact.

\subsection{A Common Supporting Minimizer}
\label{subsec:common-supporting-minimizer}

\begin{lemma}
\label[lemma]{lem:common-supporting-minimizer}
For each $s\in\mathcal S$ there exists
\[
  p_s^\star \in \arg\min_{p_s\in\bigl(\mathcal P^{\pi_{-i}}\bigr)_s} \Psi_s(p_s)
\]
such that
\begin{equation}
  h^\star(s)
  \;=\; \Psi_s(p_s^\star)
  \;=\; L_s(\phi,p_s^\star)
  \;=\; F_s(\phi),
  \quad
  \forall\,\phi\in M_s.
  \label{eq:common-supporting-minimizer}
\end{equation}
\end{lemma}

\begin{proof}
From Lemma~\ref{lem:F-Psi-properties}(2) and compactness of
$\bigl(\mathcal P^{\pi_{-i}}\bigr)_s$, $\Psi_s$ attains its minimum; pick
$p_s^\star\in\bigl(\mathcal P^{\pi_{-i}}\bigr)_s$ such that
\[
  \Psi_s(p_s^\star) = \min_{p_s} \Psi_s(p_s) = h^\star(s).
\]
Now let $\phi\in M_s$. Then $F_s(\phi)=h^\star(s)$ by definition of $M_s$ and
\eqref{eq:u-star-max-min}. We have the chain of inequalities
\[
  h^\star(s)
  = F_s(\phi)
  = \inf_{p_s} L_s(\phi,p_s)
  \;\le\; L_s(\phi,p_s^\star)
  \;\le\; \sup_{\tilde\phi} L_s(\tilde\phi,p_s^\star)
  = \Psi_s(p_s^\star)
  = h^\star(s).
\]
Thus all inequalities are equalities, which yields
\[
  F_s(\phi) = L_s(\phi,p_s^\star) = \Psi_s(p_s^\star) = h^\star(s)
\]
for every $\phi\in M_s$.
\end{proof}

Because the ambiguity set is $(s,a_i)$–rectangular, the global collection
$p^\star \triangleq \{p_s^\star\}_{s\in\mathcal S}$ belongs to 
$\mathcal P^{\pi_{-i}}$: we choose independently, for each state $s$, a row
block $p_s^\star$ from $\bigl(\mathcal P^{\pi_{-i}}\bigr)_s$, and rectangularity
ensures that this rowwise product is feasible.

Under Assumption \ref{ass:weak-comm}, $\mathsf{M}_i(\pi_{-i})$ is weakly communicating for every kernel
$P\in\mathcal P^{\pi_{-i}}$, in particular for $P=p^\star$. By the standard
structure theorem for weakly communicating MDPs
(see, e.g., \citep{puterman2014markov}, Chap.~8), there exists a decomposition
\[
  \mathcal S
  \;=\;
  \mathcal S_0 \,\dot\cup\, \mathcal C_{p^\star},
\]
where:
\begin{itemize}
  \item States in $\mathcal S_0$ are transient under any stationary policy
        of player $i$ (combined with the fixed opponents $\pi_{-i}$ and kernel
        $p^\star$).
  \item The set $\mathcal C_{p^\star}$ is the union of the recurrent communicating
        classes of the Markov chains induced by stationary policies under
        $p^\star$.
\end{itemize}
We call $\mathcal C_{p^\star}$ the \emph{communicating region} of $p^\star$.

\subsection{Slack Inequalities and Characterization of $G_i$}
\label{subsec:slack}

For a stationary policy $\nu : \mathcal S\to\Delta(\mathcal A_i)$ of player $i$
we define the Bellman slack
\begin{equation}
  \delta_\nu(s)
  \;\triangleq\;
  h^\star(s) - F_s\bigl(\nu(\cdot\mid s)\bigr),
  \qquad
  s\in\mathcal S.
  \label{eq:slack-def}
\end{equation}
By \eqref{eq:u-star-max-min}, $h^\star(s)=\max_\phi F_s(\phi)$, so
$F_s(\nu(\cdot\mid s))\le h^\star(s)$ and hence $\delta_\nu(s)\ge 0$.

We recall the definitions \eqref{eq:L-def}–\eqref{eq:Psi-def} and
\eqref{eq:u-star-max-min}:
\begin{align}
  L_s(\phi,p_s)
  &= \sum_{a_i\in\mathcal A_i}
     \phi(a_i)\bigl(r_i^{\pi_{-i}}(s,a_i)-\alpha^\star+p_{s,a_i}^\top h^\star\bigr),
     \label{eq:L-again} \\
  F_s(\phi)
  &= \inf_{p_s\in(\mathcal P^{\pi_{-i}})_s} L_s(\phi,p_s),
     \qquad
  \Psi_s(p_s) = \sup_{\phi\in\Delta(\mathcal A_i)} L_s(\phi,p_s),
     \label{eq:F-Psi-again} \\
  h^\star(s)
  &= \max_{\phi\in\Delta(\mathcal A_i)} F_s(\phi)
   = \min_{p_s\in(\mathcal P^{\pi_{-i}})_s} \Psi_s(p_s).
     \label{eq:u-star-again}
\end{align}

Fix player $i$ and opponent policy $\pi^{-i}$ and consider the induced DR-MDP
$\mathsf{M}_i(\pi^{-i})=(\mathcal S,\mathcal A_i,\mathcal P^{\pi^{-i}}, r_i^{\pi^{-i}})$.
Let $(h^\star,\alpha^\star)$ be a constant-gain solution of the \emph{optimal} robust Bellman equation
for $\mathsf{M}_i(\pi^{-i})$ (as in Lemma~\ref{lem:const-gain}).

For each state $s\in\mathcal S$, define the local Lagrangian and values
\begin{align}
L_s(\phi,p_s)
&\triangleq \sum_{a_i\in\mathcal A_i} \phi(a_i)\Bigl(r_i^{\pi^{-i}}(s,a_i)-\alpha^\star + p_{s,a_i}^\top h^\star\Bigr),
\qquad \phi\in\Delta(\mathcal A_i),\ p_s\in(\mathcal P^{\pi^{-i}})_s,\\
F_s(\phi) &\triangleq \inf_{p_s\in(\mathcal P^{\pi^{-i}})_s} L_s(\phi,p_s),
\qquad
\Psi_s(p_s) \triangleq \sup_{\phi\in\Delta(\mathcal A_i)} L_s(\phi,p_s), \label{eq:F-Psi-def}
\end{align}
and the statewise maximizer set
\[
M_s \triangleq \arg\max_{\phi\in\Delta(\mathcal A_i)} F_s(\phi).
\]
By strong duality (Lemma~\ref{lem:F-Psi-properties}), for every $s$,
\[
h^\star(s)=\max_{\phi} F_s(\phi)=\min_{p_s}\Psi_s(p_s).
\]

Let $p_s^\star\in\arg\min_{p_s\in(\mathcal P^{\pi^{-i}})_s} \Psi_s(p_s)$ be a \emph{common supporting minimizer}
as in Lemma~\ref{lem:common-supporting-minimizer}, and define the global kernel $p^\star\triangleq\{p_s^\star\}_{s\in\mathcal S}\in\mathcal P^{\pi^{-i}}$.

\begin{lemma}\label[lemma]{lem:G4-slack-corrected}
For any stationary policy $\nu:\mathcal S\to\Delta(\mathcal A_i)$ define the Bellman slack
\begin{equation}\label{eq:delta-def}
\delta_\nu(s)\triangleq h^\star(s)-F_s(\nu(\cdot|s))\ge 0.
\end{equation}
Let $(S_t,A_t)_{t\ge 0}$ be the trajectory generated by the pair $(\nu,p^\star)$.
Then for every integer $n\ge 1$,
\begin{align}
\mathbb E\Big[\sum_{t=0}^{n-1} \bigl(r_i^{\pi^{-i}}(S_t,A_t)-\alpha^\star\bigr)\Big]
&\le \mathbb E[h^\star(S_0)]-\mathbb E[h^\star(S_n)], \label{eq:G4-upper-corrected}\\
\mathbb E\Big[\sum_{t=0}^{n-1} \bigl(r_i^{\pi^{-i}}(S_t,A_t)-\alpha^\star\bigr)\Big]
&\ge \mathbb E[h^\star(S_0)]-\mathbb E[h^\star(S_n)] - \mathbb E\Big[\sum_{t=0}^{n-1}\delta_\nu(S_t)\Big]. \label{eq:G4-lower-corrected}
\end{align}
Consequently,
\begin{align}
\limsup_{n\to\infty}\mathbb E\Big[\frac1n\sum_{t=0}^{n-1} \bigl(r_i^{\pi^{-i}}(S_t,A_t)-\alpha^\star\bigr)\Big] &\le 0, \label{eq:G4-limsup}\\
\liminf_{n\to\infty}\mathbb E\Big[\frac1n\sum_{t=0}^{n-1} \bigl(r_i^{\pi^{-i}}(S_t,A_t)-\alpha^\star\bigr)\Big]
&\ge -\limsup_{n\to\infty}\mathbb E\Big[\frac1n\sum_{t=0}^{n-1}\delta_\nu(S_t)\Big]. \label{eq:G4-liminf}
\end{align}

Moreover, let $\mathcal S=\mathcal S_0\cup C_{p^\star}$ be the weakly-communicating decomposition under $p^\star$
(i.e., $\mathcal S_0$ is transient under every stationary policy under $p^\star$, and $C_{p^\star}$ is the union of recurrent
classes). If $\nu(\cdot|s)\in M_s$ for all $s\in C_{p^\star}$, then the average reward of $\nu$ under $p^\star$ equals $\alpha^\star$:
\[
g_{p^\star}^\nu=\alpha^\star.
\]
\end{lemma}

\begin{proof}
We prove the two inequalities \eqref{eq:G4-upper-corrected}--\eqref{eq:G4-lower-corrected} first, then the final claim.

\paragraph{Step 1: Upper bound \eqref{eq:G4-upper-corrected}.}
Fix a time $t\ge 0$ and condition on $S_t=s$.
Under $(\nu,p^\star)$, we have $A_t\sim \nu(\cdot|s)$ and $S_{t+1}\sim p^\star_{s,A_t}$, thus
\begin{align*}
\mathbb E\bigl[r_i^{\pi^{-i}}(S_t,A_t)-\alpha^\star+h^\star(S_{t+1})\mid S_t=s\bigr]
&=\sum_{a_i}\nu(a_i|s)\Bigl(r_i^{\pi^{-i}}(s,a_i)-\alpha^\star + (p_{s,a_i}^\star)^\top h^\star\Bigr)\\
&=L_s\bigl(\nu(\cdot|s),p_s^\star\bigr).
\end{align*}
Since $p_s^\star$ minimizes $\Psi_s$ and $\Psi_s(p_s^\star)=h^\star(s)$, we have for any $\phi\in\Delta(\mathcal A_i)$,
\[
L_s(\phi,p_s^\star)\le \sup_{\tilde\phi} L_s(\tilde\phi,p_s^\star)=\Psi_s(p_s^\star)=h^\star(s).
\]
Taking $\phi=\nu(\cdot|s)$ yields
\[
\mathbb E\bigl[r_i^{\pi^{-i}}(S_t,A_t)-\alpha^\star+h^\star(S_{t+1})\mid S_t=s\bigr]\le h^\star(s)=h^\star(S_t).
\]
Equivalently,
\[
\mathbb E\bigl[r_i^{\pi^{-i}}(S_t,A_t)-\alpha^\star \mid S_t\bigr]
\le h^\star(S_t)-\mathbb E\bigl[h^\star(S_{t+1})\mid S_t\bigr].
\]
Taking total expectation and summing from $t=0$ to $n-1$ gives
\begin{align*}
\mathbb E\Big[\sum_{t=0}^{n-1}\bigl(r_i^{\pi^{-i}}(S_t,A_t)-\alpha^\star\bigr)\Big]
&\le \mathbb E\Big[\sum_{t=0}^{n-1}\bigl(h^\star(S_t)-h^\star(S_{t+1})\bigr)\Big]\\
&=\mathbb E\bigl[h^\star(S_0)-h^\star(S_n)\bigr],
\end{align*}
where the last equality is the telescoping identity
$\sum_{t=0}^{n-1}(h^\star(S_t)-h^\star(S_{t+1}))=h^\star(S_0)-h^\star(S_n)$.
This proves \eqref{eq:G4-upper-corrected}.

\paragraph{Step 2: Lower bound \eqref{eq:G4-lower-corrected}.}
Fix a state $s$.
By definition of $F_s$ as an infimum over $p_s$,
\begin{equation}\label{eq:inf-vs-eval}
F_s\bigl(\nu(\cdot|s)\bigr)
=\inf_{p_s\in(\mathcal P^{\pi^{-i}})_s} L_s\bigl(\nu(\cdot|s),p_s\bigr)
\le L_s\bigl(\nu(\cdot|s),p_s^\star\bigr).
\end{equation}
Using the slack definition \eqref{eq:delta-def}, $F_s(\nu(\cdot|s))=h^\star(s)-\delta_\nu(s)$, so \eqref{eq:inf-vs-eval} becomes
\begin{equation}\label{eq:slack-leq}
h^\star(s)-\delta_\nu(s)\le L_s\bigl(\nu(\cdot|s),p_s^\star\bigr).
\end{equation}
Now condition on $S_t=s$ and use the same computation as in Step 1:
$L_s(\nu(\cdot|s),p_s^\star)=\mathbb E[r_i^{\pi^{-i}}(S_t,A_t)-\alpha^\star+h^\star(S_{t+1})\mid S_t=s]$.
Thus \eqref{eq:slack-leq} translates to
\[
h^\star(S_t)-\delta_\nu(S_t)
\le \mathbb E\bigl[r_i^{\pi^{-i}}(S_t,A_t)-\alpha^\star+h^\star(S_{t+1})\mid S_t\bigr].
\]
Rearranging gives
\[
\mathbb E\bigl[r_i^{\pi^{-i}}(S_t,A_t)-\alpha^\star\mid S_t\bigr]
\ge h^\star(S_t)-\mathbb E\bigl[h^\star(S_{t+1})\mid S_t\bigr]-\delta_\nu(S_t).
\]
Taking total expectation and summing $t=0,\dots,n-1$ yields
\begin{align*}
\mathbb E\Big[\sum_{t=0}^{n-1}\bigl(r_i^{\pi^{-i}}(S_t,A_t)-\alpha^\star\bigr)\Big]
&\ge \mathbb E\Big[\sum_{t=0}^{n-1}\bigl(h^\star(S_t)-h^\star(S_{t+1})\bigr)\Big]
-\mathbb E\Big[\sum_{t=0}^{n-1}\delta_\nu(S_t)\Big]\\
&=\mathbb E[h^\star(S_0)]-\mathbb E[h^\star(S_n)]-\mathbb E\Big[\sum_{t=0}^{n-1}\delta_\nu(S_t)\Big],
\end{align*}
proving \eqref{eq:G4-lower-corrected}.

\paragraph{Step 3: If $\nu$ is greedy on $C_{p^\star}$ then $g_{p^\star}^\nu=\alpha^\star$.}
Assume $\nu(\cdot|s)\in M_s$ for all $s\in C_{p^\star}$.
Then for $s\in C_{p^\star}$ we have $F_s(\nu(\cdot|s))=h^\star(s)$ by definition of $M_s$,
hence $\delta_\nu(s)=0$ for all $s\in C_{p^\star}$.

By weakly communicating structure under $p^\star$, $\mathcal S_0$ is transient under \emph{every} stationary policy,
in particular under $\nu$. Therefore, the total expected number of visits to $\mathcal S_0$ is finite:
\[
\sum_{t=0}^\infty \mathbb P(S_t\in\mathcal S_0) < \infty,
\]
which implies the Ces\`aro mean vanishes:
\[
\lim_{n\to\infty}\frac1n\sum_{t=0}^{n-1}\mathbb P(S_t\in\mathcal S_0)=0.
\]
Because rewards are bounded and $\|h^\star\|_\infty<\infty$ (Proposition~\ref{prop:uniform-span}),
the slack is uniformly bounded: there exists $B_\delta<\infty$ such that $0\le \delta_\nu(s)\le B_\delta$ for all $s$.
Hence
\[
0\le \frac1n\mathbb E\Big[\sum_{t=0}^{n-1}\delta_\nu(S_t)\Big]
\le \frac{B_\delta}{n}\sum_{t=0}^{n-1}\mathbb P(S_t\in\mathcal S_0)\xrightarrow[n\to\infty]{}0.
\]
Now divide \eqref{eq:G4-upper-corrected}--\eqref{eq:G4-lower-corrected} by $n$ and let $n\to\infty$.
Since $|h^\star(S_n)|\le \|h^\star\|_\infty$, the boundary term
$(\mathbb E[h^\star(S_0)]-\mathbb E[h^\star(S_n)])/n$ vanishes.
The upper bound gives $\limsup_{n\to\infty}\mathbb E[\frac1n\sum_{t=0}^{n-1}(r-\alpha^\star)]\le 0$,
while the lower bound together with $\frac1n\mathbb E[\sum_{t=0}^{n-1}\delta_\nu(S_t)]\to 0$
gives $\liminf_{n\to\infty}\mathbb E[\frac1n\sum_{t=0}^{n-1}(r-\alpha^\star)]\ge 0$.
Therefore the limit exists and equals $0$, i.e.,
\[
\lim_{n\to\infty}\mathbb E\Big[\frac1n\sum_{t=0}^{n-1}r_i^{\pi^{-i}}(S_t,A_t)\Big]=\alpha^\star,
\]
which is exactly $g_{p^\star}^\nu=\alpha^\star$.
\end{proof}

\begin{theorem}\label{thm:Gi-convex}
Fix $\pi^{-i}$ and define
\begin{align}\label{eq:Gi-def}
    G_i(\pi^{-i}) \triangleq \Bigl\{\nu:\mathcal S\to\Delta(\mathcal A_i)\ \Bigm|\ \nu(\cdot|s)\in M_s\ \ \forall s\in\mathcal S\Bigr\}.
\end{align}
 
Then $G_i(\pi^{-i})$ is nonempty, compact, and convex. Moreover,
\[
G_i(\pi^{-i})\subseteq \mathrm{BR}_i(\pi^{-i}).
\]
\end{theorem}

\begin{proof}
\textbf{Step 1: Nonemptiness, compactness, convexity.}
For each $s$, the set $M_s=\arg\max_{\phi\in\Delta(\mathcal A_i)}F_s(\phi)$ is nonempty, compact, and convex
(Lemma~\ref{lem:F-Psi-properties} and the fact that $F_s$ is concave and upper semicontinuous on the compact simplex).
Therefore the Cartesian product $\prod_{s\in\mathcal S} M_s$ is nonempty, compact, and convex in the product topology.
By definition, $G_i(\pi^{-i})$ equals this product set.

\textbf{Step 2: Robust optimality.}
Let $\nu\in G_i(\pi^{-i})$. Then by definition of $G_i(\pi^{-i})$, for every state $s$,
$\nu(\cdot|s)\in M_s$ and hence
\begin{equation}\label{eq:Fs-equals-h}
F_s\bigl(\nu(\cdot|s)\bigr)=\max_{\phi}F_s(\phi)=h^\star(s).
\end{equation}

Now fix an \emph{arbitrary} kernel $P\in\mathcal P^{\pi^{-i}}$ and let $(S_t,A_t)$ evolve under $(\nu,P)$.
Conditioning on $S_t=s$ and using the definition of $F_s$ as an infimum,
\[
F_s(\nu(\cdot|s))
=\inf_{p_s} L_s(\nu(\cdot|s),p_s)
\le L_s(\nu(\cdot|s),P_s)
=\mathbb E\bigl[r_i^{\pi^{-i}}(S_t,A_t)-\alpha^\star + h^\star(S_{t+1})\mid S_t=s\bigr].
\]
Combining with \eqref{eq:Fs-equals-h} gives the one-step inequality
\[
h^\star(S_t)\le \mathbb E\bigl[r_i^{\pi^{-i}}(S_t,A_t)-\alpha^\star + h^\star(S_{t+1})\mid S_t\bigr].
\]
Rearrange:
\[
\mathbb E\bigl[r_i^{\pi^{-i}}(S_t,A_t)-\alpha^\star\mid S_t\bigr]
\ge h^\star(S_t)-\mathbb E\bigl[h^\star(S_{t+1})\mid S_t\bigr].
\]
Taking total expectation and summing $t=0,\dots,n-1$ yields
\[
\mathbb E\Big[\sum_{t=0}^{n-1}\bigl(r_i^{\pi^{-i}}(S_t,A_t)-\alpha^\star\bigr)\Big]
\ge \mathbb E[h^\star(S_0)]-\mathbb E[h^\star(S_n)].
\]
Divide by $n$ and let $n\to\infty$; the boundary term vanishes because $\|h^\star\|_\infty<\infty$.
We conclude that
\[
g_P^\nu=\lim_{n\to\infty}\mathbb E\Big[\frac1n\sum_{t=0}^{n-1}r_i^{\pi^{-i}}(S_t,A_t)\Big]\ \ge\ \alpha^\star,
\qquad \forall P\in\mathcal P^{\pi^{-i}}.
\]
Therefore $\min_{P\in\mathcal P^{\pi^{-i}}} g_P^\nu \ge \alpha^\star$.
But by definition, $\alpha^\star=\sup_{\tilde\nu}\min_{P} g_P^{\tilde\nu}$ is the optimal min--max value, so also
$\min_P g_P^\nu\le \alpha^\star$. Hence $\min_P g_P^\nu=\alpha^\star$, i.e. $\nu\in \mathrm{BR}_i(\pi^{-i})$.
\end{proof}

\subsection{Semi-Continuity of $G_i$}
\label{subsec:semi-continuity}

We now restore the dependence of all objects on $\pi_{-i}$ explicitly.
For each $s\in\mathcal S$ and $(u,\alpha)\in\mathbb R^{\mathcal S}\times\mathbb R$,
define
\begin{align}
  L_s^{\pi_{-i}}(\phi,p_s;u,\alpha)
  &\;\triangleq\;
   \sum_{a_i\in\mathcal A_i}
   \phi(a_i)
   \bigl(
     r_i^{\pi_{-i}}(s,a_i) - \alpha + p_{s,a_i}^\top u
   \bigr),
   \label{eq:L-pi-def} \\
  F_s^{\pi_{-i}}(\phi;u,\alpha)
  &\;\triangleq\;
   \inf_{p_s\in(\mathcal P^{\pi_{-i}})_s}
   L_s^{\pi_{-i}}(\phi,p_s;u,\alpha).
   \label{eq:F-pi-def}
\end{align}
For each $\pi_{-i}$ let $(h^\star_{\pi_{-i}},\alpha^\star_{\pi_{-i}})$ be a constant-gain
solution provided by Lemma~\ref{lem:const-gain} for the DR-MDP
$\mathsf{M}_i(\pi_{-i})$, with a fixed anchoring convention (e.g.\ $h^\star_{\pi_{-i}}(s^\circ)=0$
for some reference state $s^\circ$).

We define the statewise greedy sets
\begin{align}
  M_s(\pi_{-i})
  &\;\triangleq\;
   \arg\max_{\phi\in\Delta(\mathcal A_i)}
   F_s^{\pi_{-i}} \bigl(\phi;h^\star_{\pi_{-i}},\alpha^\star_{\pi_{-i}}\bigr),
   \label{eq:Ms-pi-def} \\
  G_i(\pi_{-i})
  &\;\triangleq\;
   \Bigl\{
     \Delta : \mathcal S\to\Delta(\mathcal A_i)
     \;:\;
     \Delta(\cdot\mid s)\in M_s(\pi_{-i})
     \ \forall\,s
   \Bigr\}.
   \label{eq:Gi-pi-def}
\end{align}
This reuses the notation $G_i(\pi_{-i})$ from \eqref{eq:Gi-def}; the previous
section corresponds to a fixed $\pi_{-i}$, and here we track how $G_i$ varies
with $\pi_{-i}$.

\begin{proposition}
\label{prop:uniform-span}
There exists a universal constant $C_{\mathrm{span}}<\infty$ such that for every opponent
policy $\pi_{-i}$ and every constant-gain solution
$(h^\star_{\pi_{-i}},\alpha^\star_{\pi_{-i}})$ of $\mathsf{M}_i(\pi_{-i})$, $\spn{h^\star_{\pi_{-i}}}\leq C_{\mathrm{span}}.$ 
\end{proposition}
\begin{proof}

First, we consider a fixed policy $\pi_{-i}$.    By Theorem 4 in \citep{wang2025bellman}, it holds that any cluster point of $\{ V^\star_{\pi_{-i}, \gamma}(s)-V^\star_{\pi_{-i}, \gamma}(s_0), \gamma\in (0,1)\}$ is a solution to the optimality equation. From Assumption \ref{ass:weak-comm}, the equation only has one solution, thus this solution must be the unique cluster point of $\{ V^\star_{\pi_{-i}, \gamma}(s)-V^\star_{\pi_{-i}, \gamma}(s_0), \gamma\in (0,1)\}$.

  Moreover, under Assumption \ref{ass:weak-comm}, Theorem 6 of \citep{wang2025bellman} shows that there exists some uniform upper bound $C$:
    \begin{align}
    \sup_{\gamma\in(0,1)}   \spn{V^\star_{\pi_{-i}, \gamma}} \leq C. 
    \end{align}
    Now note that $V^\star_{\pi_{-i}, \gamma}$ is continuous in $\pi_{-i}$ for any $\gamma$, thus its maximum $C_{\mathrm{Span}}=\sup_{\gamma\in(0,1),\pi}   \spn{V^\star_{\pi_{-i}, \gamma}}$ is achievable over the compact set $\prod_{j\neq i}\Delta(\mca_j)$,  it further holds that 
    \begin{align}
\sup_{\gamma\in(0,1),\pi_{-i}\in\Pi_{-i}}   \spn{V^\star_{\pi_{-i}, \gamma}} \leq C_{\mathrm{Span}}. 
    \end{align}

Consider a fixed policy $\pi_{-i}$. 
Note that for any $\epsilon$, there exists some sequence $\{\gamma_n \}$, such that $\spn{|V^\star_{\pi_{-i}, \gamma_i}(s)-V^\star_{\pi_{-i}, \gamma_i}(s_0)-h^*(s)}\leq \epsilon$. Thus $\spn{h^*}\leq \epsilon+ \spn{V^\star_{\pi_{-i}, \gamma_i}(s)-V^\star_{\pi_{-i}, \gamma_i}}\leq \epsilon+C_{\mathrm{Span}}$.

Thus it holds that for any $\pi_{-i}$, 
\begin{align}
   \spn{h^\star_{\pi_{-i}}} \leq C_{\mathrm{Span}}. 
\end{align}
This hence completes the proof.

% Now note that $V^\star_{\pi_{-i}, \gamma}$ is continuous in $\pi_{-i}$ for any $\gamma$, thus its maximum $C_{\mathrm{Span}}=\sup_{\gamma\in(0,1),\pi}   \spn{V^\star_{\pi_{-i}, \gamma}}$ is achievable over the compact set $\prod_{j\neq i}\Delta(\mca_j)$, which completes the proof.
\end{proof}

% This is a standard type of condition in the average-reward literature, wherethe span of the optimal bias plays the role of an instance complexityparameter (see, e.g., Zurek and Chen, 2024). It holds, for example, whenever thefamily of induced kernels $\mathcal P^{\pi_{-i}}$ has uniformly boundeddiameter in the usual sense of average-reward MDPs.

\begin{lemma}
\label[lemma]{lem:hausdorff-slices}
Fix a state $s\in\mathcal S$ and an action $a_i\in\mathcal A_i$.
Let $\bigl(\pi_{-i}^n\bigr)_n$ be a sequence of opponent policies converging
to $\pi_{-i}$ in the product topology, i.e.
$\pi_{-i}^n(\cdot\mid s')\to\pi_{-i}(\cdot\mid s')$ in $\ell_1$ for all $s'$.
Then, for each $(s,a_i)$, the induced ambiguity slices converge in Hausdorff
distance:
\[
  \bigl(\mathcal P^{\pi_{-i}^n}\bigr)_{s,a_i}
  \;\xrightarrow[n\to\infty]{\mathrm{H}}\;
  \bigl(\mathcal P^{\pi_{-i}}\bigr)_{s,a_i}.
\]
Consequently, for each $s$,
\[
  \bigl(\mathcal P^{\pi_{-i}^n}\bigr)_s
  \;\xrightarrow[n\to\infty]{\mathrm{H}}\;
  \bigl(\mathcal P^{\pi_{-i}}\bigr)_s.
\]
\end{lemma}

\begin{proof}
Each induced slice $(\mathcal P^{\pi_{-i}})_{s,a_i}$ is a Minkowski sum
\[
  \bigl(\mathcal P^{\pi_{-i}}\bigr)_{s,a_i}
  =
  \Bigl\{
    \sum_{a_{-i}}
    \pi_{-i}(a_{-i}\mid s)\, q^{(a_{-i})}
    \;:\;
    q^{(a_{-i})}\in\mathcal P^{(a_i,a_{-i})}_s
  \Bigr\},
\]
where the local sets $\mathcal P^{(a_i,a_{-i})}_s$ are fixed nonempty compact
subsets of $\Delta(\mathcal S)$. The map
\[
  (w,q)\mapsto\sum_{a_{-i}} w(a_{-i})q^{(a_{-i})}
\]
is continuous and bilinear, and the product of the $\mathcal P^{(a_i,a_{-i})}_s$
is compact. Standard results on continuous images of compact sets imply that the
set-valued map $w\mapsto M(w)$ given by this Minkowski sum is continuous in
Hausdorff distance. Since $\pi_{-i}^n(\cdot\mid s)\to\pi_{-i}(\cdot\mid s)$ in
$\ell_1$, we obtain
\[
  \bigl(\mathcal P^{\pi_{-i}^n}\bigr)_{s,a_i}
  = M\bigl(\pi_{-i}^n(\cdot\mid s)\bigr)
  \;\xrightarrow{\mathrm{H}}\;
  M\bigl(\pi_{-i}(\cdot\mid s)\bigr)
  = \bigl(\mathcal P^{\pi_{-i}}\bigr)_{s,a_i}.
\]
The convergence of the statewise products
$\bigl(\mathcal P^{\pi_{-i}^n}\bigr)_s$ follows because finite products of
Hausdorff-convergent compact sets converge in Hausdorff distance.
\end{proof}

\begin{lemma}
\label[lemma]{lem:uniform-continuity-F}
Fix a state $s\in\mathcal S$. Let $(u_n,\alpha_n,\pi_{-i}^n)_n$ be a sequence
with $(u_n,\alpha_n)\in\mathbb R^{\mathcal S}\times\mathbb R$ and
$\pi_{-i}^n\to\pi_{-i}$, and suppose $(u_n,\alpha_n)\to(u,\alpha)$.
Then
\[
  \sup_{\phi\in\Delta(\mathcal A_i)}
  \Bigl|
    F_s^{\pi_{-i}^n}(\phi;u_n,\alpha_n)
    -
    F_s^{\pi_{-i}}(\phi;u,\alpha)
  \Bigr|
  \;\xrightarrow[n\to\infty]{}\; 0.
\]
\end{lemma}

\begin{proof}
Consider the joint map
\[
  (\phi,p_s,u,\alpha,\pi_{-i})
  \;\mapsto\;
  L_s^{\pi_{-i}}(\phi,p_s;u,\alpha)
\]
from
\[
  \Delta(\mathcal A_i)\times
  \bigcup_{\pi_{-i}} (\mathcal P^{\pi_{-i}})_s
  \times\mathbb R^{\mathcal S}\times\mathbb R
  \times \Pi_{-i}
\]
to $\mathbb R$. On any bounded subset of
$\mathbb R^{\mathcal S}\times\mathbb R$, this map is continuous. By
\Cref{prop:uniform-span} and normalizing $u$ with $u(s^\circ)=0$, (hence $\|u\|_\infty\leq C_{\mathrm{Span}}$, otherwise $\spn{u}\geq \max|u_i|-0\geq C_{\mathrm{Span}}$),  
it suffices to restrict attention to the compact set
\[
  \Theta
  \;\triangleq\;
  \Bigl\{
    (u,\alpha,\pi_{-i}) :
    \|u\|_{\infty}\le C_{\mathrm{span}},
    \ \alpha\in[0,1],
    \ \pi_{-i}\in\Pi_{-i}
  \Bigr\}.
\]
Lemma~\ref{lem:hausdorff-slices} ensures that the feasible sets
$(\mathcal P^{\pi_{-i}})_s$ vary continuously with $\pi_{-i}$ in Hausdorff
distance.

The function $F_s^{\pi_{-i}}(\phi;u,\alpha)$ is the optimal value of the
parametric minimization problem
\[
  \min_{p_s\in(\mathcal P^{\pi_{-i}})_s}
  L_s^{\pi_{-i}}(\phi,p_s;u,\alpha),
\]
with continuous objective and compact feasible set that depends continuously
on the parameter $(\phi,u,\alpha,\pi_{-i})$. Berge's Maximum Theorem \cite{berge1877topological} implies
that the value map
\[
  (\phi,u,\alpha,\pi_{-i})\mapsto
  F_s^{\pi_{-i}}(\phi;u,\alpha)
\]
is continuous on the compact domain
$\Delta(\mathcal A_i)\times\Theta$, and hence uniformly continuous.
Therefore, whenever $(u_n,\alpha_n,\pi_{-i}^n)\to(u,\alpha,\pi_{-i})$,
\[
  \sup_{\phi\in\Delta(\mathcal A_i)}
  \Bigl|
    F_s^{\pi_{-i}^n}(\phi;u_n,\alpha_n)
    -
    F_s^{\pi_{-i}}(\phi;u,\alpha)
  \Bigr|
  \;\longrightarrow\; 0.\qedhere
\]
\end{proof}

\begin{lemma}[\citep{ross2013elementary}]
\label[lemma]{lem:argmax-stability}
Let $(f_n)_n$ be a sequence of continuous real-valued functions on the simplex
$\Delta(\mathcal A_i)$ converging uniformly to $f$. For each $n$, let
\[
  M_n = \arg\max_{\phi\in\Delta(\mathcal A_i)} f_n(\phi),
  \qquad
  M = \arg\max_{\phi\in\Delta(\mathcal A_i)} f(\phi),
\]
which are nonempty, compact, convex sets. If $\phi_n\in M_n$ and
$\phi_n\to\phi$, then $\phi\in M$.
\end{lemma}

\begin{proof}
This is standard. Because $f_n\to f$ uniformly and each $f_n$ attains its
maximum on the compact set $\Delta(\mathcal A_i)$, the limit point $\phi$
must satisfy $f(\phi)\ge f(\psi)$ for all $\psi$, hence $\phi\in M$.
\end{proof}

\begin{lemma}
\label[lemma]{lem:bellman-stability}
Let $\pi_{-i}^n\to\pi_{-i}$ be a sequence of opponent policies. For each $n$,
let $(u_n,\alpha_n)$ be a constant-gain solution for the DR-MDP
$\mathsf{M}_i(\pi_{-i}^n)$, normalized by $u_n(s^\circ)=0$. Then there exists a
subsequence (still denoted $(u_n,\alpha_n)$) and a limit pair $(u,\alpha)$
such that
\[
  (u_n,\alpha_n)\to(u,\alpha),
\]
and $(u,\alpha)$ is a constant-gain solution for $\mathsf{M}_i(\pi_{-i})$.
\end{lemma}

\begin{proof}
By \Cref{prop:uniform-span}, $\|u_n\|_\infty\le C_{\mathrm{span}}$
for all $n$, and $0\le\alpha_n\le 1$ by bounded rewards.
Thus $(u_n,\alpha_n)$ lies in a compact subset of
$\mathbb R^{\mathcal S}\times[0,1]$. By Bolzano–Weierstrass theorem \cite{ross2013elementary}, there exists a
subsequence such that
\[
  (u_n,\alpha_n)\to(u,\alpha)
\]
for some $(u,\alpha)$.

For each $n$ and each $s$, the constant-gain equation can be written as
\[
  u_n(s)
  =
  \sup_{\phi\in\Delta(\mathcal A_i)}
  F_s^{\pi_{-i}^n}(\phi;u_n,\alpha_n).
\]
By Lemma~\ref{lem:uniform-continuity-F}, for fixed $s$ the functions
$\phi\mapsto F_s^{\pi_{-i}^n}(\phi;u_n,\alpha_n)$ converge uniformly to
$\phi\mapsto F_s^{\pi_{-i}}(\phi;u,\alpha)$. Since $u_n(s)\to u(s)$ and
$\sup_\phi F_n(\phi)$ converges to $\sup_\phi F(\phi)$ under uniform
convergence, we obtain
\[
  u(s) = \sup_{\phi\in\Delta(\mathcal A_i)}
          F_s^{\pi_{-i}}(\phi;u,\alpha),
  \qquad
  \forall\,s\in\mathcal S.
\]
That is, $(u,\alpha)$ is a constant-gain solution for $\mathsf{M}_i(\pi_{-i})$.
\end{proof}

\begin{lemma}[Convergence of maximizers under uniform convergence]\label[lemma]{lem:G9_detailed}
Let $K$ be a compact metric space and let $f_n,f:K\to\mathbb{R}$ be real-valued functions.
Assume $f_n\to f$ uniformly on $K$. Let $x_n\in\arg\max_{x\in K} f_n(x)$ and suppose (along a subsequence)
$x_n\to x\in K$. Then $x\in\arg\max_{x\in K} f(x)$.
\end{lemma}

\begin{proof}
Fix any $y\in K$. Since $x_n$ maximizes $f_n$ on $K$, we have
\[
f_n(x_n)\;\ge\; f_n(y)\qquad \forall n.
\]
Add and subtract $f(x_n)$ and $f(y)$:
\[
f(x_n) \;=\; f_n(x_n) + \big(f(x_n)-f_n(x_n)\big)
\;\ge\; f_n(y) + \big(f(x_n)-f_n(x_n)\big).
\]
Similarly, $f_n(y)=f(y)+\big(f_n(y)-f(y)\big)$, hence
\[
f(x_n)\;\ge\; f(y) + \big(f_n(y)-f(y)\big) + \big(f(x_n)-f_n(x_n)\big).
\]
Now take $\liminf_{n\to\infty}$ on both sides. Uniform convergence implies
\[
\sup_{x\in K}|f_n(x)-f(x)|\to 0
\quad\Longrightarrow\quad
f_n(y)-f(y)\to 0,\;\;\; f(x_n)-f_n(x_n)\to 0.
\]
Also $x_n\to x$ and continuity of $f$ (or simply: in compact finite-dimensional sets, the $F_s$ in the paper are
upper semicontinuous and this argument can be stated with upper limits; for the present application $f$ is continuous)
gives $f(x_n)\to f(x)$. Therefore
\[
f(x)\;\ge\; f(y).
\]
Since $y\in K$ was arbitrary, $x$ is a maximizer of $f$ on $K$, i.e. $x\in\arg\max_K f$.
\end{proof}
We can now establish upper semi-continuity of the greedy response map.

\begin{theorem}[Upper hemicontinuity of $G_i$]\label{thm:G11_detailed}
Fix player $i$ and consider the correspondence $\pi_{-i}\mapsto G_i(\pi_{-i})$ defined in \textup{(88)}.
Then for every $\pi_{-i}$, $G_i$ is upper hemicontinuous.
\end{theorem}

\begin{proof}
% \textbf{Step 1: Nonempty compact convex values.}
% Fix $\pi_{-i}$. For each state $s$, by Lemma~\ref{lem:F-Psi-properties} (concavity and upper semicontinuity of $F_s$ on the compact simplex),
% the set of maximizers
% \[
% M_s(\pi_{-i}) \;=\; \arg\max_{\phi\in\Delta(A_i)} F^{\pi_{-i}}_s(\phi; h^\star_{\pi_{-i}},\alpha^\star_{\pi_{-i}})
% \]
% is nonempty, compact, and convex. Therefore the product set
% \[
% G_i(\pi_{-i}) \;=\; \Big\{ \nu_i\in\Pi_i : \nu_i(\cdot\mid s)\in M_s(\pi_{-i})\;\;\forall s\in\mathcal{S}\Big\}
% \]
% is nonempty, compact, and convex as a finite product of nonempty compact convex sets.

% \textbf{Step 2: Upper hemicontinuity.}
Let $\pi^n_{-i}\to \pi_{-i}$ and let $\Delta^n\in G_i(\pi^n_{-i})$ such that $\Delta^n\to \Delta$ (componentwise,
equivalently in the product topology on stationary policies). We show $\Delta\in G_i(\pi_{-i})$.

For each $n$, by definition of $G_i(\pi^n_{-i})$ there exists a (normalized) constant-gain solution
$(u_n,\alpha_n)$ of the induced robust MDP $\mathsf{M}_i(\pi^n_{-i})$ used to form $F^{\pi^n_{-i}}_s(\cdot;u_n,\alpha_n)$, and such that for every state $s$,
\begin{equation}\label{eq:G11_argmax_membership}
\Delta^n(\cdot\mid s)\in\arg\max_{\phi\in\Delta(A_i)} F^{\pi^n_{-i}}_s(\phi;u_n,\alpha_n).
\end{equation}

By Lemma~\ref{lem:bellman-stability} (compactness and subsequence convergence of normalized constant-gain solutions), there is a subsequence
(still indexed by $n$ for simplicity) such that
\[
(u_n,\alpha_n)\to (u,\alpha),
\]
where $(u,\alpha)$ is a constant-gain solution of $\mathsf{M}_i(\pi_{-i})$.

Fix an arbitrary state $s\in\mathcal{S}$ and define functions on the compact simplex $\Delta(A_i)$:
\[
f_n(\phi)\;:=\;F^{\pi^n_{-i}}_s(\phi;u_n,\alpha_n),
\qquad
f(\phi)\;:=\;F^{\pi_{-i}}_s(\phi;u,\alpha).
\]
By Lemma~\ref{lem:uniform-continuity-F}, $f_n\to f$ uniformly on $\Delta(A_i)$.

From \eqref{eq:G11_argmax_membership}, we have $\phi_n:=\Delta^n(\cdot\mid s)\in\arg\max_{\phi} f_n(\phi)$.
Moreover $\phi_n\to \phi:=\Delta(\cdot\mid s)$ by assumption.
Applying Lemma~\ref{lem:G9_detailed} yields
\[
\Delta(\cdot\mid s)\in\arg\max_{\phi\in\Delta(A_i)} f(\phi)
=
\arg\max_{\phi\in\Delta(A_i)} F^{\pi_{-i}}_s(\phi;u,\alpha).
\]
Since $s$ was arbitrary, this holds for all $s\in\mathcal{S}$, hence $\Delta\in G_i(\pi_{-i})$.
Therefore the graph of $G_i$ is closed, and (together with compact values shown above) $G_i$ is upper hemicontinuous.
\end{proof}

\subsection{Existence of a Robust Nash Equilibrium}
\label{subsec:existence-NE}

We finally combine the previous results across all players.
Let
\[
  \Pi \;\triangleq\;
  \prod_{k\in\mathcal N}\prod_{s\in\mathcal S}\Delta(\mathcal A_k)
\]
be the compact convex set of stationary joint policies. For each player $i$,
we have a set-valued map
\[
  G_i : \prod_{j\neq i}\prod_{s\in\mathcal S}\Delta(\mathcal A_j)
  \rightrightarrows
  \prod_{s\in\mathcal S}\Delta(\mathcal A_i)
\]
with nonempty compact convex values and upper semi-continuous graph.

Define the product map
\[
  \mathcal G : \Pi \rightrightarrows \Pi,
  \qquad
  \mathcal G(\pi)
  \;\triangleq\;
  \prod_{i\in\mathcal N} G_i(\pi_{-i}),
\]
where $\pi_{-i}$ denotes the opponent profile for player $i$.
Standard results (e.g., Kakutani's fixed-point theorem) imply that
$\mathcal G$ has a fixed point:
there exists $\pi^\star\in\Pi$ such that $\pi^\star\in\mathcal G(\pi^\star)$.

By construction, for every $i$ we have
\[
  \pi^\star_i \in G_i(\pi^\star_{-i})
  \subseteq BR_i(\pi^\star_{-i}),
\]
where the inclusion is Theorem~\ref{thm:Gi-convex}. Thus $\pi^\star$ is a
stationary robust Nash equilibrium of the DR-MG.

\section{Lipschitz Smoothness}
\label{subsec:eval-Lip}
In this section, we study the Lipschitz smoothness of the solutions to robust Bellman equations. 

We first recall the seminorm and contraction property from \citep{xu2025finite}.

\begin{lemma}[Lemma 3.2 in \citep{xu2025finite}]\label[lemma]{lem:E12}
Under Assumption \ref{ass:irr}, there exists a seminorm
$\|\cdot\|_{\mathsf P}$ on $\mathbb R^{|\mathcal S|}$ and a constant
$\gamma\in(0,1)$ such that:
\begin{enumerate}
  \item $\|x + c\mathbf 1\|_{\mathsf P} = \|x\|_{\mathsf P}$ for all
        $x\in\mathbb R^{|\mathcal S|}$ and $c\in\mathbb R$
        (constants are annihilated).
  \item For every stationary policy profile $\pi$ and every scalar $g$,
        the operator $T^{\pi,g}_k$ is a $\gamma$‑contraction on the
        quotient space $\mathbb R^{|\mathcal S|}/\mathrm{span}\{\mathbf 1\}$
        equipped with $\|\cdot\|_{\mathsf P}$, i.e.
        \[
          \bigl\|
            T^{\pi,g}_k v - T^{\pi,g}_k w
          \bigr\|_{\mathsf P}
          \le
          \gamma \|v-w\|_{\mathsf P}
        \]
        for all $v,w$ with $v(s^\circ)=w(s^\circ)$.
\end{enumerate}
\end{lemma}

Since all norms on a finite‑dimensional space are equivalent, there
exist finite constants $C_{\infty\leftarrow\mathsf P},
C_{\mathsf P\leftarrow\infty}, C_{\mathrm{sp}\leftarrow\mathsf P}>0$
such that for all zero‑mean $x$,
\begin{equation}
  \|x\|_\infty
  \;\le\;
  C_{\infty\leftarrow\mathsf P} \|x\|_{\mathsf P},
  \qquad
  \|x\|_{\mathsf P}
  \;\le\;
  C_{\mathsf P\leftarrow\infty} \|x\|_\infty,
  \qquad
  \operatorname{sp}(x)
  \;\le\;
  C_{\mathrm{sp}\leftarrow\mathsf P} \|x\|_{\mathsf P}.
  \label{eq:norm-equivalence}
\end{equation}

\begin{lemma}\label[lemma]{lem:E13}
Let $R_{\max} \triangleq \max_{k,s,a} |r_k(s,a)|$.
There exists a finite constant $B_V>0$, depending only on
$(R_{\max},\gamma,C_{\infty\leftarrow\mathsf P},
C_{\mathsf P\leftarrow\infty})$ and the construction of
$\|\cdot\|_{\mathsf P}$, such that for every player $k$ and every
joint policy $\pi$,
\[
  \|V_k^\pi\|_\infty \;\le\; B_V,
\]
where $V_k^\pi$ is the uniquely anchored bias function satisfying
$V_k^\pi(s^\circ) = 0$ and the fixed‑policy robust Bellman equation
with gain $g_k^\pi$.
\end{lemma}

\begin{proof}
Fix a player $k$ and policy profile $\pi$, and let
$(g_k^\pi,V_k^\pi)$ be a gain/bias solution of the robust Bellman
equation, anchored by $V_k^\pi(s^\circ)=0$.
We may write
\[
  V_k^\pi = T^{\pi,g_k^\pi}_k(V_k^\pi).
\]
Subtracting $T^{\pi,g_k^\pi}_k(0)$ and using Lemma~\ref{lem:E12} gives
\[
  \bigl\|
    V_k^\pi - T^{\pi,g_k^\pi}_k(0)
  \bigr\|_{\mathsf P}
  =
  \bigl\|
    T^{\pi,g_k^\pi}_k(V_k^\pi) - T^{\pi,g_k^\pi}_k(0)
  \bigr\|_{\mathsf P}
  \le
  \gamma \|V_k^\pi\|_{\mathsf P}.
\]
Hence
\[
  (1-\gamma)\|V_k^\pi\|_{\mathsf P}
  \le
  \bigl\|T^{\pi,g_k^\pi}_k(0)\bigr\|_{\mathsf P}.
\]
By definition,
\[
  T^{\pi,g_k^\pi}_k(0)(s)
  = \sum_a \pi(a\mid s)
    \bigl( r_k(s,a) - g_k^\pi + \sigma_s^a(0) \bigr)
  = r_k^\pi(s) - g_k^\pi,
\]
because $\sigma_s^a(0)=0$. Thus
\[
  \bigl\|T^{\pi,g_k^\pi}_k(0)\bigr\|_\infty
  = \|r_k^\pi - g_k^\pi \mathbf 1\|_\infty
  \le |g_k^\pi| + \max_s |r_k^\pi(s)|
  \le 2R_{\max},
\]
using $|g_k^\pi|\le R_{\max}$ (see Lemma~\ref{lem:E14} below)
and $|r_k^\pi(s)|\le R_{\max}$.
By norm equivalence \eqref{eq:norm-equivalence},
\[
  \bigl\|T^{\pi,g_k^\pi}_k(0)\bigr\|_{\mathsf P}
  \le C_{\mathsf P\leftarrow\infty}
      \bigl\|T^{\pi,g_k^\pi}_k(0)\bigr\|_\infty
  \le 2 C_{\mathsf P\leftarrow\infty} R_{\max}.
\]
Combining,
\[
  \|V_k^\pi\|_{\mathsf P}
  \le
  \frac{2 C_{\mathsf P\leftarrow\infty}}{1-\gamma} R_{\max},
\]
and then
\[
  \|V_k^\pi\|_\infty
  \le
  C_{\infty\leftarrow\mathsf P} \|V_k^\pi\|_{\mathsf P}
  \le
  \frac{2 C_{\infty\leftarrow\mathsf P} C_{\mathsf P\leftarrow\infty}}
       {1-\gamma} R_{\max}
  \;\triangleq\; B_V.
\]
\end{proof}

\begin{lemma}\label[lemma]{lem:E14}
For every player $k$ and joint policy $\pi$, the robust average reward
$g_k^\pi$ satisfies
\[
  |g_k^\pi| \;\le\; R_{\max}.
\]
\end{lemma}

\begin{proof}
By definition of the average reward under any kernel $P$,
\[
  g_{k}^{\pi,P}
  = \lim_{T\to\infty}
    \frac1T \sum_{t=0}^{T-1} \mathbb E[r_k(S_t,A_t)],
\]
and $|r_k(s,a)|\le R_{\max}$ for all $(s,a)$. Thus for any $P$,
$|g_{k}^{\pi,P}|\le R_{\max}$, and taking the worst‑case kernel can
only decrease the value in absolute value, so
$|g_k^\pi| = |\min_{P} g_k^{\pi,P}|\le R_{\max}$.
\end{proof}

We next control the sensitivity of the Bellman operator with respect to
the policy.

\begin{lemma}\label[lemma]{lem:E15}
Fix a state $s$ and let $\pi,\tilde\pi$ be two joint policies.
Then their joint action distributions at $s$ satisfy
\[
  \sum_{a\in\mathcal A(s)}
  \Bigl|
    \prod_{j=1}^K \tilde\pi_j(a_j\mid s)
    -
    \prod_{j=1}^K \pi_j(a_j\mid s)
  \Bigr|
  \le
  \sum_{j=1}^K
  \bigl\|
    \tilde\pi_j(\cdot\mid s) - \pi_j(\cdot\mid s)
  \bigr\|_1.
\]
\end{lemma}

\begin{proof}
Use the telescoping identity for products:
\[
  \prod_{j=1}^K x_j - \prod_{j=1}^K y_j
  = \sum_{m=1}^K (x_m - y_m)
    \Bigl(
      \prod_{j<m} y_j
    \Bigr)
    \Bigl(
      \prod_{j>m} x_j
    \Bigr),
\]
and apply the triangle inequality. Summing over $a$ and using that for
any fixed $m$ the remaining factors are nonnegative and sum to at most
$1$ when marginalizing over $a$ yields the claim.
\end{proof}

\begin{lemma}\label[lemma]{lem:E16}
Fix a player $k$ and let $B_V$ be as in Lemma~\ref{lem:E13}.
For any $v\in\mathbb R^{|\mathcal S|}$ with $\|v\|_\infty\le B_V$
and any scalar $g$ with $|g|\le R_{\max}$, define the robust one‑step return
$F_k(s,a;v)=r_k(s,a)+\sigma_s^a(v)$ and the operator
$T^{\pi,g}_k$ as above. Then there exists a constant
$C_\Pi>0$, depending only on
$(K,|\mathcal S|,A_{\max},R_{\max},B_V)$ with
$A_{\max} \triangleq \max_{k,s} |\mathcal A_k|$, such that
for all policies $\pi,\tilde\pi$,
\[
  \bigl\|
    T^{\tilde\pi,g}_k(v) - T^{\pi,g}_k(v)
  \bigr\|_\infty
  \le
  C_\Pi \|\tilde\pi - \pi\|_2.
\]
\end{lemma}

\begin{proof}
For each state $s$,
\begin{align*}
  &\bigl|
    T^{\tilde\pi,g}_k(v)(s) - T^{\pi,g}_k(v)(s)
  \bigr|
  \\
  &= \Bigl|
      \sum_a
      \bigl(\tilde\pi(a\mid s)-\pi(a\mid s)\bigr)
      \bigl(r_k(s,a) - g + \sigma_s^a(v)\bigr)
    \Bigr|.
\end{align*}
Because $|r_k(s,a)|\le R_{\max}$, $|g|\le R_{\max}$ and
$|\sigma_s^a(v)|\le\|v\|_\infty\le B_V$, we have
$|r_k(s,a)-g+\sigma_s^a(v)|\le 2R_{\max}+B_V$. Thus
\[
  \bigl|
    T^{\tilde\pi,g}_k(v)(s) - T^{\pi,g}_k(v)(s)
  \bigr|
  \le (2R_{\max}+B_V)
      \sum_a \bigl|\tilde\pi(a\mid s)-\pi(a\mid s)\bigr|.
\]
By Lemma~\ref{lem:E15} the last sum is bounded by
$\sum_{j=1}^K
\|\tilde\pi_j(\cdot\mid s)-\pi_j(\cdot\mid s)\|_1$.
Summing over $s$ and applying Cauchy–Schwarz, we get
\[
  \|T^{\tilde\pi,g}_k(v) - T^{\pi,g}_k(v)\|_\infty
  \le C_\Pi \|\tilde\pi - \pi\|_2
\]
with $C_\Pi$ absorbing the dimension‑dependent constants.
\end{proof}

We can now state and prove the Lipschitz continuity of the evaluation map.

\begin{theorem}\label{thm:E4}
Under Assumption \ref{ass:irr} and Lemmas~\ref{lem:E12}–\ref{lem:E16},
for each player $k$ there exist finite constants
$L_V,L_g>0$ (independent of $\pi$) such that for any two joint policies
$\pi,\tilde\pi$,
\[
  \|V_k^{\tilde\pi} - V_k^\pi\|_\infty
  \le L_V \|\tilde\pi - \pi\|_2,
  \qquad
  |g_k^{\tilde\pi} - g_k^\pi|
  \le L_g \|\tilde\pi - \pi\|_2.
\]
\end{theorem}

\begin{proof}
Fix $k$ and policies $\pi,\tilde\pi$.
Let $(g_k^\pi,V_k^\pi)$ and $(g_k^{\tilde\pi},V_k^{\tilde\pi})$
be the gain/bias pairs solving the respective robust Bellman equations,
anchored by $V_k^\pi(s^\circ)=V_k^{\tilde\pi}(s^\circ)=0$.
Write $\Delta v \triangleq V_k^{\tilde\pi} - V_k^\pi$ and
$\Delta g \triangleq g_k^{\tilde\pi} - g_k^\pi$.

\smallskip\noindent
\emph{Step 1: Lipschitz continuity of the bias.}
Using the fixed‑point relations
$V_k^\pi = T^{\pi,g_k^\pi}_k(V_k^\pi)$ and
$V_k^{\tilde\pi} = T^{\tilde\pi,g_k^{\tilde\pi}}_k(V_k^{\tilde\pi})$,
we have
\begin{align*}
  \Delta v
  &= T^{\tilde\pi,g_k^{\tilde\pi}}_k(V_k^{\tilde\pi})
     - T^{\pi,g_k^\pi}_k(V_k^\pi)
  \\
  &= \bigl[
       T^{\tilde\pi,g_k^{\tilde\pi}}_k(V_k^{\tilde\pi})
       - T^{\pi,g_k^{\tilde\pi}}_k(V_k^{\tilde\pi})
     \bigr]
   + \bigl[
       T^{\pi,g_k^{\tilde\pi}}_k(V_k^{\tilde\pi})
       - T^{\pi,g_k^{\tilde\pi}}_k(V_k^\pi)
     \bigr]
  \\
  &\quad
   + \bigl[
       T^{\pi,g_k^{\tilde\pi}}_k(V_k^\pi)
       - T^{\pi,g_k^\pi}_k(V_k^\pi)
     \bigr].
\end{align*}
We bound each term in $\|\cdot\|_{\mathsf P}$.

The second term is controlled by Lemma~\ref{lem:E12}:
\[
  \bigl\|
    T^{\pi,g_k^{\tilde\pi}}_k(V_k^{\tilde\pi})
    - T^{\pi,g_k^{\tilde\pi}}_k(V_k^\pi)
  \bigr\|_{\mathsf P}
  \le \gamma \|\Delta v\|_{\mathsf P}.
\]
For the first term we apply Lemma~\ref{lem:E16}
with $v=V_k^{\tilde\pi}$ and $g=g_k^{\tilde\pi}$.
By Lemmas~\ref{lem:E13} and \ref{lem:E14},
$\|V_k^{\tilde\pi}\|_\infty\le B_V$ and $|g_k^{\tilde\pi}|\le R_{\max}$,
so the conditions of Lemma~\ref{lem:E16} hold and
\[
  \bigl\|
    T^{\tilde\pi,g_k^{\tilde\pi}}_k(V_k^{\tilde\pi})
    - T^{\pi,g_k^{\tilde\pi}}_k(V_k^{\tilde\pi})
  \bigr\|_\infty
  \le C_\Pi \|\tilde\pi - \pi\|_2.
\]
Using \eqref{eq:norm-equivalence},
\[
  \bigl\|
    T^{\tilde\pi,g_k^{\tilde\pi}}_k(V_k^{\tilde\pi})
    - T^{\pi,g_k^{\tilde\pi}}_k(V_k^{\tilde\pi})
  \bigr\|_{\mathsf P}
  \le C_{\mathsf P\leftarrow\infty} C_\Pi \|\tilde\pi - \pi\|_2.
\]

The third term is the effect of changing $g$ at fixed $\pi$ and $v$:
\[
  T^{\pi,g_k^{\tilde\pi}}_k(V_k^\pi)(s)
  - T^{\pi,g_k^\pi}_k(V_k^\pi)(s)
  = \sum_a \pi(a\mid s)
    \bigl[
      - g_k^{\tilde\pi} + g_k^\pi
    \bigr]
  = (g_k^\pi - g_k^{\tilde\pi}),
\]
so
\[
  T^{\pi,g_k^{\tilde\pi}}_k(V_k^\pi)
  - T^{\pi,g_k^\pi}_k(V_k^\pi)
  = -\Delta g\, \mathbf 1.
\]
By construction $\|\mathbf 1\|_{\mathsf P}=0$ (Lemma~\ref{lem:E12}),
so this term vanishes in $\|\cdot\|_{\mathsf P}$.

Combining these three bounds we get
\[
  \|\Delta v\|_{\mathsf P}
  \le
  C_{\mathsf P\leftarrow\infty} C_\Pi \|\tilde\pi - \pi\|_2
  + \gamma \|\Delta v\|_{\mathsf P},
\]
hence
\[
  (1-\gamma)\|\Delta v\|_{\mathsf P}
  \le
  C_{\mathsf P\leftarrow\infty} C_\Pi \|\tilde\pi - \pi\|_2.
\]
Therefore
\[
  \|\Delta v\|_{\mathsf P}
  \le
  \frac{C_{\mathsf P\leftarrow\infty} C_\Pi}{1-\gamma}
  \|\tilde\pi - \pi\|_2.
\]
Using $\|\cdot\|_\infty \le C_{\infty\leftarrow\mathsf P}\|\cdot\|_{\mathsf P}$
we obtain
\[
  \|V_k^{\tilde\pi} - V_k^\pi\|_\infty
  \le
  L_V \|\tilde\pi - \pi\|_2,
\]
with
\[
  L_V
  \;\triangleq\;
  \frac{
    C_{\infty\leftarrow\mathsf P}
    C_{\mathsf P\leftarrow\infty}
    C_\Pi
  }{1-\gamma}.
\]

\smallskip\noindent
\emph{Step 2: Lipschitz continuity of the gain.}
Using the Bellman equation at the anchor state $s^\circ$, we can write
\begin{align*}
  g_k^\pi
  &= \sum_a \pi(a\mid s^\circ)
     \Bigl(
       r_k(s^\circ,a)
       + \sigma_{s^\circ}^a(V_k^\pi)
     \Bigr)
     - V_k^\pi(s^\circ)
  \\
  &= \sum_a \pi(a\mid s^\circ)
     F_k(s^\circ,a;V_k^\pi),
\end{align*}
because $V_k^\pi(s^\circ)=0$ by anchoring.
Similarly,
\[
  g_k^{\tilde\pi}
  = \sum_a \tilde\pi(a\mid s^\circ)
    F_k(s^\circ,a;V_k^{\tilde\pi}).
\]
Hence
\begin{align*}
  |g_k^{\tilde\pi} - g_k^\pi|
  &\le
     \Bigl|
       \sum_a
       \bigl(
         \tilde\pi(a\mid s^\circ) - \pi(a\mid s^\circ)
       \bigr)
       F_k(s^\circ,a;V_k^{\tilde\pi})
     \Bigr|
  \\
  &\quad
    + \Bigl|
        \sum_a \pi(a\mid s^\circ)
        \bigl(
          F_k(s^\circ,a;V_k^{\tilde\pi})
          - F_k(s^\circ,a;V_k^\pi)
        \bigr)
      \Bigr|
  \\
  &\triangleq T_1 + T_2.
\end{align*}
For $T_1$, using $|F_k(s^\circ,a;v)|\le R_{\max} + \|v\|_\infty
\le R_{\max}+B_V$ and Lemma~\ref{lem:E15} at $s^\circ$,
\[
  T_1
  \le (R_{\max}+B_V)
      \sum_a \bigl|
        \tilde\pi(a\mid s^\circ) - \pi(a\mid s^\circ)
      \bigr|
  \le C_1 \|\tilde\pi - \pi\|_2,
\]
for some constant $C_1$ depending only on the game dimensions.

For $T_2$, note that for any $v,w$,
\[
  \bigl|
    F_k(s^\circ,a;v) - F_k(s^\circ,a;w)
  \bigr|
  =
  \bigl|
    \sigma_{s^\circ}^a(v) - \sigma_{s^\circ}^a(w)
  \bigr|
  \le \|v-w\|_\infty,
\]
because each $q\in\mathcal U(s^\circ,a)$ is a probability vector and
$|\min_q q^\top v - \min_q q^\top w|
\le \max_q |q^\top(v-w)| \le \|v-w\|_\infty$.
Thus
\[
  T_2
  \le \|V_k^{\tilde\pi} - V_k^\pi\|_\infty
  \le L_V \|\tilde\pi - \pi\|_2.
\]
Altogether,
\[
  |g_k^{\tilde\pi} - g_k^\pi|
  \le (C_1 + L_V) \|\tilde\pi - \pi\|_2
  \;\triangleq\; L_g \|\tilde\pi - \pi\|_2.
\]
This completes the proof.
\end{proof}

\section{Nash Iteration} 

\begin{algorithm}[!htb]
\caption{Robust Nash-Iteration}
\begin{algorithmic}\label{algo: rni}
\STATE \textbf{Initialization:} $h_i=0, \forall i\in\mathcal{N}$
\WHILE{TRUE}
\FOR{all $i\in\mathcal{N}$}
    \STATE $h^0_i \leftarrow h_i$
    \FOR{all $s,a\in\mathcal{S}\times\mca$}
    \STATE $Q_i(s,a,h_i^0) \leftarrow r_i(s,a) + \sigma_{\mathcal{P}^a_s}(h_i^0)$
    \STATE $\pi(s) \leftarrow \textbf{NE}(Q_i(s,a,h^0_i))$
    \STATE $h_i(s) \leftarrow \mathbb{E}_{\pi(s)}[Q_i(s,a,h_i^0)]$
    \ENDFOR
    \ENDFOR
        \IF {$\max_i\{\text{sp}(h_i - h^0_i)\}=  0$}
    \STATE{BREAK}    
    \ENDIF
\ENDWHILE
\STATE \textbf{Output:}  $\pi$
\end{algorithmic}
\end{algorithm}

Throughout this section, we adopt the following notational conventions
for the robust Bellman operators.

For each state-joint–action pair $(s,a)$ the transition row
is uncertain in a nonempty, convex, compact set
$\mathcal U(s,a) \subset \Delta(\mathcal S)$.
For any $v \in \mathbb R^{|\mathcal S|}$ define the local support function
\[
  \sigma_s^a(v)
  \;\triangleq\;
  \min_{q \in \mathcal U(s,a)} q^\top v,
\]
and for a joint policy $\pi$ define
\[
  \sigma_s^\pi(v)
  \;\triangleq\;
  \min_{p \in \mathcal P_s^\pi} p^\top v,
  \qquad
  \mathcal P_s^\pi
  \;\triangleq\;
  \Bigl\{
      \sum_{a} \pi(a \mid s) q(\cdot\mid s,a)
      : q(\cdot\mid s,a) \in \mathcal U(s,a)
  \Bigr\}.
\]
Thus $\sigma_s^\pi(v) = \sum_a \pi(a\mid s)\sigma_s^a(v)$, and by compactness
the minima defining $\sigma_s^a$ and $\sigma_s^\pi$ are always attained.

For each player $k$, the robust one‑step return is
\[
  F_k(s,a;v)
  \;\triangleq\;
  r_k(s,a) + \sigma_s^a(v),
\]
and for a fixed policy profile $\pi$ and scalar $g$
the (player–$k$) robust evaluation operator is
\[
  (T^{\pi,g}_k v)(s)
  \;\triangleq\;
  \sum_{a} \pi(a\mid s)
  \bigl[
    r_k(s,a) - g + \sigma_s^a(v)
  \bigr].
\]
With this notation, the fixed‑policy robust average‑reward Bellman equation
for player $k$ under $\pi$ reads
\[
  V_k^\pi(s)
  \;=\;
  (T^{\pi,g_k^\pi}_k V_k^\pi)(s),
  \qquad s\in\mathcal S,
\]
where $(g_k^\pi,V_k^\pi)$ is a gain/bias pair as in Theorem \cref{thm:solvability}
(uniqueness up to constants). 

We always anchor biases by choosing a reference state
$s^\circ\in\mathcal S$ and imposing $V_k^\pi(s^\circ)=0$.

\begin{assumption}\label[assumption]{ass:2}
For each state $s$ and each vector of bias estimates
$h = (h_1,\dots,h_N)$, consider the one‑stage game
with payoff functions
\[
  Q_k(s,a;h_k) \;\triangleq\; r_k(s,a) + \sigma_s^a(h_k),
  \qquad a\in \mathcal A(s),\ k\in\mathcal N.
\]
At every state $s$, the algorithm selects either
\begin{enumerate}
  \item a Nash equilibrium of this stage game that is globally optimal
        for all players in the sense that every player attains his maximal
        robust one‑step return at the selected equilibrium, or
  \item a saddle point of the stage game,
        i.e.\ a pure/mixed action profile $(\pi_1(\cdot\mid s),\dots,\pi_N(\cdot\mid s))$
        such that no player can unilaterally deviate to improve her robust one‑step return.
\end{enumerate}
\end{assumption}
The first part of our assumption is necessary for convergence under average reward Markov games \citep{li2003learning,hu2003nash,littman2001friend,Bowling2001,bowling2000convergence}. The second assumption is a technical assumption to ensure convergence, and can be removed by modifying the update rule by a multi-step operator, see \citep{wang2023model,xu2025finite}.
\begin{theorem}[Convergence of robust Nash iteration]\label{thm:E1}
Suppose Assumption \ref{ass:irr}  
and Assumption~\ref{ass:2} hold.
Then the robust Nash‑iteration algorithm (Algorithm~\ref{algo: rni})
converges, from any initialization $h^0$, to an average‑reward
robust Nash equilibrium of the DR‑MG.
\end{theorem}

\begin{proof}
We split the proof into two parts.

\paragraph{(i) Any limit point (if convergent) of Nash iteration is a robust NE.}
Consider one iteration of Algorithm~\ref{algo: rni}.
For each player $i$, let $h_i^0$ be the current bias estimate,
define
\[
  Q_i(s,a;h_i^0) \;\triangleq\; r_i(s,a) + \sigma_s^a(h_i^0),
\]
and let $\pi(\cdot\mid s)$ be a Nash equilibrium of the one‑stage game
with payoffs $\{Q_i(s,\cdot;h_i^0)\}_{i\in\mathcal N}$, chosen according
to Assumption~\ref{ass:2}. The algorithm then sets
\[
  h_i(s)
  \;\triangleq\;
  \mathbb E_{a\sim \pi(\cdot\mid s)} [ Q_i(s,a;h_i^0) ],
  \qquad s\in\mathcal S.
\]
The stopping condition is
\[
  \max_{i\in\mathcal N} \operatorname{sp}(h_i - h_i^0) = 0,
\]
where $\operatorname{sp}(x) = \max_s x(s) - \min_s x(s)$
is the span seminorm. The condition $\operatorname{sp}(h_i - h_i^0)=0$
is equivalent to
\[
  h_i - h_i^0 = c_i \mathbf 1
\]
for some scalar $c_i\in\mathbb R$, where $\mathbf 1$ is the all‑ones vector.

Assume the algorithm terminates at a pair $(\pi,h)$ satisfying the stopping
condition. Then for each player $i$ and state $s$,
\begin{align*}
  h_i(s)
  &= \mathbb E_{\pi(\cdot\mid s)}[r_i(s,a) + \sigma_s^a(h_i^0)]
    \\
  &= \mathbb E_{\pi(\cdot\mid s)}[r_i(s,a) + \sigma_s^a(h_i - c_i\mathbf 1)].
\end{align*}
Using the translation‑invariance of the support function,
for any $v$ and scalar $c$,
\[
  \sigma_s^a(v - c\mathbf 1)
  = \min_{q\in\mathcal U(s,a)} q^\top(v - c\mathbf 1)
  = \min_{q} \bigl( q^\top v - c \bigr)
  = \sigma_s^a(v) - c,
\]
we conclude that
\[
  h_i(s)
  = \mathbb E_{\pi(\cdot\mid s)}[r_i(s,a) + \sigma_s^a(h_i) - c_i]
  = \mathbb E_{\pi(\cdot\mid s)}[r_i(s,a) + \sigma_s^a(h_i)] - c_i.
\]
Rearranging,
\begin{equation}
  h_i(s) + c_i
  = \sum_a \pi(a\mid s)
    \bigl( r_i(s,a) + \sigma_s^a(h_i) \bigr),
  \qquad s\in\mathcal S.
  \label{eq:E88}
\end{equation} 

Define $g_i \triangleq c_i$ and set $V_i \triangleq h_i$.
Then \eqref{eq:E88} can be written as
\begin{equation}
  V_i(s)
  = \sum_a \pi(a\mid s)
    \bigl( r_i(s,a) - g_i + \sigma_s^a(V_i) \bigr),
  \qquad s\in\mathcal S.
  \label{eq:fixed-policy-bellman}
\end{equation}
Thus $(g_i,V_i)$ solves the fixed‑policy robust Bellman equation (6)
for player $i$ under the joint policy $\pi$. By Theorem \ref{thm:solvability}, for each $i$
there exists a worst‑case kernel $P\in\mathcal P$ such that
$g_i = g^\pi_{P,i}$ and $V_i$ coincides with the robust bias
$h^\pi_{P,i}$ up to an additive constant; in particular, we may choose
the anchor so that $V_i = h^\pi_{P,i}$ and
\begin{equation}
  h^\pi_{P,i}(s) + g^\pi_{P,i}
  = \sum_a \pi(a\mid s)
    \bigl( r_i(s,a) + \sigma_s^a(h^\pi_{P,i}) \bigr),
  \qquad s\in\mathcal S.
  \label{eq:E89}
\end{equation}

Now suppose, for contradiction, that $\pi$ is not a robust average‑reward
Nash equilibrium. Then there exist a player $i$ and an alternative
stationary policy $\mu_i$ such that
\begin{equation}
  g^{(\pi_{-i},\mu_i)}_{P,i}
  >
  g^{(\pi_{-i},\pi_i)}_{P,i}
  = g^\pi_{P,i}.
  \label{eq:E90}
\end{equation} 
By Theorem \ref{thm:solvability} applied to the induced DR‑MDP $\mathsf{M}_i(\pi_{-i})$
with policy $\mu_i$, there exists a robust bias $h_i^\mu$ such that
\begin{equation}
  h_i^\mu(s) + g^{(\pi_{-i},\mu_i)}_{P,i}
  = \sum_a (\pi_{-i},\mu_i)(a\mid s)
    \bigl( r_i(s,a) + \sigma_s^a(h_i^\mu) \bigr),
  \qquad s\in\mathcal S.
  \label{eq:E91}
\end{equation}
Combining \eqref{eq:E89} and \eqref{eq:E91}, we have
\begin{align}
  &\ r_i^{(\pi_{-i},\mu_i)} + \sigma^{(\pi_{-i},\mu_i)}(h_i^\mu)
    - h_i^\mu
    \nonumber\\
  =\;& g^{(\pi_{-i},\mu_i)}_{P,i}
    > g^\pi_{P,i}
    = r_i^\pi + \sigma^\pi(h^\pi_{P,i}) - h^\pi_{P,i},
  \label{eq:E92}
\end{align}
where the inequality uses \eqref{eq:E90},
and the equalities are vector versions of
\eqref{eq:E89}–\eqref{eq:E91}.

On the other hand, at each state $s$,
the algorithm selects a Nash equilibrium $\pi(\cdot\mid s)$
of the one‑stage game with payoffs
$Q_i(s,a;h^\pi_{P,i}) = r_i(s,a) + \sigma_s^a(h^\pi_{P,i})$.
The NE property implies that for any alternative mixed action
$\mu_i(\cdot\mid s)$,
\[
  \mathbb E_{a\sim(\pi_{-i},\mu_i)} Q_i(s,a;h^\pi_{P,i})
  \le
  \mathbb E_{a\sim\pi(\cdot\mid s)} Q_i(s,a;h^\pi_{P,i}).
\]
Equivalently,
\begin{equation}
  r_i^{(\pi_{-i},\mu_i)} + \sigma^{(\pi_{-i},\mu_i)}(h^\pi_{P,i})
  \;\le\;
  r_i^\pi + \sigma^\pi(h^\pi_{P,i}).
  \label{eq:E93}
\end{equation}
Subtracting $h^\pi_{P,i}$ from both sides of \eqref{eq:E93} and combining
with \eqref{eq:E92}, we obtain
\begin{align}
  &\ r_i^\pi + \sigma^\pi(h^\pi_{P,i}) - h^\pi_{P,i}
    \nonumber\\
  <\;&
   r_i^{(\pi_{-i},\mu_i)} + \sigma^{(\pi_{-i},\mu_i)}(h_i^\mu)
   - h_i^\mu
    \nonumber\\
  \le\;&
    \sigma^{(\pi_{-i},\mu_i)}(h_i^\mu)
    - h_i^\mu
    + r_i^\pi + \sigma^\pi(h^\pi_{P,i})
    - \sigma^{(\pi_{-i},\mu_i)}(h^\pi_{P,i}).
  \label{eq:E94}
\end{align}
Rearranging \eqref{eq:E94} yields
\begin{equation}
  h_i^\mu - h^\pi_{P,i}
  < \sigma^{(\pi_{-i},\mu_i)}(h_i^\mu)
    - \sigma^{(\pi_{-i},\mu_i)}(h^\pi_{P,i}).
  \label{eq:E94b}
\end{equation}

Now apply   \Cref{lem:E2} to the convex function
$v\mapsto \sigma^{(\pi_{-i},\mu_i)}(v)$ with $V_1=h_i^\mu$ and
$V_2=h^\pi_{P,i}$. \Cref{lem:E2} ensures the existence of a kernel
$P\in\mathcal P^{(\pi_{-i},\mu_i)}$ such that
\[
  \sigma^{(\pi_{-i},\mu_i)}(h_i^\mu)
  - \sigma^{(\pi_{-i},\mu_i)}(h^\pi_{P,i})
  = P (h_i^\mu - h^\pi_{P,i}).
\]
Substituting this into \eqref{eq:E94b} gives
\begin{equation}
  h_i^\mu - h^\pi_{P,i}
  < P (h_i^\mu - h^\pi_{P,i})
  \quad\text{componentwise.}
  \label{eq:E95}
\end{equation}
Let $x \triangleq h_i^\mu - h^\pi_{P,i}$.
Then \eqref{eq:E95} reads $x < P x$ componentwise.

By Assumption \ref{ass:irr}, every kernel in $\mathcal P$ (and thus $P$) is irreducible.
For an irreducible stochastic matrix $P$, any vector $x$, with
$x \le P x$ componentwise, must be constant, i.e.\ $x=c\mathbf 1$
for some scalar $c$. A standard proof goes as follows:
let $m = \min_s x(s)$ and choose $s_0$ with $x(s_0)=m$.
From $x(s_0)\le\sum_{s'}P(s'|s_0)x(s')$ and $x(s')\ge m$,
we deduce that $x(s')=m$ whenever $P(s'|s_0)>0$.
By irreducibility, every state is reachable from $s_0$
under some power $P^n$, and the same argument applied iteratively
shows $x(s)=m$ for all $s$. Applying the same reasoning to $-x$
if needed yields the claim for $x\ge Px$. Thus
\begin{equation}
  h_i^\mu - h^\pi_{P,i} = c_i \mathbf 1
  \label{eq:E96}
\end{equation}
for some scalar $c_i$.

Plugging \eqref{eq:E96} into the Bellman equations
\eqref{eq:E89}–\eqref{eq:E91} and using the translation‑invariance
of the support function again, we see that the corresponding gains
must coincide:
\[
  g^{(\pi_{-i},\mu_i)}_{P,i} = g^\pi_{P,i},
\]
which contradicts \eqref{eq:E90}. Therefore no such profitable deviation
$\mu_i$ exists, and $\pi$ is a robust average‑reward Nash equilibrium.

\paragraph{(ii) Contraction of the Nash operator and convergence.}
Let $h=(h_1,\dots,h_N)\in\mathbb R^{N|\mathcal S|}$ and define
\[
  T_i(h)(s)
  \;\triangleq\;
  \mathbb E_{\pi(\cdot\mid s)}\bigl[
    r_i(s,a) + \sigma_s^a(h_i)
  \bigr],
  \qquad
  T(h) \triangleq (T_1(h),\dots,T_N(h)),
\]
where at each state $s$ the policy $\pi(\cdot\mid s)$ is a Nash equilibrium
of the one‑stage game with payoffs $Q_i(s,a;h_i)$, selected according
to Assumption~\ref{ass:2}.

We show that $T$ is a contraction in the product span seminorm
$\|h\|_{\mathrm{sp}} \triangleq \max_i \operatorname{sp}(h_i)$,
where $\operatorname{sp}(x)=\max_s x(s)-\min_s x(s)$.

Fix $h,h'$ and a player $i$ and state $s$.

\emph{Case 1 (global optimality).}
If the NE selection at $s$ is globally optimal in the sense of
Assumption~\ref{ass:2}, then
\begin{align}
  T_i(h)(s) - T_i(h')(s)
  &= \max_{\pi(\cdot\mid s)}
     \mathbb E_{\pi(\cdot\mid s)}
     \bigl[ r_i(s,a) + \sigma_s^a(h_i) \bigr]
     \nonumber\\
  &\quad
   - \max_{\pi'(\cdot\mid s)}
     \mathbb E_{\pi'(\cdot\mid s)}
     \bigl[ r_i(s,a) + \sigma_s^a(h'_i) \bigr]
     \nonumber\\
  &\le
     \mathbb E_{\pi(\cdot\mid s)}
     \bigl[ \sigma_s^a(h_i) - \sigma_s^a(h'_i) \bigr]
     \nonumber\\
  &\le
     \max_a \bigl( \sigma_s^a(h_i) - \sigma_s^a(h'_i) \bigr).
  \label{eq:E97}
\end{align}

\emph{Case 2 (saddle point).}
If the NE is a saddle point, Assumption~\ref{ass:2} ensures
that for any joint policy $\pi'$ differing only in player $i$’s component,
\begin{align}
  T_i(h)(s) - T_i(h')(s)
  &= \mathbb E_{\pi(\cdot\mid s)}
     \bigl[ r_i(s,a) + \sigma_s^a(h_i) \bigr]
     \nonumber\\
  &\quad -
     \mathbb E_{\pi'(\cdot\mid s)}
     \bigl[ r_i(s,a) + \sigma_s^a(h'_i) \bigr]
     \nonumber\\
  &\le
     \mathbb E_{(\pi'_{-i},\pi_i)(\cdot\mid s)}
     \bigl[ r_i(s,a) + \sigma_s^a(h_i) \bigr]
     \nonumber\\
  &\quad -
     \mathbb E_{(\pi'_{-i},\pi_i)(\cdot\mid s)}
     \bigl[ r_i(s,a) + \sigma_s^a(h'_i) \bigr]
     \nonumber\\
  &\le
     \max_a \bigl( \sigma_s^a(h_i) - \sigma_s^a(h'_i) \bigr).
  \label{eq:E98}
\end{align}
By symmetry (swapping $h$ and $h'$), we also obtain the lower bound
\begin{equation}
  T_i(h)(s) - T_i(h')(s)
  \;\ge\;
  \min_a \bigl( \sigma_s^a(h_i) - \sigma_s^a(h'_i) \bigr).
  \label{eq:E99}
\end{equation}
Combining \eqref{eq:E97}–\eqref{eq:E99} we have, for each $(i,s)$,
\[
  \min_a \bigl( \sigma_s^a(h_i) - \sigma_s^a(h'_i) \bigr)
  \le T_i(h)(s) - T_i(h')(s)
  \le \max_a \bigl( \sigma_s^a(h_i) - \sigma_s^a(h'_i) \bigr).
\]
Taking maxima and minima over $s$ and using \Cref{lem:E2} below
(which shows that the map $v\mapsto\sigma_s^a(v)$ has the same
Dobrushin coefficient as a stochastic matrix in the span seminorm),
we obtain
\[
  \operatorname{sp}\bigl( T_i(h) - T_i(h') \bigr)
  \le \tau \, \operatorname{sp}(h_i - h'_i),
\]
for some $\tau<1$ depending only on the uncertainty set and the
irreducibility constant (\Cref{lem:E1}).
Taking the maximum over $i$ yields
\[
  \|T(h) - T(h')\|_{\mathrm{sp}}
  \le \tau \|h - h'\|_{\mathrm{sp}}.
\]
Thus $T$ is a contraction in the product span seminorm.

By the Banach fixed‑point theorem, $T$ has a unique fixed point
$h^\ast$ (modulo constants); iterating Algorithm \ref{algo: rni} (without stopping)
converges to $h^\ast$ from any initialization $h^0$.
The associated per‑state Nash selection
$\pi^\ast(\cdot\mid s)$ at $h^\ast$ is exactly the policy
considered in part~(i), hence is a robust average‑reward Nash equilibrium,
and the claimed convergence follows.
\end{proof}

\subsection{Technical Lemmas}

\begin{lemma}\label[lemma]{lem:E1}
For a stochastic matrix $P$ define the ergodicity coefficient
\[
  \tau(P)
  \;\triangleq\;
  \frac12 \max_{s,s'\in\mathcal S}
  \sum_{s''\in\mathcal S} \bigl|P(s''\mid s) - P(s''\mid s')\bigr|.
\]
Then for all $x,y\in\mathbb R^{|\mathcal S|}$,
\[
  \operatorname{sp}(P x - P y)
  \;\le\;
  \tau(P)\, \operatorname{sp}(x-y).
\]
If $P$ is irreducible, there exists $\bar\tau<1$ (depending only on $P$)
such that $\tau(P)\le \bar\tau$.
\end{lemma}

\begin{proof}
This is the standard Dobrushin contraction in the span seminorm; see
e.g. Sec.~16.3 of \cite{puterman2014markov} or \citep{filar2012competitive}.
For completeness, one can write
$Px-Py = P(x-y)$ and note that taking $x-y$ supported on two states
attaining the maximum and minimum values attains the worst‑case
contraction factor, which is precisely $\tau(P)$ as defined above.
Irreducibility implies $\tau(P)<1$.
\end{proof}

\begin{lemma}[Support-function difference as a kernel action]\label[lemma]{lem:E2}
Assume each uncertainty set $\mathcal{P}_s^a$ is nonempty, compact, and convex.
\begin{enumerate}
\item[(a)] \textbf{State-action level.}  
For any $v_1, v_2 \in \mathbb{R}^{|S|}$ and any fixed pair $(s,a)$, there exists a vector
$p^\star_{s,a} \in \mathcal{P}_s^a$ such that
\[
\sigma_s^a(v_1) - \sigma_s^a(v_2)
=
(p^\star_{s,a})^\top (v_1 - v_2).
\]
\item[(b)] \textbf{Statewise operator and global kernel.}
Fix a policy $\pi$, and define $\sigma^\pi$ as above. Then for any $v_1, v_2 \in \mathbb{R}^{|S|}$ there exists a stochastic kernel $P^\star \in \mathcal{P}$ (constructed row by row) such that
\[
\sigma^\pi(v_1) - \sigma^\pi(v_2) = P^\star (v_1 - v_2).
\]
In particular, when we view $\sigma^a(\cdot)$ or $\sigma^\pi(\cdot)$ as statewise support operators, their difference can always be written as the action of a suitable transition kernel on $v_1 - v_2$.
\end{enumerate}
\end{lemma}

\begin{proof}
We begin with part (a) for a fixed $(s,a)$.
 
Since $\mathcal{P}_s^a$ is compact and $p \mapsto p^\top v$ is continuous, the minima defining $\sigma_s^a(v_1)$ and $\sigma_s^a(v_2)$ are attained. Choose
\[
p^1_{s,a} \in \arg\min_{p \in \mathcal{P}_s^a} p^\top v_1,
\qquad
p^2_{s,a} \in \arg\min_{p \in \mathcal{P}_s^a} p^\top v_2.
\]
Denote
\[
\sigma_1 \triangleq \sigma_s^a(v_1) = (p^1_{s,a})^\top v_1,
\qquad
\sigma_2 \triangleq \sigma_s^a(v_2) = (p^2_{s,a})^\top v_2,
\qquad
\delta \triangleq v_1 - v_2.
\]
 
From the optimality of $p^1_{s,a}$ and $p^2_{s,a}$ we have
\[
(p^1_{s,a})^\top v_1 \le (p^2_{s,a})^\top v_1,
\qquad
(p^2_{s,a})^\top v_2 \le (p^1_{s,a})^\top v_2.
\]
Rewriting,
\begin{align*}
(p^1_{s,a})^\top \delta
&= (p^1_{s,a})^\top v_1 - (p^1_{s,a})^\top v_2
 \;\le\; \sigma_1 - \sigma_2, \\
(p^2_{s,a})^\top \delta
&= (p^2_{s,a})^\top v_1 - (p^2_{s,a})^\top v_2
 \;\ge\; \sigma_1 - \sigma_2.
\end{align*}
Thus
\[
(p^1_{s,a})^\top \delta
\;\le\;
\sigma_s^a(v_1) - \sigma_s^a(v_2)
\;\le\;
(p^2_{s,a})^\top \delta.
\]
 
Because $\mathcal{P}_s^a$ is convex, the entire line segment
\[
p^\lambda_{s,a} \;\triangleq\; \lambda p^1_{s,a} + (1 - \lambda)p^2_{s,a},
\qquad \lambda \in [0,1],
\]
lies in $\mathcal{P}_s^a$. Define
\[
h(\lambda) \;\triangleq\; (p^\lambda_{s,a})^\top \delta
=
\lambda (p^1_{s,a})^\top \delta + (1-\lambda)(p^2_{s,a})^\top \delta.
\]
The map $\lambda \mapsto h(\lambda)$ is continuous and linear in $\lambda$, and its image on $[0,1]$ is exactly the interval
\[
\bigl[(p^1_{s,a})^\top \delta,\, (p^2_{s,a})^\top \delta\bigr].
\]
Since $\sigma_s^a(v_1) - \sigma_s^a(v_2)$ lies between these two endpoints, there exists $\lambda^\star \in [0,1]$ such that
\[
h(\lambda^\star) = \sigma_s^a(v_1) - \sigma_s^a(v_2).
\]
Setting $p^\star_{s,a} \triangleq p^{\lambda^\star}_{s,a} \in \mathcal{P}_s^a$ yields
\[
\sigma_s^a(v_1) - \sigma_s^a(v_2)
= (p^\star_{s,a})^\top \delta,
\]
which proves part (a).

Fix a policy $\pi$ and a state $s$. The aggregated set $\mathcal{P}_s^\pi$ is, by definition, the image of the product
$\prod_a \mathcal{P}_s^a$ under the continuous affine map
\[
(p_s^a)_a \;\longmapsto\; \sum_a \pi(a \mid s)\,p_s^a,
\]
so $\mathcal{P}_s^\pi$ is also nonempty, compact, and convex. Repeating the argument of Steps 1–3 with $\mathcal{P}_s^\pi$ in place of $\mathcal{P}_s^a$, we obtain for each $s$ a vector $p^\star_s \in \mathcal{P}_s^\pi$ such that
\[
\sigma_s^\pi(v_1) - \sigma_s^\pi(v_2)
= (p^\star_s)^\top (v_1 - v_2).
\]
Now define a (row-stochastic) kernel $P^\star$ on $S$ by setting its $s$-th row to be $P^\star(\cdot \mid s) \triangleq p^\star_s$.
By construction $P^\star(\cdot \mid s) \in \mathcal{P}_s^\pi$ for every $s$, so $P^\star$ is an admissible kernel under $\pi$.
Stacking the statewise equalities yields
\[
\sigma^\pi(v_1) - \sigma^\pi(v_2) = P^\star (v_1 - v_2),
\]
as claimed.
\end{proof}

\section{Robust TD-Descent Analysis}
\label{sec:appendix-true-gradient}

In this section, we develop our studies on our robust TD based algorithm. Throughout, we assume the DR-MG satisfies Assumption~\ref{ass:irr} (irreducibility), so that all induced DR-MDPs are irreducible and admit unique average rewards, and the corresponding robust Bellman equation is solvable.

\subsection{Computational Complexity}\label{app:comp com}
We first discuss the computational complexity of \Cref{alg:ac-merit}. Recall in the algorithm,  the following steps are executed:
\begin{itemize}
    \item For each agent $k$, solve the robust Bellman equation in \eqref{eq:Bellman} for the induced DR-MDP to get $(g^n_k, V^n_k)$
    \item For all $(k,s,a_k)$, compute robust TD errors $\Omega^{R,n}_k(s,a_k)$ and gradient components $\Phi^n_k(s,a_k)$
    \item Form the TD direction $D^n$ and project to get $\pi^{n+1}$
\end{itemize}

\textbf{Robust Bellman step.} Solving a single-agent robust Bellman equation can be done in $T_{RB}(S,A_k)$ time which is polynomial in $S, A_k$ \cite{xu2025finite}. Hence the total complexity is $\mathcal{O}(NT_{RB}(S,A_{\max}))$. 

\textbf{TD error step.} For each $(k,s,a_k)$, the computation cost is $\mathcal{O}\left(\frac{A}{A_k}C_\sigma \right)$, where $C_\sigma$ is the computational cost for solving the support function. Notably, for commonly uncertainty sets defined through divergence functions, $C_\sigma$ is also polynomial \cite{iyengar2005robust}. Thus the total complexity is $\mathcal{O}\left(SNAC_\sigma \right)$.

\textbf{Gradient computation step.} Notably, $\Delta_{\mathrm{R}}$ is a finite sum of terms of the form $\alpha \prod_i\pi_i(a_i|s)$, thus computing its Moreau envelop and gradient is itself a polynomial in the $\pi_i$. Computing them at most involves sums over joint actions similar to the TD‑error computation, so an upper bound is $\mathcal{O}\left(SNA \right)$. 

\textbf{Update step.} From the definition of $D^n$, its computation cost $\mathcal{O}(S\su\mathsf{M}_i A_i)$; And the following projection step to the product of simplexes has a complexity of $\mathcal{O}\left(A_k\log A_k\right)$ for each $(k,s)$. Thus the total complexity is $\mathcal{O}\left(S\sum_k A_k\log A_k \right)$. 

Hence, the total computational complexity for each step is \begin{align}
    O\Big(
N T_{\mathrm{RB}}(S,A_{\max},\delta)
+
S N A C_\sigma
\Big),
\end{align}
which is polynomial in the size of the DR-MG parameters.

\subsection{Robust TD Errors, Global TD Gap, and Feasible Set}

For a joint policy $\pi\in\Pi$ and player $k\in\mathcal N$, the induced DR-MDP
$M_k(\pi_{-k})$ is
\[
M_k(\pi_{-k})
= \bigl(\mathcal S,\mathcal A_k,\mathcal P^{\pi_{-k}},r_k^{\pi_{-k}}\bigr),
\]
where
\begin{align} 
r_k^{\pi_{-k}}(s,a_k)
&\triangleq \sum_{a_{-k}\in\mathcal A_{-k}} \pi_{-k}(a_{-k}\mid s)\,r_k(s,a_k,a_{-k}),\\
\mathcal P^{\pi_{-k}}_{s,a_k}
&\triangleq \Bigl\{
      \sum_{a_{-k}\in\mathcal A_{-k}}
        \pi_{-k}(a_{-k}\mid s)\,p(\cdot\mid s,a_k,a_{-k}) : 
        p(\cdot\mid s,a_k,a_{-k})\in \cp^{(a_k,a_{-k})}_s
   \Bigr\}.
\end{align}
For $V_k\in\mathbb R^{\mathcal S}$ and a convex ambiguity slice
$U\subseteq\Delta(\mathcal S)$ we denote the robust support function
$\sigma_U(V_k)\triangleq\min_{p\in U} p^\top V_k$.

We then define our robust TD notions. 
\begin{definition}[Robust TD error]
\label{def:robust-TD-true}
Fix a joint policy $\pi$ and a player $k\in\mathcal N$. For $s\in\mathcal S$
and $a_k\in\mathcal A_k$, and given $(g_k,V_k)\in\mathbb R\times
\mathbb R^{\mathcal S}$, define:
\begin{align}
Q^R_k(s,a_k;\pi_{-k}\mid V_k)
&\triangleq \sum_{a_{-k}\in\mathcal A_{-k}}
      \pi_{-k}(a_{-k}\mid s)\,
      \Bigl(r_k(s,a_k,a_{-k})
            + \sigma_{\cp^{(a_k,a_{-k})}_s}(V_k)\Bigr),
\label{eq:QR-true-grad}
\\
\Omega^R_k(s,a_k;\pi_{-k}\mid g_k,V_k)
&\triangleq g_k + V_k(s) - Q^R_k(s,a_k;\pi_{-k}\mid V_k). 
\end{align}
\end{definition}

\begin{definition} 
\label{def:TD-gap-feasible-true}
For $(\pi,g,V)\in\Pi\times\mathbb R^{\mathcal N}\times(\mathbb R^{\mathcal S})^{\mathcal N}$
define the \emph{global robust TD gap}
\begin{equation}
\Delta_{\mathrm{R}}(\pi,g,V)
\triangleq \sum_{k\in\mathcal N}\sum_{s\in\mathcal S}\sum_{a_k\in\mathcal A_k}
   \pi_k(a_k\mid s)\,
   \Omega^R_k(s,a_k;\pi_{-k}\mid g_k,V_k), 
\end{equation}
and the \emph{feasible set}
\begin{equation}
\mathcal F
\triangleq \Bigl\{(\pi,g,V):\;
        \pi\in\Pi,\;
        \Omega^R_k(s,a_k;\pi_{-k}\mid g_k,V_k)\ge 0,\;
        \forall k,s,a_k
   \Bigr\}. 
\end{equation}
\end{definition}

We first show that robust Nash equilibria can be characterized by a
complementarity condition on the robust TD errors.

\begin{proposition}[Robust TD complementarity]
\label{prop:robust-td-complementarity}
Suppose Assumption~\ref{ass:irr} holds.
A stationary joint policy
$\pi = (\pi_k)_{k \in \mathcal{N}}$ is a stationary robust Nash
equilibrium if and only if there exist scalars
$g_k \in \mathbb{R}$ and vectors $V_k \in \mathbb{R}^{|S|}$,
$k \in \mathcal{N}$, such that for every $k \in \mathcal{N}$,
$s \in S$ and $a_k \in \mathcal A_k$,
\begin{subequations}
\label{eq:td-complementarity}
\begin{align}
  &\Omega_k^{\mathrm{R}}(s,a_k;\pi_{-k}\mid g_k,V_k) \;\ge 0,
  \label{eq:td-comp-ineq} \\
  &\sum_{a_k \in \mathcal A_k}
    \pi_k(a_k \mid s)\,
    \Omega_k^{\mathrm{R}}(s,a_k;\pi_{-k}\mid g_k,V_k)
  = 0,
  \label{eq:td-comp-orth} \\
  &\sum_{a_k \in \mathcal A_k} \pi_k(a_k\mid s) = 1,\quad
    \pi_k(a_k\mid s) \ge 0.
  \label{eq:td-comp-simplex}
\end{align}
\end{subequations}
\end{proposition}

\begin{proof}
$(\Rightarrow)$ Assume $\pi$ is a stationary robust Nash equilibrium.
Fix a player $k$ and the opponents' policy $\pi_{-k}$, and consider
the induced DR-MDP $M_k(\pi_{-k})$.
By Assumption~\ref{ass:irr} this induced DR-MDP also
satisfies irreducibility.  

Let $(g_k,V_k)$ be a gain/bias pair for an optimal robust policy in
$M_k(\pi_{-k})$.\footnote{Such a pair exists by the robust
average-reward Bellman results in Theorem~\ref{thm:solvability} of the main text.}  
By optimality, $(g_k,V_k)$ satisfies the optimal robust Bellman
equation
\begin{equation}
  V_k(s)
  =
  \max_{a_k \in \mathcal A_k}
  \Bigl\{
     r_k^{\pi_{-k}}(s,a_k)
     - g_k
     + \sigma_{\mathcal{P}^{\pi_{-k}}_{s,a_k}}(V_k)
  \Bigr\},
  \qquad s \in S,
  \label{eq:opt-robust-bellman}
\end{equation}
where $r_k^{\pi_{-k}}$ and $\mathcal{P}^{\pi_{-k}}_{s,a_k}$ as above.  By the definition of $Q_k^{\mathrm{R}}$,
\eqref{eq:opt-robust-bellman} is equivalent to
\begin{equation}
  g_k + V_k(s)
  =
  \max_{a_k \in \mathcal A_k}
  Q_k^{\mathrm{R}}(s,a_k;\pi_{-k}\mid V_k),
  \qquad s \in S.
  \label{eq:opt-robust-bellman-Q}
\end{equation}

Using the TD error definition \eqref{def:robust-TD-true}, we can rewrite
\eqref{eq:opt-robust-bellman-Q} as
\[
  \Omega_k^{\mathrm{R}}(s,a_k;\pi_{-k}\mid g_k,V_k)
  = g_k + V_k(s) - Q_k^{\mathrm{R}}(s,a_k;\pi_{-k}\mid V_k)
  \;\ge\; 0
  \quad \text{for all } a_k \in \mathcal A_k,
\]
with equality for those $a_k$ that achieve the maximum
in~\eqref{eq:opt-robust-bellman-Q}.  Thus~\eqref{eq:td-comp-ineq}
holds, and moreover,
\[
  g_k + V_k(s)
  = \max_{a_k} Q_k^{\mathrm{R}}(s,a_k;\pi_{-k}\mid V_k)
  = Q_k^{\mathrm{R}}(s,a_k;\pi_{-k}\mid V_k)
  \quad \text{whenever } \pi_k(a_k\mid s) > 0.
\]
Equivalently,
$\Omega_k^{\mathrm{R}}(s,a_k;\pi_{-k}\mid g_k,V_k) = 0$ whenever
$\pi_k(a_k\mid s) > 0$, so
\[
  \sum_{a_k} \pi_k(a_k\mid s)
  \Omega_k^{\mathrm{R}}(s,a_k;\pi_{-k}\mid g_k,V_k)
  = 0.
\]
This is exactly~\eqref{eq:td-comp-orth}.  The simplex constraints
\eqref{eq:td-comp-simplex} hold by definition of a stationary policy.

\medskip
$(\Leftarrow)$ Now suppose $\pi$ and $(g_k,V_k)$ satisfy
\eqref{eq:td-complementarity}.  Fix $k$ and $\pi_{-k}$, and consider
again the induced DR-MDP $M_k(\pi_{-k})$.

For any $s \in S$ and $a_k \in \mathcal A_k$, the inequality
\eqref{eq:td-comp-ineq} can be rewritten as
\[
  g_k + V_k(s)
  \ge
  Q_k^{\mathrm{R}}(s,a_k;\pi_{-k}\mid V_k),
\]
so $g_k + V_k(s)$ is an upper bound on the robust one-step return over
all actions $a_k$ at state $s$.  Combining this with
\eqref{eq:td-comp-orth}, we obtain
\begin{align*}
  0
  &= \sum_{a_k} \pi_k(a_k\mid s)\,
    \Omega_k^{\mathrm{R}}(s,a_k;\pi_{-k}\mid g_k,V_k) \\
  &= \sum_{a_k} \pi_k(a_k\mid s)\,
     \bigl( g_k + V_k(s) - Q_k^{\mathrm{R}}(s,a_k;\pi_{-k}\mid V_k) \bigr) \\
  &= g_k + V_k(s)
   - \sum_{a_k} \pi_k(a_k\mid s)\,
     Q_k^{\mathrm{R}}(s,a_k;\pi_{-k}\mid V_k).
\end{align*}
Hence
\[
  g_k + V_k(s)
  = \sum_{a_k} \pi_k(a_k\mid s)\,
    Q_k^{\mathrm{R}}(s,a_k;\pi_{-k}\mid V_k).
\]
Since each term $Q_k^{\mathrm{R}}(s,a_k;\cdot)$ is upper bounded by
$g_k + V_k(s)$, their convex combination equals $g_k + V_k(s)$ only if
\[
  Q_k^{\mathrm{R}}(s,a_k;\pi_{-k}\mid V_k)
  = g_k + V_k(s)
  \quad \text{whenever } \pi_k(a_k\mid s) > 0.
\]
That is, $\pi_k(\cdot\mid s)$ puts positive mass only on actions that
maximize $Q_k^{\mathrm{R}}(s,\cdot;\pi_{-k}\mid V_k)$, and thus
$g_k+V_k$ satisfies the optimal robust Bellman
equation~\eqref{eq:opt-robust-bellman-Q}.

By the average-reward theory for DR-MDPs (Theorem~\ref{thm:solvability} in the main
text), this implies that $\pi_k$ is an optimal robust policy in
$M_k(\pi_{-k})$, i.e. a robust best response to $\pi_{-k}$. Since the
argument holds for every $k \in \mathcal{N}$, the joint policy $\pi$
is a stationary robust Nash equilibrium.
\end{proof}

The complementarity conditions~\eqref{eq:td-complementarity}
naturally lead to a global TD gap functional.

% \begin{definition}[Global robust TD gap and feasible set]
% \label{def:global-robust-td-gap}
% For a policy profile $\pi = (\pi_k)_{k \in \mathcal{N}}$, gains
% $g = (g_k)_{k \in \mathcal{N}}$ and bias vectors
% $V = (V_k)_{k \in \mathcal{N}}$, define the global robust TD gap
% \begin{equation}
%   \Delta_{\mathrm{R}}(\pi,g,V)
%   \triangleq 
%   \sum_{k \in \mathcal{N}}
%   \ \sum_{s \in S}
%   \ \sum_{a_k \in A_k}
%   \pi_k(a_k\mid s)\,
%   \Omega_k^{\mathrm{R}}(s,a_k;\pi_{-k}\mid g_k,V_k).
%   \label{eq:TD-gap-true}
% \end{equation}
% We also define the feasible set
% \begin{equation}
%   \mathcal{F}
%   \triangleq 
%   \biggl\{
%     (\pi,g,V) :
%     \Omega_k^{\mathrm{R}}(s,a_k;\pi_{-k}\mid g_k,V_k) \ge 0,\ \forall k,s,a_k, \;
%     \pi_k(\cdot\mid s) \in \Delta(A_k),\ \forall k,s
%   \biggr\}.
%   \label{eq:feasible-set-F}
% \end{equation}
% \end{definition}

\begin{lemma}
\label[lemma]{lem:deltaR-basic}
For any $(\pi,g,V) \in \mathcal{F}$ we have
$\Delta_{\mathrm{R}}(\pi,g,V) \ge 0$.
Moreover, $\Delta_{\mathrm{R}}(\pi,g,V) = 0$ if and only if the
complementarity conditions~\eqref{eq:td-complementarity} hold.
\end{lemma}

\begin{proof}
If $(\pi,g,V) \in \mathcal{F}$ then
$\Omega_k^{\mathrm{R}}(s,a_k;\pi_{-k}\mid g_k,V_k) \ge 0$ and
$\pi_k(a_k\mid s) \ge 0$ for all $k,s,a_k$. Hence every summand in
\eqref{eq:TD-gap-true} is nonnegative and
$\Delta_{\mathrm{R}}(\pi,g,V) \ge 0$.

Conversely, if $\Delta_{\mathrm{R}}(\pi,g,V)=0$, then every term in the
finite sum~\eqref{eq:TD-gap-true} must be zero:
\[
  \pi_k(a_k\mid s)
  \Omega_k^{\mathrm{R}}(s,a_k;\pi_{-k}\mid g_k,V_k) = 0,
  \quad
  \forall k,s,a_k.
\]
Summing over $a_k$ yields
\[
  \sum_{a_k} \pi_k(a_k\mid s)\,
    \Omega_k^{\mathrm{R}}(s,a_k;\pi_{-k}\mid g_k,V_k)
  = 0,
  \quad \forall k,s,
\]
which is exactly~\eqref{eq:td-comp-orth}. The inequalities
\eqref{eq:td-comp-ineq} and simplex constraints
\eqref{eq:td-comp-simplex} already hold by the definition of
$\mathcal{F}$.
The converse direction is immediate from
Definition~\ref{def:TD-gap-feasible-true}.
\end{proof}

Combining Lemma~\ref{lem:deltaR-basic} with
Proposition~\ref{prop:robust-td-complementarity} yields:

\begin{corollary}
\label{cor:ne-as-minimizers-deltaR}
Under Assumption~\ref{ass:irr}:
\begin{enumerate}
\item For any $(\pi,g,V) \in \mathcal{F}$,
  $\Delta_{\mathrm{R}}(\pi,g,V) \ge 0$.
\item A policy profile $\pi$ is a stationary robust Nash equilibrium if
  and only if there exist $(g,V)$ such that
  $(\pi,g,V) \in \mathcal{F}$ and
  $\Delta_{\mathrm{R}}(\pi,g,V) = 0$.
\end{enumerate}
\end{corollary}

Thus, on the feasible set $\mathcal{F}$, robust NE are precisely the
(global) minimizers of $\Delta_{\mathrm{R}}$, and finding a robust NE is equivalent to find the feasible solution of $\Delta_{\mathrm{R}}=0$

\subsection{Smoothed DR-MGs and NE}\label{app:smoothed}
In this section, we introduce the smoothed DR-MG as a proxy of the vanilla DR-MG. 

For $v\in\R^{S}$ and some uncertainty  set $\mathcal P$,
denote the robust support (worst-case expectation) functional
\[
\sigma_{\mathcal P}(v) := \min_{p\in\mathcal P} \langle p, v\rangle.
\]
This map is concave and Lipschitz but may be nonsmooth.

We introduce a standard Moreau-type smoothing through the convex support function.
Define the (convex) support function
\[
h_{\mathcal P}(u) := \max_{p\in\mathcal P} \langle p, u\rangle.
\]
Then $\sigma_{\mathcal P}(v) = -h_{\mathcal P}(-v)$.
For $\lambda>0$, define the Moreau envelope of $h_{\mathcal P}$:
\[
(h_{\mathcal P})_{\lambda}(u) := \min_{w\in\R^{S}}
\left\{ h_{\mathcal P}(w) + \frac{1}{2\lambda}\|w-u\|_2^2 \right\}.
\]
% It is well known that $(h_{\mathcal P})_{\lambda}$ is $C^1$ and its gradient is
% $\text{Lip}(\nabla (h_{\mathcal P})_{\lambda})\le 1/\lambda$.

We define the \emph{smoothed robust expectation}:
\[
\sigma_{\mathcal P}^{\lambda}(v) := - (h_{\mathcal P})_{\lambda}(-v).
\]

% =========================
% Appendix J.6: Smoothness of M_\lambda (correct Lipschitz-gradient proof)
% =========================

\begin{lemma}\label[lemma]{lem:C11_calculus}
Let $Z\subset\mathbb{R}^d$ be convex.

\noindent
(i) (Product rule) Let $f,g:Z\to\mathbb{R}$ be differentiable and assume:
\[
\|\nabla f(x)-\nabla f(y)\|\le L_f\|x-y\|,\quad
\|\nabla g(x)-\nabla g(y)\|\le L_g\|x-y\|,\quad \forall x,y\in Z.
\]
Assume also $\sup_Z |f|\le B_f$, $\sup_Z |g|\le B_g$, $\sup_Z\|\nabla f\|\le B_{\nabla f}$, $\sup_Z\|\nabla g\|\le B_{\nabla g}$.
Then $h:=fg$ satisfies
\[
\|\nabla h(x)-\nabla h(y)\|\le \Big(B_f L_g + B_g L_f + B_{\nabla f}B_{\nabla g}\Big)\,\|x-y\|,\quad \forall x,y\in Z.
\]

\noindent
(ii) (Scalar composition) Let $\varphi:\mathbb{R}\to\mathbb{R}$ be $C^1$ with Lipschitz derivative:
\[
|\varphi'(u)-\varphi'(v)|\le L_\varphi |u-v|,\quad \forall u,v\in\mathbb{R}.
\]
Let $g:Z\to\mathbb{R}$ be differentiable with $\|\nabla g(x)-\nabla g(y)\|\le L_g\|x-y\|$ and $\sup_Z\|\nabla g\|\le B_{\nabla g}$.
Let $B_{\varphi'}:=\sup_{t\in g(Z)}|\varphi'(t)|<\infty$.
Then $h:=\varphi\circ g$ satisfies
\[
\|\nabla h(x)-\nabla h(y)\|\le \Big(B_{\varphi'} L_g + L_\varphi B_{\nabla g}^2\Big)\,\|x-y\|,\quad \forall x,y\in Z.
\]
\end{lemma}

\begin{proof}
(i) $\nabla(fg)=f\nabla g+g\nabla f$, so for any $x,y\in Z$,
\begin{align*}
\|\nabla(fg)(x)-\nabla(fg)(y)\|
&=\| f(x)\nabla g(x)-f(y)\nabla g(y) + g(x)\nabla f(x)-g(y)\nabla f(y)\|\\
&\le \|f(x)(\nabla g(x)-\nabla g(y))\| + \|(f(x)-f(y))\nabla g(y)\|\\
&\quad + \|g(x)(\nabla f(x)-\nabla f(y))\| + \|(g(x)-g(y))\nabla f(y)\|.
\end{align*}
Bound the first and third terms by $B_f L_g\|x-y\|$ and $B_g L_f\|x-y\|$.
For the second term, use the mean-value bound $|f(x)-f(y)|\le B_{\nabla f}\|x-y\|$ (since $f$ is differentiable and
$\sup_Z\|\nabla f\|<\infty$ on the compact sets used in the paper), giving $\le B_{\nabla f}B_{\nabla g}\|x-y\|$.
Similarly the fourth term is $\le B_{\nabla g}B_{\nabla f}\|x-y\|$ (same bound).
Combining yields the stated constant.

(ii) $\nabla(\varphi\circ g)(x)=\varphi'(g(x))\,\nabla g(x)$, hence
\begin{align*}
\|\nabla(\varphi\circ g)(x)-\nabla(\varphi\circ g)(y)\|
&=\|\varphi'(g(x))\nabla g(x)-\varphi'(g(y))\nabla g(y)\|\\
&\le |\varphi'(g(x))|\cdot \|\nabla g(x)-\nabla g(y)\|
+ |\varphi'(g(x))-\varphi'(g(y))|\cdot \|\nabla g(y)\|.
\end{align*}
Bound the first term by $B_{\varphi'} L_g\|x-y\|$.
For the second term, use Lipschitzness of $\varphi'$ and the fact $|g(x)-g(y)|\le B_{\nabla g}\|x-y\|$ to obtain
\[
|\varphi'(g(x))-\varphi'(g(y))| \le L_\varphi |g(x)-g(y)| \le L_\varphi B_{\nabla g}\|x-y\|,
\]
so the second term is $\le L_\varphi B_{\nabla g}^2\|x-y\|$.
\end{proof}

\begin{lemma}[Smoothness and Lipschitz gradient of $M_\lambda$]\label[lemma]{lem:sigma-smooth}
Fix $\lambda>0$ and $\rho>0$. On the compact set $Z=\Pi\times\mathcal{G}\times\mathcal{V}$, the merit function $M_\lambda$
defined in \textup{(117)} is continuously differentiable and has Lipschitz gradient: there exists $L_\lambda<\infty$ such that
for all $z,z'\in Z$,
\[
\|\nabla M_\lambda(z)-\nabla M_\lambda(z')\|\le L_\lambda\|z-z'\|.
\]
\end{lemma}

\begin{proof}
Write $z=(\pi,g,V)\in Z$.

\textbf{Step 0: Uniform bounds on $Z$.}
Because $Z$ is compact, define finite constants
\[
B_g := \max_{g\in\mathcal{G}}\|g\|_\infty,\qquad
B_V := \max_{V\in\mathcal{V}}\|V\|_\infty,
\qquad
R_{\max}:=\max_{k,s,a}|r_k(s,a)|.
\]
Also $\pi\in\Pi$ implies each coordinate $\pi_k(a_k\mid s)\in[0,1]$.

For any ambiguity slice $\mathcal{P}\subset\Delta(\mathcal{S})$ and any $v\in\mathbb{R}^{|\mathcal{S}|}$,
\[
|\sigma_{\mathcal{P},\lambda}(v)| \le \max_{p\in\Delta(\mathcal{S})} |p^\top v| \le \|v\|_\infty.
\]
Thus on $Z$, $|\sigma_{\mathcal{P},\lambda}(V_k)|\le B_V$.

\textbf{Step 1: $\Omega^{RS}_{k,\lambda}$ is $C^{1,1}$ on $Z$.}
Fix $(k,s,a_k)$. Recall (116):
\[
\Omega^{RS}_{k,\lambda}(s,a_k;\pi_{-k}\mid g_k,V_k)
=
g_k + V_k(s) - Q^{RS}_{k,\lambda}(s,a_k;\pi_{-k}\mid V_k),
\]
where
\[
Q^{RS}_{k,\lambda}(s,a_k;\pi_{-k}\mid V_k)
=
\sum_{a_{-k}}\pi_{-k}(a_{-k}\mid s)\Big(r_k(s,a_k,a_{-k})+\sigma_{\mathcal{P}^{(a_k,a_{-k})}_s,\lambda}(V_k)\Big).
\]
By Lemma~\ref{lem:C11_calculus}, for each fixed $(s,a_k,a_{-k})$ the map $V_k\mapsto\sigma_{\mathcal{P}^{(a_k,a_{-k})}_s,\lambda}(V_k)$
is $C^1$ and has Lipschitz gradient with constant at most $1/\lambda$.
The dependence of $Q^{RS}_{k,\lambda}$ on $\pi_{-k}$ is multilinear (polynomial) in the coordinates of $\pi$ and hence has
globally bounded second derivatives on the compact product simplex $\Pi$; therefore $\pi\mapsto Q^{RS}_{k,\lambda}$
has a Lipschitz gradient on $\Pi$ (with some finite constant depending only on $(N,|\mathcal{S}|,|A|)$ and bounds above).
Combining:
\begin{itemize}
\item $(g_k,V_k(s))\mapsto g_k+V_k(s)$ is affine (hence $C^{1,1}$ with zero Lipschitz-gradient constant);
\item $\pi\mapsto \pi_{-k}(a_{-k}\mid s)$ is smooth with Lipschitz gradient on $\Pi$;
\item $V_k\mapsto\sigma_{\mathcal{P},\lambda}(V_k)$ is $C^{1,1}$ with constant $\le 1/\lambda$ by Lemma~\ref{lem:C11_calculus};
\end{itemize}
and using Lemma~\ref{lem:C11_calculus}(i) (product rule) and finite summation over $a_{-k}$, we conclude:
there exists a finite constant $L_{\Omega,\lambda}$ such that for all $z,z'\in Z$,
\begin{equation}\label{eq:Omega_smooth}
\|\nabla \Omega^{RS}_{k,\lambda}(z)-\nabla \Omega^{RS}_{k,\lambda}(z')\|
\le L_{\Omega,\lambda}\,\|z-z'\|.
\end{equation}
In particular, $\Omega^{RS}_{k,\lambda}$ is $C^1$ and $\nabla\Omega^{RS}_{k,\lambda}$ is bounded on $Z$; denote
$B_{\nabla\Omega}:=\sup_{z\in Z}\|\nabla\Omega^{RS}_{k,\lambda}(z)\|<\infty$.

Also, by the bounds in Step 0,
\[
|\Omega^{RS}_{k,\lambda}(z)|
\le |g_k| + |V_k(s)| + \sum_{a_{-k}}\pi_{-k}(a_{-k}\mid s)\big(|r_k|+|\sigma_{\lambda}|\big)
\le B_g + B_V + (R_{\max}+B_V),
\]
so $|\Omega^{RS}_{k,\lambda}(z)|\le B_\Omega$ for some finite $B_\Omega$ independent of $z\in Z$.

\textbf{Step 2: Squared complementarity terms have Lipschitz gradients.}
Define the scalar map
\[
h_{k,s,a_k}(z) := \pi_k(a_k\mid s)\,\Omega^{RS}_{k,\lambda}(s,a_k;\pi_{-k}\mid g_k,V_k).
\]
The coordinate map $z\mapsto \pi_k(a_k\mid s)$ is linear in $z$ (hence $C^{1,1}$ with zero Lipschitz-gradient constant),
and $z\mapsto \Omega^{RS}_{k,\lambda}$ is $C^{1,1}$ by \eqref{eq:Omega_smooth}. By Lemma~\ref{lem:C11_calculus}(i),
there exists $L_h<\infty$ such that $\nabla h_{k,s,a_k}$ is Lipschitz on $Z$:
\[
\|\nabla h_{k,s,a_k}(z)-\nabla h_{k,s,a_k}(z')\|\le L_h\|z-z'\|,\quad \forall z,z'\in Z.
\]
Moreover $\sup_Z |h_{k,s,a_k}|\le B_\Omega$ (since $\pi_k(\cdot\mid s)\in[0,1]$) and
$\sup_Z \|\nabla h_{k,s,a_k}\|<\infty$ by continuity on compactness.

Now consider $\varphi_1(x):=\tfrac12 x^2$. Then $\varphi_1$ is $C^1$ and $\varphi_1'(x)=x$ is $1$-Lipschitz.
Applying Lemma~\ref{lem:C11_calculus}(ii) with $g=h_{k,s,a_k}$ yields that
\[
m_{k,s,a_k}(z):=\tfrac12 h_{k,s,a_k}(z)^2
\]
has Lipschitz gradient on $Z$ with some finite constant $L_{m}<\infty$.

\textbf{Step 3: Negative-part penalty terms have Lipschitz gradients.}
Define $\varphi_2(x):=\tfrac{\rho}{2}[x]_-^2$. This scalar function is $C^1$ everywhere with derivative
\[
\varphi_2'(x)=\rho\,x\,\mathbf{1}\{x<0\},
\]
and $\varphi_2'$ is globally $\rho$-Lipschitz on $\mathbb{R}$ (piecewise linear with slope $\rho$ on $(-\infty,0)$ and $0$ on $(0,\infty)$).
Apply Lemma~\ref{lem:C11_calculus}(ii) with $g(z)=\Omega^{RS}_{k,\lambda}(z)$.
Using \eqref{eq:Omega_smooth} and boundedness of $\nabla\Omega^{RS}_{k,\lambda}$ on $Z$, we obtain that
\[
p_{k,s,a_k}(z):=\tfrac{\rho}{2}\big[\Omega^{RS}_{k,\lambda}(z)\big]_-^2
\]
has Lipschitz gradient on $Z$ with some finite constant $L_{p}<\infty$.

\textbf{Step 4: Sum over finitely many indices.}
Finally, $M_\lambda$ in \textup{(117)} is a finite sum of the $m_{k,s,a_k}$ and $p_{k,s,a_k}$ terms.
A finite sum of functions with Lipschitz gradients has Lipschitz gradient, with constant given by the sum of constants.
Thus there exists $L_\lambda<\infty$ such that
\[
\|\nabla M_\lambda(z)-\nabla M_\lambda(z')\|\le L_\lambda\|z-z'\|,\qquad \forall z,z'\in Z.
\]
This also implies $M_\lambda$ is continuously differentiable on $Z$.
\end{proof}

% \begin{lemma}[Smoothness of $\sigma_{\mathcal P}^{\lambda}$]\label[lemma]{lem:sigma-smooth}
% For any nonempty convex compact $\mathcal P$ and $\lambda>0$,
% $\sigma_{\mathcal P}^{\lambda}$ is concave and  $\nabla \sigma_{\mathcal P}^{\lambda}$ is $\frac{1}{\lambda}$-Lipschitz.
% \end{lemma}

% \begin{proof}
% By definition $\sigma_{\mathcal P}^{\lambda}(v)=-(h_{\mathcal P})_{\lambda}(-v)$.
% Since $(h_{\mathcal P})_{\lambda}$ is $C^1$, so is $\sigma_{\mathcal P}^{\lambda}$.
% Also,
% \[
% \nabla \sigma_{\mathcal P}^{\lambda}(v)
% = - \nabla (h_{\mathcal P})_{\lambda}(-v)\cdot(-I)
% = \nabla (h_{\mathcal P})_{\lambda}(-v).
% \]
% Thus, for any $v,v'$,
% \[
% \|\nabla \sigma_{\mathcal P}^{\lambda}(v)-\nabla \sigma_{\mathcal P}^{\lambda}(v')\|
% = \|\nabla (h_{\mathcal P})_{\lambda}(-v)-\nabla (h_{\mathcal P})_{\lambda}(-v')\|
% \le \frac{1}{\lambda}\|v-v'\|.
% \]
% Concavity follows because $(h_{\mathcal P})_{\lambda}$ is convex and $v\mapsto -v$ is linear,
% so $v\mapsto -(h_{\mathcal P})_{\lambda}(-v)$ is concave.
% \end{proof}

We further use the smoothed operator to define smoothed robust TD errors. 

Fix a player $k$ and an opponents policy $\pi_{-k}$.
For $v\in\R^{S}$ define the smoothed robust $Q$-value as 
\begin{align*}
Q^{RS}_{k,\lambda}(s,a_k;\pi_{-k}\mid v)
&:= \sum_{a_{-k}\in\mca_{-k}} \pi_{-k}(a_{-k}\mid s)\Big( r_k(s,a_k,a_{-k})
+ \sigma_{\mathcal P_s^{(a_k,a_{-k})}}^{\lambda}(v)\Big).
\end{align*}
Given $(g_k,V_k)\in\R\times\R^{S}$ define the (smoothed) robust TD error as 
\begin{equation}\label{eq:omega-lambda}
\Omega^{RS}_{k,\lambda}(s,a_k;\pi_{-k}\mid g_k,V_k)
:= g_k + V_k(s) - Q^{RS}_{k,\lambda}(s,a_k;\pi_{-k}\mid V_k).
\end{equation}

\subsection{Complementarity Merit}
We then extend the TD-error characterization to the smoothed ones. 

For scalar $x$, define the negative part $\negpart{x}=\max\{0,-x\}$.
Define the merit function
\begin{equation}\label{eq:merit}
\mathcal{L}_{\lambda}(\pi,g,V)
:= \frac{1}{2}\sum_{k\in\mathcal{N}}\sum_{s\in\mcs}\sum_{a_k\in\mca_k}
\Big(\pi_k(a_k|s)\,\Omega^{RS}_{k,\lambda}(s,a_k;\pi_{-k}\mid g_k,V_k)\Big)^2
\;+\; \frac{\rho}{2}\sum_{k,s,a_k}\big(\negpart{\Omega^{RS}_{k,\lambda}(s,a_k;\pi_{-k}\mid g_k,V_k)}\big)^2,
\end{equation}
where $\rho>0$ is a fixed penalty parameter.

We then have the following result.

\begin{lemma}[Merit zeros imply feasibility and complementarity]\label[lemma]{lem:merit-zero}
If $\mathcal{L}_{\lambda}(\pi,g,V)=0$, then for every $(k,s,a_k)$:
\[
\Omega^{RS}_{k,\lambda}(s,a_k;\pi_{-k}\mid g_k,V_k)\ge 0,
\qquad
\pi_k(a_k|s)\,\Omega^{RS}_{k,\lambda}(s,a_k;\pi_{-k}\mid g_k,V_k)=0.
\]
Consequently, for each player $k$ and each state $s$, if \[
\Omega^{RS}_{k,\lambda}(s,a_k;\pi_{-k}\mid g_k,V_k)\ge 0,
\qquad
\pi_k(a_k|s)\,\Omega^{RS}_{k,\lambda}(s,a_k;\pi_{-k}\mid g_k,V_k)=0,
\]
then
\[
g_k + V_k(s) = \max_{a_k\in\mca_k} Q^{RS}_{k,\lambda}(s,a_k;\pi_{-k}\mid V_k).
\]
\end{lemma}

\begin{proof}
If $\mathcal{L}_{\lambda}=0$, both nonnegative summands in \eqref{eq:merit} must be zero.
The second term being zero implies $\negpart{\Omega^{RS}_{k,\lambda}(\cdot)}=0$ for all $(k,s,a_k)$,
hence $\Omega^{RS}_{k,\lambda}\ge 0$ for all $(k,s,a_k)$.

The first term being zero implies
$\pi_k(a_k|s)\Omega^{RS}_{k,\lambda}(s,a_k;\pi_{-k}\mid g_k,V_k)=0$ for every $(k,s,a_k)$. This hence proves the first part.

For the second part, fix $k$ and $s$.
Since $\pi_k(\cdot|s)$ is a probability distribution,
there exists at least one action $\bar a_k\in\mca_k$ with $\pi_k(\bar a_k|s)>0$.
For that $\bar a_k$, complementarity forces
$\Omega^{RS}_{k,\lambda}(s,\bar a_k;\pi_{-k}\mid g_k,V_k)=0$.
Thus,
\[
g_k + V_k(s) = Q^{RS}_{k,\lambda}(s,\bar a_k;\pi_{-k}\mid V_k).
\]
But for any $a_k$ we have $\Omega^{RS}_{k,\lambda}(s,a_k;\cdot)\ge 0$, i.e.,
$g_k+V_k(s)\ge Q^{RS}_{k,\lambda}(s,a_k;\pi_{-k}\mid V_k)$.
Therefore $g_k+V_k(s)$ equals the maximum over $a_k$.
\end{proof}

\begin{corollary}[Connection to (smoothed) robust NE]\label{cor:merit-ne}
Any $(\pi,g,V)$ with $\mathcal{L}_{\lambda}(\pi,g,V)=0$ defines a joint policy $\pi$
such that each $\pi_k(\cdot|s)$ is supported on maximizers of $Q^{RS}_{k,\lambda}(s,\cdot;\pi_{-k}\mid V_k)$.
In particular, $\pi$ is a stationary Nash equilibrium of the $\lambda$-smoothed robust game.
\end{corollary}

\begin{proof}
From Lemma~\ref{lem:merit-zero}, for fixed $(k,s)$,
$\Omega^{RS}_{k,\lambda}(s,a_k)\ge 0$ for all $a_k$ and equals $0$ on the support of $\pi_k(\cdot|s)$.
But $\Omega^{RS}_{k,\lambda}(s,a_k)= g_k+V_k(s)-Q^{RS}_{k,\lambda}(s,a_k;\pi_{-k}\mid V_k)$,
so $\Omega^{RS}_{k,\lambda}(s,a_k)=0$ iff $a_k$ attains the maximum of $Q^{RS}_{k,\lambda}(s,\cdot;\pi_{-k}\mid V_k)$.
Thus, $\pi_k(\cdot|s)$ assigns positive probability only to best-response actions at each state $s$,
which is precisely the stationary NE condition for the smoothed robust game.
\end{proof}

\begin{remark}
    Note that when $\lambda\to 0$, $\sigma_{\mathcal P}^{\lambda}(v) \to \sigma_\cp(v)$, thus $\Omega^{RS}_{k,\lambda}\to \Omega^R_{k}$. Hence, when $\mathcal{L}_\lambda=0$ for $\lambda$ small enough, the smoothed robust NE will result in a small value $\Delta^{\mathrm{R}}$, thus is close to a robust NE. Namely, the smoothed NE is an approximation of robust NE. 
\end{remark}

\subsection{Robust TD Algorithm}
We now give an actor-critic recursion that performs projected gradient steps on
$\mathcal{L}_{\lambda}(\pi,g,V)$, with a fast critic step size $\alpha_n$ and a slow actor step size $\eta_n$.

Let $\mathcal G\subset\R^N$ be a nonempty compact convex set for gains $g$
(e.g., a box $[-G_{\max},G_{\max}]^N$).
Let $\mathcal V\subset\R^{NS}$ be a nonempty compact convex set for $V$,
with the anchoring constraints built in (e.g., $V_k(s^\circ)=0$ for all $k$ for a fixed reference state $s^\circ$).

Define the joint constraint set
\[
\mathcal Z := \Pi \times \mathcal G \times \mathcal V.
\]
Let $\Proj_{\Pi}$ and $\Proj_{\mathcal G\times\mathcal V}$ denote Euclidean projections. We then design our algorithm as follows. 

\begin{algorithm}[t]
\caption{Two-time-scale Robust Actor--Critic on the Merit $\mathcal{L}_{\lambda}$} 
\begin{algorithmic}[1]
\STATE \textbf{Input:} step sizes $\{\eta_n\},\{\alpha_n\}$, smoothing $\lambda>0$, penalty $\rho>0$
\STATE Initialize $\pi^0\in\Pi$, $(g^0,V^0)\in\mathcal G\times\mathcal V$
\FOR{$n=0,1,2,\dots$}
    \STATE \textbf{Critic step (fast):}
    \[
    (g^{n+\frac12},V^{n+\frac12})
    = \Proj_{\mathcal G\times\mathcal V}\Big((g^n,V^n) - \alpha_n \nabla_{(g,V)}\mathcal{L}_{\lambda}(\pi^n,g^n,V^n)\Big).
    \]
    \STATE \textbf{Actor step (slow):}
    \[
    \pi^{n+1}
    = \Proj_{\Pi}\Big(\pi^n - \eta_n \nabla_{\pi}\mathcal{L}_{\lambda}(\pi^n,g^{n+\frac12},V^{n+\frac12})\Big).
    \]
    \STATE Set $(g^{n+1},V^{n+1})=(g^{n+\frac12},V^{n+\frac12})$.
\ENDFOR
\end{algorithmic}
\end{algorithm}

\subsection{Convergence Analysis}

We now give a complete convergence analysis for Algorithm~\ref{alg:ac-merit}.
The structure is standard for projected gradient methods on smooth objectives,
with care taken for two step sizes.

\begin{assumption}[Two-time-scale Robbins--Monro]\label[assumption]{ass:steps}
The step sizes satisfy:
\[
\eta_n>0,\quad \alpha_n>0,\quad \sum_{n=0}^\infty \eta_n=\infty,\quad \sum_{n=0}^\infty \eta_n^2<\infty,
\quad \sum_{n=0}^\infty \alpha_n=\infty,\quad \sum_{n=0}^\infty \alpha_n^2<\infty,
\quad \frac{\eta_n}{\alpha_n}\to 0.
\]
\end{assumption}

\begin{lemma}[Smoothness and Lipschitz gradient]\label[lemma]{lem:M-smooth}
Fix $\lambda>0$ and $\rho>0$.
On the compact set $\mathcal Z$, the merit $\mathcal{L}_{\lambda}$ is continuously differentiable,
and its gradient is Lipschitz:
there exists $L_{\lambda}<\infty$ such that for all $z,z'\in\mathcal Z$,
\[
\|\nabla \mathcal{L}_{\lambda}(z)-\nabla \mathcal{L}_{\lambda}(z')\|\le L_{\lambda}\|z-z'\|.
\]
\end{lemma}

\begin{proof}
We argue term by term.

\textbf{(i) Smoothness of $\Omega^{RS}_{k,\lambda}$.}
For fixed $(k,s,a_k)$, $\Omega^{RS}_{k,\lambda}$ is given by \eqref{eq:omega-lambda}.
It is affine in $(g_k,V_k(s))$ and depends on $V_k$ through
$\sigma_{\mathcal P_s^{(a_k,a_{-k})}}^{\lambda}(V_k)$, which is $C^1$ by Lemma~\ref{lem:sigma-smooth}.
It depends on $\pi_{-k}$ linearly. Hence $\Omega^{RS}_{k,\lambda}$ is $C^1$ in $(\pi,g,V)$.

\textbf{(ii) Smoothness of the squared complementarity term.}
The map $(\pi,g,V)\mapsto \pi_k(a_k|s)\,\Omega^{RS}_{k,\lambda}(\cdot)$ is $C^1$ as a product of $C^1$ maps. 

\textbf{(iii) Smoothness of the negative-part penalty.}
The scalar function $x\mapsto \negpart{x}^2$ is $C^1$ everywhere:
it equals $0$ for $x\ge 0$, equals $x^2$ for $x<0$, and has derivative $0$ at $x=0$.
Composing a $C^1$ map $\Omega^{RS}_{k,\lambda}$ with a $C^1$ scalar function preserves $C^1$.

Thus $\mathcal{L}_{\lambda}$ is $C^1$.

\textbf{(iv) Lipschitz gradient on a compact set.}
A $C^1$ function has locally Lipschitz gradient if its Hessian is bounded locally.
Here we only need global Lipschitz on the compact $\mathcal Z$.
Since all constituent maps are $C^1$ and $\mathcal Z$ is compact,
$\nabla \mathcal{L}_{\lambda}$ is continuous on $\mathcal Z$ and hence uniformly Lipschitz on $\mathcal Z$.
Formally, one can bound second derivatives using Lemma~\ref{lem:sigma-smooth}
(which gives $\text{Lip}(\nabla\sigma^{\lambda})\le 1/\lambda$),
boundedness of $\pi$ (simplex), and boundedness of $(g,V)$ (projection set).
\end{proof}

We recall the standard \emph{projected gradient mapping}.

\begin{definition}[Projected gradient mapping]\label{def:gradmap}
Let $f$ be differentiable and $C$ a nonempty closed convex set.
For $\gamma>0$ define
\[
G_{C,\gamma}(x) := \frac{1}{\gamma}\Big(x - \Proj_C(x - \gamma \nabla f(x))\Big).
\]
\end{definition}

\begin{lemma}[One-step descent for projected gradient]\label[lemma]{lem:proj-descent}
Let $f$ be differentiable on $C$ with $L$-Lipschitz gradient.
Fix $\gamma\in(0,2/L)$ and let $y=\Proj_C(x-\gamma \nabla f(x))$.
Then
\[
f(y)\;\le\; f(x) - \gamma\Big(1-\frac{L\gamma}{2}\Big)\,\|G_{C,\gamma}(x)\|^2.
\]
\end{lemma}

\begin{proof}
Let $y=\Proj_C(x-\gamma \nabla f(x))$ and define $d:=x-y$ so that $d=\gamma G_{C,\gamma}(x)$.

\textbf{Step 1: Projection inequality.}
The projection optimality condition gives, for all $z\in C$,
\[
\langle x-\gamma \nabla f(x)-y,\; z-y\rangle \le 0.
\]
Take $z=x\in C$, yielding
\[
\langle x-\gamma\nabla f(x)-y,\; x-y\rangle \le 0
\quad\Rightarrow\quad
\langle d-\gamma\nabla f(x),\; d\rangle \le 0.
\]
Hence
\[
\|d\|^2 \le \gamma \langle \nabla f(x), d\rangle.
\]
Equivalently,
\[
\langle \nabla f(x), y-x\rangle
= -\langle \nabla f(x), d\rangle
\le -\frac{1}{\gamma}\|d\|^2.
\]

\textbf{Step 2: Smoothness upper bound.}
Because $\nabla f$ is $L$-Lipschitz, the standard descent lemma gives
\[
f(y)\le f(x)+\langle \nabla f(x), y-x\rangle + \frac{L}{2}\|y-x\|^2
= f(x)+\langle \nabla f(x), y-x\rangle + \frac{L}{2}\|d\|^2.
\]
Using the bound from Step 1:
\[
f(y)\le f(x) - \frac{1}{\gamma}\|d\|^2 + \frac{L}{2}\|d\|^2
= f(x) - \Big(\frac{1}{\gamma}-\frac{L}{2}\Big)\|d\|^2.
\]
Since $d=\gamma G_{C,\gamma}(x)$,
\[
\Big(\frac{1}{\gamma}-\frac{L}{2}\Big)\|d\|^2
=\Big(\frac{1}{\gamma}-\frac{L}{2}\Big)\gamma^2 \|G_{C,\gamma}(x)\|^2
=\gamma\Big(1-\frac{L\gamma}{2}\Big)\|G_{C,\gamma}(x)\|^2.
\]
This proves the claim.
\end{proof}

\begin{theorem}[Convergence to first-order stationary points]\label{thm:stationary}
Assume Assumption~\ref{ass:steps}.
Let $\{(\pi^n,g^n,V^n)\}$ be generated by Algorithm~\ref{alg:ac-merit}.
Then:

\begin{enumerate}
\item The merit values $\mathcal{L}_{\lambda}(\pi^n,g^n,V^n)$ converge to a finite limit.

\item The projected gradient mappings vanish asymptotically:
\[
\lim_{n\to\infty}\|G_{\Pi,\eta_n}(\pi^n)\| = 0,
\qquad
\lim_{n\to\infty}\|G_{\mathcal G\times\mathcal V,\alpha_n}(g^n,V^n)\|=0,
\]
where each $G$ is defined w.r.t.\ the corresponding block objective
($\mathcal{L}_{\lambda}$ with the other block held fixed at the point used in Algorithm~\ref{alg:ac-merit}).

\item Every accumulation point $(\pi^\star,g^\star,V^\star)$ of the iterates is a
\emph{(block) stationary point} of $\mathcal{L}_{\lambda}$ on $\mathcal Z$,
i.e.\ it satisfies the first-order KKT condition
\[
0\in \nabla \mathcal{L}_{\lambda}(\pi^\star,g^\star,V^\star) + N_{\mathcal Z}(\pi^\star,g^\star,V^\star),
\]
where $N_{\mathcal Z}$ is the normal cone of $\mathcal Z$.
\end{enumerate}
\end{theorem}

\begin{proof}
Let $z^n:=(\pi^n,g^n,V^n)\in\mathcal Z$.

\textbf{Step 0: Global smoothness constant and small-step regime.}
By Lemma~\ref{lem:M-smooth}, there exists $L_\lambda$ such that
$\nabla \mathcal{L}_{\lambda}$ is $L_\lambda$-Lipschitz on $\mathcal Z$.
Because $\eta_n\to 0$ and $\alpha_n\to 0$ (from square summability),
there exists $n_0$ such that for all $n\ge n_0$,
\[
\eta_n < \frac{2}{L_\lambda},
\qquad
\alpha_n < \frac{2}{L_\lambda}.
\]

\textbf{Step 1: Critic-step descent.}
Define the critic update map (with $\pi$ frozen at $\pi^n$):
\[
y^{n+\frac12} := (g^{n+\frac12},V^{n+\frac12})
= \Proj_{\mathcal G\times\mathcal V}\Big(y^n - \alpha_n \nabla_y \mathcal{L}_{\lambda}(\pi^n,y^n)\Big),
\]
where $y^n=(g^n,V^n)$.

Apply Lemma~\ref{lem:proj-descent} with
$f(y)=\mathcal{L}_{\lambda}(\pi^n,y)$, $C=\mathcal G\times\mathcal V$, and $\gamma=\alpha_n$.
For all $n\ge n_0$ we get
\begin{equation}\label{eq:critic-descent}
\mathcal{L}_{\lambda}(\pi^n,y^{n+\frac12})
\le \mathcal{L}_{\lambda}(\pi^n,y^n)
- \alpha_n\Big(1-\frac{L_\lambda\alpha_n}{2}\Big)\,
\big\|G_{\mathcal G\times\mathcal V,\alpha_n}(y^n)\big\|^2.
\end{equation}

\textbf{Step 2: Actor-step descent.}
Now define the actor update with critic frozen at $y^{n+\frac12}$:
\[
\pi^{n+1} = \Proj_{\Pi}\Big(\pi^n - \eta_n \nabla_\pi \mathcal{L}_{\lambda}(\pi^n,y^{n+\frac12})\Big).
\]
Apply Lemma~\ref{lem:proj-descent} with
$f(\pi)=\mathcal{L}_{\lambda}(\pi,y^{n+\frac12})$, $C=\Pi$, $\gamma=\eta_n$.
For all $n\ge n_0$:
\begin{equation}\label{eq:actor-descent}
\mathcal{L}_{\lambda}(\pi^{n+1},y^{n+\frac12})
\le \mathcal{L}_{\lambda}(\pi^n,y^{n+\frac12})
- \eta_n\Big(1-\frac{L_\lambda\eta_n}{2}\Big)\,
\big\|G_{\Pi,\eta_n}(\pi^n)\big\|^2.
\end{equation}
Finally set $y^{n+1}=y^{n+\frac12}$.

\textbf{Step 3: Combine and telescope.}
Combine \eqref{eq:critic-descent} and \eqref{eq:actor-descent}:
for all $n\ge n_0$,
\begin{align}
\mathcal{L}_{\lambda}(z^{n+1})
&= \mathcal{L}_{\lambda}(\pi^{n+1},y^{n+\frac12}) \nonumber\\
&\le \mathcal{L}_{\lambda}(\pi^n,y^n)
- \alpha_n\Big(1-\frac{L_\lambda\alpha_n}{2}\Big)\,
\big\|G_{\mathcal G\times\mathcal V,\alpha_n}(y^n)\big\|^2 \nonumber\\
&\qquad\qquad
- \eta_n\Big(1-\frac{L_\lambda\eta_n}{2}\Big)\,
\big\|G_{\Pi,\eta_n}(\pi^n)\big\|^2.
\label{eq:combined-descent}
\end{align}
Since $\mathcal{L}_{\lambda}\ge 0$ by definition \eqref{eq:merit},
the sequence $\{\mathcal{L}_{\lambda}(z^n)\}$ is nonincreasing for $n\ge n_0$
and bounded below by $0$. Hence it converges, proving (1).

Summing \eqref{eq:combined-descent} from $n=n_0$ to $T$ yields
\[
\sum_{n=n_0}^T \alpha_n\Big(1-\frac{L_\lambda\alpha_n}{2}\Big)\,
\big\|G_{\mathcal G\times\mathcal V,\alpha_n}(y^n)\big\|^2
+
\sum_{n=n_0}^T \eta_n\Big(1-\frac{L_\lambda\eta_n}{2}\Big)\,
\big\|G_{\Pi,\eta_n}(\pi^n)\big\|^2
\le \mathcal{L}_{\lambda}(z^{n_0}) - \mathcal{L}_{\lambda}(z^{T+1})
\le \mathcal{L}_{\lambda}(z^{n_0}).
\]
Letting $T\to\infty$ implies both sums are finite. Since
$\sum_n \eta_n=\infty$ and $\sum_n \alpha_n=\infty$, the only way for the weighted sums
to be finite is that the corresponding norm terms must converge to $0$ along the full sequence.
This proves (2).

\textbf{Step 4: Stationarity of accumulation points.}
Because $\mathcal Z$ is compact, $\{z^n\}$ has accumulation points.
Let $z^{n_j}\to z^\star$ be any convergent subsequence.
From (2) and continuity of the gradient mapping under smoothness, the limiting point
satisfies $G_{\Pi,0}(\pi^\star)=0$ and $G_{\mathcal G\times\mathcal V,0}(y^\star)=0$,
which is equivalent to the first-order normal cone inclusion
$0\in \nabla \mathcal{L}_{\lambda}(z^\star) + N_{\mathcal Z}(z^\star)$
for a projected-gradient method on a smooth function.
This is the standard KKT characterization for constrained first-order stationarity,
proving (3).
\end{proof}

\section{Proof of Section \ref{sec:connection}}
\begin{theorem} 
    (1). For a sequence of discount factors $\{\gamma_t\}\to 1$, denote a Nash Equilibrium of the  $\gamma_t$-discounted DR-MG by $\pi_t$. If $\pi_t\to\pi$, then $\pi$ is a Nash Equilibrium of the average-reward DR-MG. 

(2). For any $\epsilon$, there exists some discount factor $\gamma$, such that any $\epsilon$-Nash Equilibrium of the  $\gamma$-discounted DR-MG is also an $\mathcal{O}(\epsilon)$-Nash Equilibrium under the average reward.

(3). The discount factor $\gamma=1-\frac{\epsilon}{D}$ satisfies  {Part (2)}
\end{theorem}
\begin{proof}
    (1). By the uniform convergence of the robust discounted value function derived in \citep{wang2023robust}, we have that
    \begin{align}
        \lim_{\gamma\to 1} (1-\gamma)V^{\pi}_{\gamma,\cp,i}= g^\pi_{\cp,i}, \text{ uniformly on } \Pi.
    \end{align}
    Thus for any $\epsilon$, there exists some $\gamma_e$, such that 
    \begin{align}\label{eq:V-g}
        \|(1-\gamma)V^{\pi}_{\gamma,\cp,i}-g^\pi_{\cp,i}\|\leq \epsilon, 
    \end{align}
    for any $\pi\in\Pi$ and $\gamma>\gamma_e$. 
Moreover, there exists some constant $T$ such that for any $t>T$, 
\begin{align}\label{eq:pit}
    \gamma_t > \gamma_e, \|\pi_t-\pi\|\leq \epsilon. 
\end{align}
Since $\pi_t$ is the Nash Equilibrium of the discounted robust MG, it holds that
\begin{align}\label{eq:dis ne}
V^{\pi_t}_{\gamma_t,\cp,i} \geq V^{(\pi_t^{-i},\mu^i)}_{\gamma_t,\cp,i}, \forall i\in\mcn, \forall \mu^i. 
\end{align}
    Thus we have that
    \begin{align}
        & g^{(\pi^{-i},\mu^i)}_{\cp,i} -g^{\pi}_{\cp,i} \nn\\
        &=  -g^{\pi}_{\cp,i}+ (1-\gamma_t)V^{\pi}_{\gamma_t,\cp,i}- (1-\gamma_t)V^{\pi}_{\gamma_t,\cp,i}+(1-\gamma_t)V^{\pi_t}_{\gamma_t,\cp,i}\nn\\
        &\quad - (1-\gamma_t)V^{\pi_t}_{\gamma_t,\cp,i}+(1-\gamma_t)V^{(\pi_t^{-i},\mu^i)}_{\gamma_t,\cp,i}-(1-\gamma_t)V^{(\pi_t^{-i},\mu^i)}_{\gamma_t,\cp,i}+(1-\gamma_t)V^{(\pi^{-i},\mu^i)}_{\gamma_t,\cp,i}-(1-\gamma_t)V^{(\pi^{-i},\mu^i)}_{\gamma_t,\cp,i} + g^{(\pi^{-i},\mu^i)}_{\cp,i}.
    \end{align}
Consider any $t>T$, it first holds that due to \eqref{eq:V-g}:
 \begin{align}
     (1-\gamma_t)V^{\pi}_{\gamma_t,\cp,i}-g^{\pi}_{\cp,i} &\leq \epsilon,\\
     g^{(\pi^{-i},\mu^i)}_{\cp,i}-(1-\gamma_t)V^{(\pi^{-i},\mu^i)}_{\gamma_t,\cp,i} &\leq \epsilon.
 \end{align}
 Moreover,
 \begin{align}
   &(1-\gamma_t)V^{\pi_t}_{\gamma_t,\cp,i}-  (1-\gamma_t)V^{\pi}_{\gamma_t,\cp,i}\leq   (1-\gamma_t) (V^{\pi_t}_{\gamma_t,\cp,i}-V^{\pi}_{\gamma_t,\cp,i})  \leq \epsilon,\\
   &(1-\gamma_t)V^{(\pi^{-i},\mu^i)}_{\gamma_t,\cp,i}-(1-\gamma_t)V^{(\pi_t^{-i},\mu^i)}_{\gamma_t,\cp,i} \leq \epsilon,
 \end{align}
 which is due to the $\frac{1}{1-\gamma}$-Lipschitz of the robust discounted value function in $\pi$, and the closeness of $\pi_t$ and $\pi$ in \eqref{eq:pit}. 
 Finally, 
 \begin{align}
      - (1-\gamma_t)V^{\pi_t}_{\gamma_t,\cp,i}+(1-\gamma_t)V^{(\pi_t^{-i},\mu^i)}_{\gamma_t,\cp,i}\leq 0
 \end{align}
 due to \eqref{eq:dis ne}. 
 Combining all results hence implies that 
 \begin{align}
     g^{(\pi^{-i},\mu^i)}_{\cp,i} -g^{\pi}_{\cp,i}\leq \epsilon,
 \end{align}
 for any $i$ and $\mu^i$ uniformly. Thus let $\epsilon\to 0$ we complete the proof. 

 (2). Set $\gamma$ to be any factor that is larger than $\gamma_e$. Then 
 \begin{align}
     g^{(\pi^{-i},\mu^i)}_{\cp,i} -g^{\pi}_{\cp,i}= g^{(\pi^{-i},\mu^i)}_{\cp,i}-(1-\gamma)V^{(\pi^{-i},\mu^i)}_{\gamma,\cp,i}+(1-\gamma)V^{(\pi^{-i},\mu^i)}_{\gamma,\cp,i}-(1-\gamma)V^{\pi}_{\gamma,\cp,i}+(1-\gamma)V^{\pi}_{\gamma,\cp,i} -g^{\pi}_{\cp,i}. 
 \end{align}
 Note that for $\gamma>\gamma_e$, it holds that 
  \begin{align}
     g^{(\pi^{-i},\mu^i)}_{\cp,i} -g^{\pi}_{\cp,i}\leq 2\epsilon+(1-\gamma)V^{(\pi^{-i},\mu^i)}_{\gamma,\cp,i}-(1-\gamma)V^{\pi}_{\gamma,\cp,i}\leq 3\epsilon,
 \end{align}
 due to $\pi$ is an $\epsilon$-NE for the discounted robust MG, which hence completes the proof.

(3).    The proof is a direct corollary of Theorem \ref{thm:dis} and results from \citep{Roch2025reduction}. Specifically, it is shown in Theorem 3.4 in  \citep{Roch2025reduction} that, for any $\gamma>1-\frac{\epsilon}{D}$\footnote{The results in \citep{Roch2025reduction} is derived for $\mathcal{H}$ which is the robust optimal span. However, their results also hold for any constant that is larger than $\mathcal{H}$. Our claim thus holds by noting that $D\geq\mathcal{H}$ \citep{Roch2025reduction}.}, it holds that 
    \begin{align} 
        \|(1-\gamma)V^{\pi}_{\gamma,\cp,i}-g^\pi_{\cp,i}\|\leq \epsilon, \forall \pi, \forall i. 
    \end{align}
    Combining with Theorem \ref{thm:dis} further completes the proof. 
\end{proof}

\section{Experiments}
\subsection{Experiments for Robust TD Algorithm}
\label{sec:experiments}
In this section, we provide numerical evidence that the proposed robust TD-descent algorithm
(Algorithm~\ref{alg:ac-merit}) behaves in accordance with the theoretical
analysis.

% We consider simple, low-dimensional DR-MGs in order to (i) isolate the effect
% of the robust TD updates from approximation error, and (ii) directly visualize
% the Lyapunov behavior of the robust TD gap $\Delta_{\mathrm{R}}$ and the Nash gap.

% \subsection{Metrics}

For a joint policy $\pi$ and gains/biases $(g,V)$ as in
Definition~\ref{def:TD-gap-feasible-true}, we track the following quantities: (1). the global robust TD gap $\Delta_{\mathrm{R}}(\pi,g,V)$ 
as in Definition~\ref{def:TD-gap-feasible-true}. In the experiments,
$(g,V)$ are always taken as the robust average-reward optimal gain-bias pairs
for the current induced DR-MDPs, as we specified in \Cref{alg:ac-merit}; (2). the robust Nash gap for the policy $\pi$: $\mathrm{NashGap}(\pi) := \max_{k\in\mathcal N} \mathrm{gap}_k(\pi)$ where $\mathrm{gap}_k(\pi)
:= \max_{\nu_k}g^{\nu_k,\pi_{-k}}_{\cp,k}-g^{\pi}_{\cp,k}$. 
We aim to show that, our \Cref{alg:ac-merit} ensures both $\Delta_{\mathrm{R}}(\pi,g,V)$ and robust Nash gap converge to $0$ (note that $\mathrm{NashGap}(\pi) =0$ implies $\pi$ is a NE), validating our theoretical results.

% To explicitly validate convergence to robust Nash equilibria, we also track a
% robust Nash gap. Given a joint policy $\pi$ and its associated robust gains
% $g^{\mathrm{curr}}_k(\pi)$ (obtained by policy evaluation), we define the
% robust best-response value
% \[
% g^{\mathrm{BR}}_k(\pi_{-k})
% := \sup_{\tilde\pi_k} g_k(\tilde\pi_k,\pi_{-k}),
% \]
% where $g_k(\tilde\pi_k,\pi_{-k})$ is the robust average reward of player $k$
% under the joint policy $(\tilde\pi_k,\pi_{-k})$. In our planning setup,
% $g^{\mathrm{BR}}_k(\pi_{-k})$ is computed by running robust RVI for the
% single-agent DR-MDP $M_k(\pi_{-k})$. The per-player and global Nash gaps are
% then

% At a stationary robust Nash equilibrium, $\mathrm{NashGap}(\pi)=0$ by
% definition.

% \subsection{Two-player DR-MG with robust transitions}
% \label{subsec:exp-two-player}

We consider a two-player, two-state, two-action DR-MG. The state space
is $\mathcal S=\{0,1\}$ and each player has two actions
$\mathcal A_k=\{0,1\}$ at each state. The uncertain set over transitions is
two-state rectangular: for each $(s,a_1,a_2)$ the probability
$p(s'=1\mid s,a_1,a_2)$ lies in a fixed interval
$[p_{\min}(s,a_1,a_2),p_{\max}(s,a_1,a_2)]$, with the adversary choosing
$p$ to minimize the player's expected next-state bias. The intervals are
chosen so that the Markov chain under any stationary joint policy is
irreducible, satisfying Assumption~\ref{ass:irr}. The reward tensors $r_k(s,a_1,a_2)$ are randomly generated between $[0,1]$. 
All experiments are conducted in the planning setting: the reward functions and rectangular uncertainty sets are known, and robust Bellman equations are solved through Relative Value Iteration (RVI) in \cite{wang2023model}. 

% They are chosen so that state $1$ is on average undesirable for both players, while at state
% $0$ the two players have partially aligned but not identical preferences over
% joint actions. The precise values used in the implementation are given in the
% supplementary code.

% At each outer iteration $n=0,1,\dots$ of Algorithm~\ref{alg:robust-TD-true},
% we perform:

% \begin{enumerate}[(i)]
% \item Robust Bellman best-response for each player $k$, given the current
%       opponent policy $\pi^n_{-k}$, via relative value iteration (RVI) on
%       $M_k(\pi^n_{-k})$, yielding $(g^n_k,V^n_k)$.
% \item Computation of robust TD errors
%       $\Omega^{R,n}_k(s,a_k)
%        = \Omega^R_k(s,a_k;\pi^n_{-k}\mid g^n_k,V^n_k)$ and the
%       true gradient components
%       $\Phi^n_k(s,a_k)
%        = \partial\Delta_{\mathrm{R}}(\pi,g,V)/\partial\pi_k(a_k\mid s)$ evaluated at
%       $(\pi^n,g^n,V^n)$.
% \item Construction of the true-gradient TD direction
%       \[
%       D^n_k(a_k\mid s)
%       = \pi^n_k(a_k\mid s)\,
%         \bigl|\Omega^{R,n}_k(s,a_k)\bigr|\,
%         \mathrm{sgn}\bigl(\Phi^n_k(s,a_k)\bigr),
%       \]
%       and a projected policy update
%       $\pi^{n+1} = \Gamma(\pi^n - \eta_n D^n)$ with a fixed step size
%       $\eta_n\equiv\eta$.
% \item Robust policy evaluation for each player to obtain
%       $g^{\mathrm{curr}}_k(\pi^{n})$ and the corresponding Nash gap.
% \end{enumerate}

We then implement our \Cref{alg:ac-merit}. In the experiments, we use $600$ outer iterations. Figure~\ref{fig:td1} aggregates the results over $N_{\text{games}}=30$
independent random games. The left panel shows
the global robust TD gap $\Delta_{\mathrm{R}}(\pi^n,g^n,V^n)$ as a function of iteration, and the right panel shows the
robust Nash gap $\mathrm{NashGap}(\pi^n)$. The envelopes represent  the standard deviations of these $30$ runs. 
% In each run, we draw reward tensors
% $r_k(s,a_1,a_2)$. For each game instance we run
% Algorithm~\ref{alg:robust-TD-true} with the same hyperparameters
% (number of iterations, RVI, step size).  
\begin{figure}[!htb]
    \centering
\includegraphics[width=0.9\linewidth]{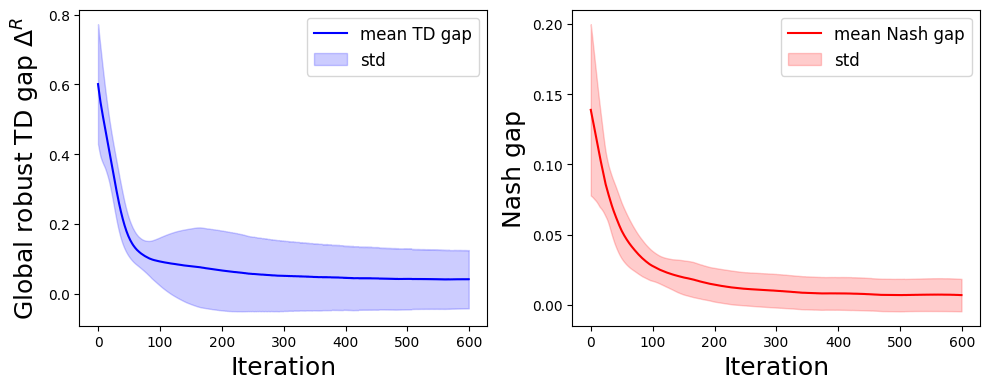}
    \caption{Robust TD }
    \label{fig:td1}
\end{figure}

Across all random instances, we note that both $\Delta_{\mathrm{R}}$
and $\mathrm{NashGap}$ decrease steadily with the number of iterations, with
the mean curves decreasing and approaching to zero. These results
are then consistent with our theoretical results in \Cref{thm:7}, validating our TD-based algorithm and verifying our results.

\subsection{Additional Experiments on Average Reward DR-MGs}\label{sec:Nash exp}
In this section, we develop numerical experiments to validate the advantages of our formulation and methods. We mainly consider two aspects: the long-term performance and the robustness to environment uncertainty. 
\subsubsection{Long-Term Performance}
To empirically validate our approach, we perform a series of experiments designed to compare the performance of our robust Nash-Iteration algorithm with average reward against the standard discounted-reward robust Nash-Iteration \citep{zhang2020robust,shi2024sample} under DR-MGs. We aim to show that our algorithm will yield a joint policy that has better long-term performance than those derived from the discounted criterion.

We design a `Structured Random Environment' to create a complex strategic challenge with meaningful trade-offs between short-term gains and long-term consequences. The environment is a DR-MG with 20 states and 5 actions. The state space $S$ is partitioned into two clusters:
\begin{itemize}
    \item \textbf{Prosperous States ($|S_p|=5$):} A small set of states characterized by a higher mean reward (drawn from $\mathsf{N}(2, 1)$).
    \item \textbf{Deprived States ($|S_d|=15$):} A larger set of states with a lower mean reward (drawn from $\mathsf{N}(-2, 1)$),
\end{itemize}
where $\mathsf{N}(\cdot,\cdot)$ is the  \textit{Gaussian} distribution. The nominal transition kernel $\kp_0$ is also structured to create persistence within these clusters. Transitions originating from a state in a given cluster are five times more likely to terminate in a state within the same cluster. This design creates a strategic trap: an agent in a prosperous state might be tempted by an action with a high immediate reward that carries a significant risk of transitioning it to the deprived cluster, from which escape is difficult.

We further set the uncertainty set using a Kullback-Leibler (KL) divergence ball around the nominal kernel $\kp_0$:
\begin{equation}
\mathcal{P}^a_s = \{ q\in\Delta(\mcs) \ | \ D_{KL}(q ||  (\kp_0)^a_s) \leq \theta \}.\nonumber
\end{equation}
Based on the scale of our environment, the uncertainty budget $\theta$ is set to a modest value of 0.01.

We compare our average-reward Robust Nash-Iteration algorithm (Algorithm \ref{algo: rni}) and the {Discounted-Reward Robust Nash-Iteration. We implement the two algorithms and evaluate their output policies under the robust average-reward criterion. The experimental procedure is as follows:

 (1). The optimal average-reward policy, $\pi_{avg}$, is computed using the average-reward Robust Nash-Iteration algorithm. We evaluate the worst-case average reward of Player 1 under the policy, using the robust average reward single-agent RL method in \citep{wang2023robust}. 

(2). For a range of discount factors $\gamma$ from 0.5 to 0.99, the corresponding optimal discounted policy, $\pi_{disc}(\gamma)$, is computed using the Discounted-Reward Robust Nash-Iteration algorithm. And for each $\pi_{disc}(\gamma)$, we similarly evaluate its worst-case long-term performance.

For both algorithms, we plot the performance of Player 1 as a performance metric. 
Results are presented in Figure \ref{fig:results} (under different random environments). Policies learned under the discounted criterion consistently underperform the dedicated average-reward policy. This gap is most pronounced for smaller values of $\gamma$, where the myopic nature of the discounted algorithm leads it to fall into the strategic trap of the environment by choosing actions that yield high immediate rewards but lead to the persistent, low-reward "deprived" cluster of states. As $\gamma$ approaches 1, the discounted policy becomes more far-sighted, and its performance converges toward the optimal baseline.

Our results hence provide strong evidence that, for DRMGs under the average reward criterion with long-term performance optimization goals, our robust Nash-Iteration algorithm achieves a better performance, compared to the common practice of approximating this objective with a discounted formulation using $\gamma \approx 1$. 

\begin{figure}[!htb]
    \centering
        \includegraphics[width=0.48\linewidth]{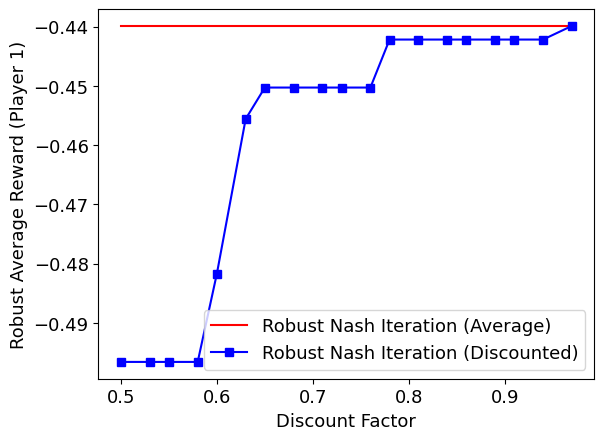}
        \includegraphics[width=0.48\linewidth]{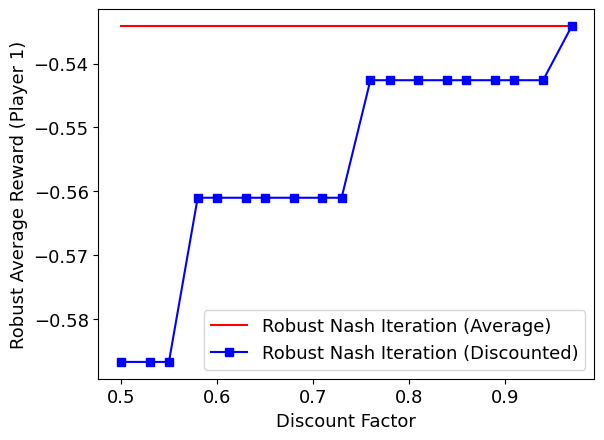}
    \caption{Performance Comparisons of Average and Discounted Robust Nash-Iteration}
    \label{fig:results}
\end{figure}
\begin{figure}[!htb]
    \centering
    \includegraphics[width=0.48\linewidth]{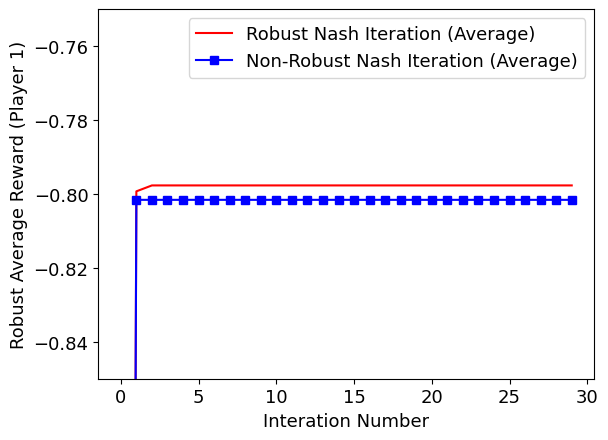}
     \includegraphics[width=0.48\linewidth]{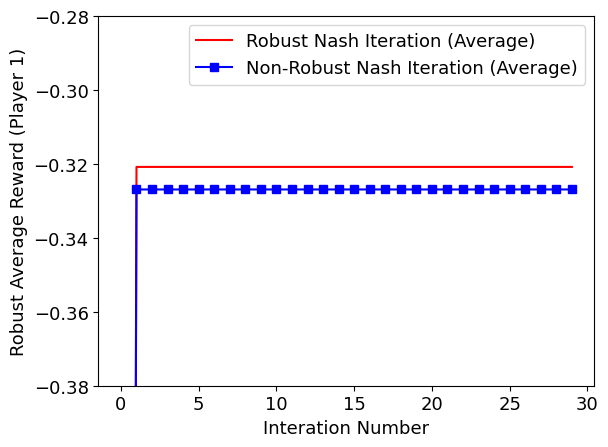}
    \caption{Performance Comparisons of Robust/Non-Robust Nash-Iteration}
    \label{fig:rob}
\end{figure}

\subsubsection{Robustness}
We then test the robustness of our method by comparing it with the non-robust average-reward Nash-Iteration algorithm \citep{li2003learning}, to test their ability to handle model uncertainty. Using the same environment, we evaluate both learners against a worst-case adversary throughout training to measure their inherent robustness. The results are presented in Figure \ref{fig:rob}. The policy learned by the robust algorithm consistently and monotonically improves its worst-case performance, rapidly converging to a high, stable value. This is because the algorithm directly optimizes for resilience against the adversary at each step. In contrast, the policy learned by the non-robust algorithm exhibits lower performance when evaluated against a worst-case adversary. Ultimately, it does not have the ability to reliably account for perturbations in the underlying environment, as opposed to its robust counterpart, which aligns with our theoretical results. 

These numerical results therefore verify that our method achieves \textbf{superior long-term performance} with \textbf{increased stability} in the presence of model uncertainty in multi-agent systems, validating our theoretical results.

\end{document}